\newtheorem{example}{Example}[section]
\newtheorem{definition}[example]{Definition}
\newtheorem{proposition}[example]{Proposition}
\newtheorem{theorem}[example]{Theorem}
\newtheorem{lemma}[example]{Lemma}
\renewcommand{\phi}{\varphi}%
\newcommand{\struct}{\mathfrak{S}}
\newcommand{\vocab}{\sigma}
\newcommand{\defn}[1]{\textit{#1}}
\newcommand{\Sem}[1]{\ensuremath{[\![#1]\!]}}
\newcommand{\Aut}{\ensuremath{\mathcal{A}}} %Automata
\newcommand{\transrel}{\ensuremath{\Delta}} %Automata Transition Relation
\newcommand{\controls}{\ensuremath{Q}} %Automata Transition Relation
\newcommand{\params}{\mathcal{X}}
\newcommand{\locals}{\mathcal{Y}}
\newcommand{\nonterm}{\mathcal{N}}
\newcommand{\Grammar}{\ensuremath{\mathcal{G}}} %Grammar
\newcommand{\curr}{{\textit{curr}}}
\newcommand{\stktop}{{\textit{top}}}
\newcommand{\OMIT}[1]{}
\newcommand{\ialphabet}{\Sigma}
\newcommand{\theory}{T}
\newcommand{\Constants}{\mathcal{C}}
\newcommand{\Variables}{\mathcal{V}}
\newcommand{\iparam}{X}
\newcommand{\ivar}{Y}
\newcommand{\outfmla}{\Phi}
\newcommand\mathexpl[2]{\left[\text{\begin{minipage}{#1}{\it\small #2}\end{minipage}}\right]}
\newcommand\defeq{:=}
\newcommand\synalt{\ |\ }
\newcommand\svar{x}
\newcommand\varsvar{y}
\newcommand\sexp{s}
\newcommand\word{w}
\newcommand\iexp{i}
\newcommand\regex{r}
\newcommand\aut{\Aut}
\newcommand\cha{a}
\newcommand\chb{b}
\newcommand\preds{\Psi}
\newcommand\pred{\psi}
\newcommand\tpred{p}
\newcommand\idxi{i}
\newcommand\idxj{j}
\newcommand\scount{c}
\newcommand\chacount{k}
\newcommand\rrange{\mathit{re.in\_range}}
\newcommand\sconcat{\mathit{str.\mathord{+}\mathord{+}}}
\newcommand\scontains{\mathit{str.contains}}
\newcommand\sinre{\mathit{str.in\_re}}
\newcommand\slen{\mathit{str.len}}
\newcommand\sprefixof{\mathit{str.prefixof}}
\newcommand\sreplace{\mathit{str.replace}}
\newcommand\ssubstr{\mathit{str.substr}}
\newcommand\ssuffixof{\mathit{str.suffixof}}
\newcommand\AbstractSEXP{\mathrm{AbstractSEXP}}
\newcommand\Assert{\mathrm{Assert}}
\newcommand\anyparikh{\varphi_\top}
\newcommand\labels{\mathrm{Labels}}
\newcommand\translabelled{\mathrm{TransLabelled}}
\newcommand\atran{t}
\newcommand\countof[2]{|#2|_{#1}}
\newcommand\run{\rho}
\newcommand\bucket{B}
\newcommand\ComputeBuckets{\mathrm{ComputeBuckets}}
\newcommand\bucketlist{\mathrm{BList}}
\newcommand\cpp{C$\mathord{+}\mathord{+}$}
\definecolor{light-gray}{gray}{0.9}
\definecolor{light-yellow}{RGB}{255, 255, 220}
\definecolor{light-gray}{gray}{0.9}
\definecolor{light-yellow}{RGB}{255, 255, 220}
\newcommand\popleasychair[2]{#2}
\title{Parikh's Theorem Made Symbolic\footnote{
    An Errata to the POPL 2024 version of this paper is provided in Section~\ref{sec:errata}.
}}
\author{
    Matthew Hague\inst{1}
    \and
    Artur Je\.z\inst{2}
    \and
    Anthony W.\ Lin\inst{3}
}
\institute{
    Royal Holloway University of London \\
    \email{matthew.hague@rhul.ac.uk}
    \and
    University of Wroclaw \\
    \email{aje@cs.uni.wroc.pl}
    \and
    University of Kaiserslautern and Max-Planck Institute \\
    \email{awlin@mpi-sws.org}
}
\authorrunning{M.\ Hague, A.\ Je\.z, A.\ W.\ Lin}
\titlerunning{Parikh's Theorem Made Symbolic}
\begin{document}

\maketitle

\begin{abstract}
Parikh's Theorem is a fundamental result in automata theory with numerous
applications in computer science. These include software verification (e.g.
    infinite-state
verification, string constraints, and theory of arrays), verification of
cryptographic protocols (e.g. using Horn clauses modulo equational theories)
and database querying (e.g. evaluating path-queries in graph databases), among
others. Parikh's Theorem states that the letter-counting abstraction of
a language recognized by finite automata or context-free grammars is definable
in Linear Integer Arithmetic (a.k.a. Presburger Arithmetic). In fact, there is a
linear-time algorithm computing existential Presburger formulas capturing
such abstractions, which enables an efficient analysis via SMT-solvers.
Unfortunately, real-world applications typically require large alphabets (e.g.
Unicode, containing a~million of characters) --- which are well-known to be not
amenable to explicit treatment of the alphabets --- or even worse infinite
alphabets.

Symbolic automata have proven in the last decade to be an effective algorithmic
framework for handling large finite or even infinite alphabets.
A symbolic automaton employs an effective boolean algebra, which offers a
symbolic representation of character sets (i.e. in terms of predicates) and
often lends itself to an exponentially more succinct representation of
a language. Instead of letter-counting, Parikh's Theorem for symbolic automata
amounts to counting the number of times different predicates are satisfied by
an input sequence. Unfortunately, naively applying Parikh's Theorem from
classical automata theory to symbolic automata yields existential
Presburger formulas of exponential size.
In this paper, we provide a new construction for Parikh's Theorem for
symbolic automata and grammars, which avoids this exponential blowup: our
algorithm computes an existential formula in polynomial-time over (quantifier-free) Presburger and the base theory. In fact,
our algorithm extends to the model of parametric symbolic grammars, which are
one of the most expressive models of languages over infinite alphabets.
We have implemented our algorithm and show it can be used to solve string
constraints that are difficult to solve by existing solvers.
%, and hopefull a more efficient
%algorithmic treatment.
\end{abstract}

\section{Introduction}
\label{sec:intro}

Parikh's Theorem \cite{parikh} (see also \cite[Chapter H]{kozen-book}) is a
celebrated result in automata theory with far-reaching
applications in computer science. These include software verification
\cite{EG11,HL11,HL12}, decision procedures for array and string theories
\cite{DHK16,LB16,ostrich-int,JT19,chain-free}, and
evaluation and optimization of database queries \cite{tods12,DLT12}, among others. Parikh's Theorem
concerns the so-called \emph{letter-counting abstractions} of strings and
languages. For example, the Parikh
image of the string $abaacb$ is the mapping $f: \{a,b,c\} \to \N$, where $f(a) =
3$, $f(b) = 2$, and $f(c) = 1$.
In other words, the Parikh mapping abstracts away the
ordering from a string (resp.\ a set of strings), i.e.\ yielding a multiset
(resp.\ a set of multisets). Parikh's Theorem states
that the class of context-free languages and the class of regular languages
coincide modulo a Parikh mapping, both of which are moreover expressible as a
formula in
Linear Integer Arithmetic (a.k.a.\ Presburger Arithmetic). This is illustrated
in the following example.
\begin{example}
The Parikh
image of the regular language $L := (ab)^*$ is the set $S$ containing all
mappings $f:
\{a,b\} \to \N$ with $f(a) = f(b)$. Observe that $S$ is also the Parikh image of
the context-free language $\{ a^n b^n : n \geq 0 \}$.
The Parikh image of $L$ can be expressed as $x_a = x_b
\wedge x_a \geq 0$, where $x_a$ (resp. $x_b$) represents the count for the
letter $a$ (resp. $b$).
\end{example}

Although the classical formulation of Parikh's Theorem concerns mainly the
expressiveness of language models modulo taking Parikh images, its usefulness
in applications was enabled only decades later by the development of efficient
algorithms that compute an existential LIA formula (i.e.\ of the
form $\exists \bar x\varphi$, where $\varphi$ is quantifier-free) from a given
automaton/grammar, enabling the exploitation of highly optimized SMT-solvers.
In fact, building on the result by Esparza \cite{Esparza97}, Verma et al.
\cite{VermaSS05} develops a linear-time algorithm that computes an existential
LIA formula capturing the Parikh image of a given grammar. These results enabled
the exploitation of Parikh's Theorem in many applications. Among others, these
include
verification of multithreaded programs with counters and possibly with
(recursive) function calls \cite{HL11,HL12,To09}, verification of
concurrent and multithreaded programs \cite{EG11,HL12}, verification of
cryptographic protocols \cite{VermaSS05}, decision procedures for array
theories \cite{DHK16}, decision procedures for string constraints
\cite{LB16,ostrich-int,JT19,chain-free}, query evaluation over graph databases
\cite{tods12}, and reasoning over XML documents \cite{DLT12}. The following
two examples illustrate two simple applications of Parikh's Theorem for
difficult problems.

\begin{example}
    The problem of checking emptiness of the intersection of several
    context-free languages has immediate applications in static analysis of
    concurrent programs (e.g.\ see \cite{BET03}). However, since the problem of
    checking emptiness of two context-free languages is well-known to be
    undecidable (e.g.\ see \cite{kozen-book}), multiple incomplete methods are
    proposed, which include Parikh abstractions \cite{BET03} and synthesis of
    regular separators/overapproximations
    \cite{covenant,covenant-journal,lcegar}, among others. Take the
    two context-free languages used in the benchmark of \cite{covenant-journal}:
    \[
        L_1 := \{ a^n c a^n : n > 0 \} \qquad L_2 := \{ a^n c b^n : n > 0 \}
    \]
    That $L_1 \cap L_2 = \emptyset$ can be shown rather easily by considering
    the Parikh images $S_1$ and $S_2$ of, respectively, $L_1$ and $L_2$. In
    fact, $S_1$ contains precisely all multisets $f$ with $c \mapsto 1$,
    $b \mapsto 0$, and $a \mapsto i$, where $i$ is positive even number. On the
    other hand, $S_2$ contains precisely all multisets $g$ with
    $c \mapsto 1$, $a,b \mapsto i$, where $i$ is a positive number. Thus,
    it follows that $S_1 \cap S_2 = \emptyset$. By employing the linear-time
    algorithm \cite{VermaSS05} for Parikh images of context-free grammars and
    SMT-solvers, that $S_1 \cap S_2 = \emptyset$ can be easily verified.
\end{example}

\begin{example}
    \label{ex:we}
    String constraint solving is an area that has received a lot of attention,
    owing to their applications in the symbolic execution programs, for example in JavaScript
    \cite{Saxena-JS,expose,popl22,aratha}.
    In this example, we deal with the simple string constraint
    \[
        \varphi\ ::=\ zyx = xxz \wedge x \in a^* \wedge y \in a^+b^+ \wedge z \in b^*,
    \]
    which the authors of a recent paper \cite{noodle} have found to lead to
    failure for all string solvers that they have tried. We want to find a solution
    (i.e.\ mapping from the string variables $x,y,z$ to strings over the alphabet
    $\ialphabet = \{a,b\}$) satisfying all the restrictions. The first
    restriction is an equation $zyx = xxz$, which enforces the two different
    concatenations of the strings instantiating the variables to be equal.
    For example, the mapping $\lambda$, where $\lambda: z \mapsto aa$ and
    $\lambda: y,x \mapsto a$, satisfies this restriction; whereas the mapping
    $\lambda$, where $\lambda: z \mapsto a$ and $\lambda: y,x \mapsto b$, does
    not. Each of the other restrictions is a \emph{regular constraint}, which
    enforces a solution of a variable to satisfy certain regular patterns. For
    example, $x \in a^*$ enforces that the variable $x$ should be instantiated
    to be a string consisting only the letter $a$.

    By using letter-counting abstraction, we can easily show the above example
    to be unsatisfiable. For each $l \in \ialphabet$, let $|x|_l$ denote the
    number of times $l$ appearing %\sideartur{appearing -> appears?}
     in $x$. The letter-counting abstraction of the
    equation is the following quantifier-free LIA formula:
    $\bigwedge_{l \in \ialphabet} |x|_l + |y|_l + |z|_l = 2|x|_l + |z|_l$,
    which can be simplified to $\bigwedge_{l \in \ialphabet} |y|_l = |x|_l$.
    The letter-counting abstraction of the regular constraint $x \in a^*$
    (resp. $z \in b^*$) is $|x|_a \geq 0 \wedge |x|_b = 0$
    (resp. $|z|_a = 0 \wedge |z|_b \geq 0$). Finally, the letter-counting
    abstraction of the regular constraint $y \in a^+b^+$ is
    $|y|_a > 0 \wedge |y|_b > 0$. Therefore, the letter-counting abstraction of
    $\varphi$ is the quantifier-free LIA formula
    \[
        \left( \bigwedge_{l \in \ialphabet} |y|_l = |x|_l  \right) \wedge
        |x|_a \geq 0 \wedge |x|_b = 0 \wedge
        |y|_a > 0 \wedge |y|_b > 0 \wedge
        |z|_a = 0 \wedge |z|_b \geq 0.
    \]
    This formula is easily seen to be unsatisfiable since it asserts that
    $0 = |x|_b = |y|_b > 0$. Furthermore, this can be easily checked by
    virtually all existing SMT-solvers which support LIA (e.g.\ Z3 \cite{Z3}).
    %a string equation $zyx = xxz$, which
\end{example}

Despite the usefulness of Parikh's Theorem, most %\sideartur{most -> may?}
real-world applications
require either large finite or even worse infinite alphabets, which renders
the classical Parikh's Theorem impractical. For example,
a regex $r$ over UTF-16 %(i.e.\ the alphabet of 16-bit
%vectors)
--- e.g.\ \texttt{([\string^\textbackslash x00-\textbackslash x7F][\string^\textbackslash
x00-\textbackslash x7F])+}, which accepts non-empty
strings over non-ASCII characters of even length --- has a total of $2^{16}$
characters. A direct application of the linear algorithm from Verma et al.
\cite{VermaSS05} would yield a LIA formula with at least $2^{16}$ variables,
each keeping track of the count for each letter in UTF-16.

\paragraph{Symbolic automata framework.} The framework of symbolic automata
\cite{DV21,symbolic-power,veanes12} (a.k.a.\ automata modulo theories) has
proven in the last decade to be a
fruitful approach for handling large finite or even infinite alphabets.
The key to the framework is the symbolic representation of alphabets known as
\emph{effective boolean algebras}. Loosely speaking, an effective boolean
algebra is a domain $D$ with a class of monadic predicates (i.e.\ each has
an interpretation as a subset of $D$) that is closed under boolean operations
(set-union, set-intersections, and set-complementation). The term ``effective''
refers to the fact that each monadic predicate $P$ describes a \emph{syntactic} property
(e.g.\ a~character class in Unicode, or a LIA formula with
one free variable), and that checking whether the interpretation $\Sem{P}
\subseteq D$ is empty is decidable. Many examples of effective boolean algebras
are available including, notably, SMT algebras. For example, a LIA boolean 
algebra consists
of domain $D = \Z$ and monadic predicates of LIA (with existential quantifiers
allowed), e.g., $P := x\equiv 0 \pmod{2}$.
The syntactic representation of predicates
$P$ and decidability of checking emptiness can be taken advantage of by allowing
an automaton transition of the form $p \to_P q$, where $p, q$ are two automata
states, representing all (potentially infinitely many) transitions of the
form $p \to_a q$, where $a$ satisfies $P$. Symbolic automata extend normal automata by
allowing such transitions. An analogous representation of symbolic automata
in terms of symbolic (regular) expressions \cite{DV21,SVB21} is also possible, 
where predicates are allowed instead of concrete letters, e.g.\ the expression 
$P^+$ represents the sequences of strings of odd numbers, whenever
$P := x \equiv 1 \pmod{2}$.

Most analysis of symbolic automata is known to be reducible to the case of
normal automata, but with an exponential blow-up in the alphabet size
and the number of transitions \cite{symbolic-power,veanes12}.
Although in most cases such an exponential blow-up is unavoidable in the worst
case, clever algorithms that circumvent this exponential blow-up in practice
have been devised on basic automata operations (e.g.\ boolean operations,
transductions, learning, etc.). This takes us to the question of Parikh's
Theorem in the setting of symbolic automata, which has so far not received much
attention in the literature of symbolic automata.

A natural counterpart of letter-counting abstractions in the framework of
symbolic automata is \emph{predicate-counting abstractions}. Let us revisit
Example \ref{ex:we} but with the letters $a$ and $b$ instantiated with different
character classes. Let us start with $a := \texttt{\textbackslash d}$ (meaning a
digit) and $b := \texttt{\textbackslash D}$ (meaning a non-digit). In this case,
the predicate-counting abstraction with respect to $a$ and $b$ simply counts
the
numbers of occurrences of digits and non-digits in each string instantiations
of $x,y,z$. The same reasoning used in Example \ref{ex:we} will allow us to
prove unsatisfiability. On the other hand, consider $a := \texttt{\textbackslash
s}$ (including space symbols, tabs, and newlines) and $b :=
\texttt{.}$ (meaning any character, except for a newline). Then, $a$ (resp. the
complement $\bar a$) and $b$ (resp.\ $\bar b$) have non-empty intersections.
(More precisely, $a \cap b$ contains a space symbol, $a \cap \bar b$ contains
a newline character, and $\bar a \cap b$ contains (say) a digit.)
In general,
$n$ predicates in a symbolic automaton (equivalently, symbolic regular
expression) can give rise to $O(2^n)$ different ``combinations'' (a.k.a.\
\emph{min-terms}). In other words,
predicate-counting abstractions over symbolic automata can be reduced to
letter-counting abstractions of normal automata, but \emph{over an exponentially
bigger alphabet}. Thus, the linear-time construction of Verma et al. for
Parikh images of automata/grammars with (say) 14 predicates would yield already
a large LIA formula with more than 15000 integer variables, which is very
challenging to solve for existing SMT-solvers.

\paragraph{Contributions.} Our main result is the first polynomial-time 
algorithm for computing an existential formula that captures the
predicate-counting abstraction of a given symbolic automaton. In fact, the
algorithm extends to more expressive formalisms, namely, symbolic context-free
grammars even when they are extended with ``read-only registers''; this is a
model referred to as \emph{parametric symbolic grammars}, which extend
both symbolic automata \cite{DV21,symbolic-power}, symbolic visibly pushdown 
automata \cite{DA14}, and symbolic variable/parametric automata 
\cite{GKS10,FK20,FJL22}. This new formalism has further applications including 
solving complex string constraints, e.g., with context-free constraints and, to
some extent, the infamously difficult operator \texttt{to\_re}, which converts
strings to regular expressions.
%only
%in string constraint solving, but also in other domains including XML validation
%and 
We have provided an implementation of our algorithm 
and demonstrated its efficacy in solving some difficult string constraints 
examples. We detail these contributions below.

As described above, the main technical difficulty of our problem is that the 
standard reduction from symbolic automata $\Aut$ to normal automata $\Aut'$ 
yields an exponential-sized 
alphabet, i.e.\ $2^n$ when counting $n$ predicates over theory $\theory$. This is in general not
avoidable, e.g., symbolic regular expressions of the form $P_1P_2\cdots P_n$ 
over the LIA algebra, where $P_i$ represents the set of all 
numbers that are congruent to 0 modulo the $i$th prime, have $O(2^n)$
feasible min-terms.
%. The set of all min-terms
%is 
It turns out that, when considering predicate-counting abstractions,
if $w$ is accepted by $\Aut$, there is $w'$ that is also accepted by $\Aut$,
the predicate-counting abstractions of $w$ and $w'$ are the same
and $w'$ uses $O((n + |\nonterm|) \log (n + |\nonterm|))$ different letters.
Notice that this is an almost linear bound.
In the case when $\Aut$ is a parametric symbolic grammar, the 
size of the alphabet is $O((n + |\nonterm|) \log \ell (n + |\nonterm|)) $, where $\ell$
represents the maximum length of the right-hand side of a production in $\Aut$.
Furthermore, we show that we can compute an existential 
$\theory$+LIA formula $\varphi_{\Aut}$ that captures this predicate-counting 
abstractions. The formula $\varphi_{\Aut}$ can be solved easily in the standard
SMT framework of DPLL($\theory$, LIA) (e.g.\ \cite{KS08}), which uses LIA and
$\theory$ solvers separately to add blocking lemmas. It follows immediately that
we obtain decision procedures for analyzing satisfiability of 
predicate-counting abstractions (possibly restricted with additional LIA
formulas) with a tight complexity upper bound: if $T$ is \np-complete (resp.
\pspace-complete), then our problem is also \np-complete (resp.
\pspace-complete). 

Since string constraints are defined over the Unicode alphabet\footnote{See 
the SMT-LIB 2.6 specification 
\url{https://smtlib.cs.uiowa.edu/theories-UnicodeStrings.shtml}},
one natural application of our result is in checking unsatisfiability of string
constraints. By means of predicate-counting abstractions, we show how string 
constraints can be abstracted into the Parikh image of a symbolic grammar
with an additional LIA restriction. Here, we allow an expressive and well-known
subclass of
string constraints (in particular, commonly used subclass of QF\_SLIA theory of 
SMT-LIB 2.6 \cite{SMTLIB2}), which permits string
concatenation, string equations, replace, regular constraints, contains, 
prefix-of, and suffix-of. Note that unsatisfiability of the latter
implies unsatisfiability of the original string constraint, but not the
converse;
we are not aware of any non-trivial class for which the converse implication would hold,
a trivial one is over a unary alphabet.
At the same time our result admits an easy extension to sequence theories,
which permit general effective boolean algebras (e.g.\ see \cite{JLMR23}).

We have implemented this translation, which
takes an SMT-LIB file and produces a quantifier-free LIA formula, which can be
easily checked using SMT-solvers. Our experimental results show that our
procedure can substantially outperform existing string solvers for proving
unsatisfiability (details are in Figure \ref{fig:results}). 

Finally, as mentioned above, our paper establishes Parikh's Theorem for 
generalizations of symbolic automata, i.e., parametric (symbolic) grammars and 
parametric (symbolic) pushdown automata. Such formalisms are highly expressive,
e.g., can express Dyck languages with \emph{infinitely} many parenthesis 
symbols. This has 
many potential applications. The first application is the support of symbolic
context-free constraints, 
i.e., an expression of the form $x \in L$, where $L$ is given by a parametric
grammar. In fact, classical results on string analysis (e.g. 
\cite{CMS03,Min05}) heavily use context-free constraints, which are not
supported by SMT-LIB 2.6, but are supported by a handful of modern string 
solvers (e.g. TRAU \cite{TRAU}). The second application is a partial support 
of a ``future-looking'' feature in SMT-LIB 2.6: \texttt{to\_re}, which converts a
string (possibly with variable names) to regular expressions. 
%to \texttt{to\_re}.
This is highly
expressive, e.g., it allows encoding word equations with Kleene star
like $xy = z^*$. 
Existing benchmarks allow only a very limited usage of \texttt{to\_re}: strings 
with only constants (i.e. no variables) as input. Using parametric grammars, we 
can encode some interesting use cases of \texttt{to\_re}
beyond only string with constants, e.g., we can encode parametric regular 
constraints of the form $x \in y^*$, where both $x$ and $y$ are variables. 
Finally, parametric pushdown automata strictly extend the model of symbolic 
visibly pushdown automata \cite{DA14}, which has applications to dynamic 
analysis of programs. By allowing parameters and pushing values onto stack, 
our model allows some support of static analysis as well (of course with 
restrictions, for otherwise decidability would result).

\OMIT{
Finally, we mention also further applications of our results, including the 
validation of XML documents, LTL model checking, and decision procedures over 
sequences. 
}
%This yields both a tight complexity upper bound, as well as a method for
%analyzing predicate-counting abstractions 

%As a result, this would yield an
%existential LIA formula for the predicate-counting abstraction of $\Aut$ of an
%exponential size.

\paragraph{Organization.}
%Section \ref{sec:motivation} starts with several motivating examples.
We fix our notation and introduce our notion of parametric context-free grammar in Section
\ref{sec:model}. Also in this section, we show that these grammars admit a representation as parametric pushdown automata. In Section~\ref{sec:parikh} we prove our new
Parikh's Theorem for parametric symbolic grammars. In Section~\ref{sec:string}, we
provide an abstraction of string constraint solving
via predicate-counting constraints, and outline an extension to sequence
theories. We describe our implementation and report
our experimental results in Section~\ref{sec:experiments}.
%In Section \ref{sec:apps} we describe other potential applications of our work.
We conclude the paper in Section~\ref{sec:conc} with related work and future work.
%, which can in turn be applied to the analysis of
%recursive programs \anthony{of what sort?}.

%has a symbolic
%representation and that checking whether $

%a predicate $P \subseteq D$

%symbolically as a symbolic
%alphabet

%Letter-counting abstractions have been shown to be extremely useful
%abstractions.

\section{Models}
\label{sec:model}

We start by introducing some basic notation for quantifier-free theories.
Then we introduce parametric context-free grammars and finally an equivalent model of pushdown systems.

\subsection{Preliminaries}

For simplicity, we follow a model-theoretic approach to define our symbolic
alphabets. This allows a more convenient treatment of \emph{parameters}
in our automata (e.g. see \cite{JLMR23}).
%will follow a
%model-theoretic approach.
Let $\vocab$ be a set of vocabulary symbols. We fix a
$\vocab$-structure $\struct = (D; I)$, where $D$ can be a finite or an infinite
set (i.e.\ the universe) and $I$
maps each function/relation symbol in $\vocab$ to a function/relation over
$D$. The elements of our sequences will range over $D$.
We assume that the quantifier-free theory $T_{\struct}$ over $\struct$ (including
equality) is decidable. Examples of such $T_{\struct}$ are abound from
SMT, e.g., Linear Real Arithmetic and Linear Integer Arithmetic.
We write $T$ instead of $T_{\struct}$, when $\struct$ is clear.
Our quantifier-free formula will use \emph{uninterpreted
$T$-constants} (which will represent the parameters)
%\footnote{These could also equivalently be construed as
%variables. We reserve the term ``variable'' to mean a sequence variable.}
$a,b,c,\ldots$, and local variables $x,y,z,\ldots$.
%[The distinction between uninterpreted constants and variables is made only
%for the purpose of presentation of sequence constraints, as will be clear
%shortly.]
%uninterpreted $T$-constants (i.e., they will get assigned $T$-elements).
We use $\Constants$ to denote the set of all uninterpreted $T$-constants and $\Variables$ to denote the set of all local variables.

We write $\phi(\params, \locals)$ for a formula that is a Boolean combination of terms constructed from functions/relations in $\vocab$, uninterpreted constants $\params \subseteq \Constants$, and local variables $\locals \subseteq \Variables$.
We say such a formula is a formula of $T$.
An existential formula of $T$ is of the form $\exists x_1, \ldots, x_n\ .\ \phi$,
where $\phi$ is a quantifier-free formula of $T$.
An \emph{interpretation} or \emph{assignment} of the constants (resp.\
variables) is a map
$\Constants \rightarrow D$ (resp.\ $\Variables \rightarrow D$).
We write $\phi(\iparam, \ivar)$ for the formula under these interpretations.
We write $T \models \phi(\iparam, \ivar)$ if $\phi$ is true in $\struct$ under interpretations $\iparam, \ivar$.
A formula $\varphi$ is satisfiable if there are interpretations $\iparam, \ivar$ such that the formula becomes true in $\struct$.

We write $T_1 + T_2$ for the Boolean combination of theories $T_1$ and $T_2$.
That is, quantifier-free Boolean combinations of formulas $\phi$ where $\phi$ is a quantifier-free formula either of $T_1$ or of $T_2$.
When $T_1$ and $T_2$ satisfy certain conditions, decision procedures for $T_1$ and $T_2$ can be combined, e.g.\ with Nelson-Oppen~\cite{NO79}.

%Next, we define how we lift $T$ to sequence constraints, using $T$ as the
%\emph{alphabet theory} (a.k.a. \emph{element theory}). As in the case of
%strings (over a finite alphabet), we use standard notation like $D^*$ to refer
%to the set of all sequences over $D$. By default, elements of $D^*$ are written
%as standard in mathematics, e.g., $7,8,100$, when $D = \Z$. Sometimes we will
%disambiguate them by using brackets, e.g., $(7,8,100)$ or $[7,8,100]$.
%We will use the symbol $s$ (with/without
%subscript) to refer to concrete sequences (i.e., a member of $D^*$). We will use
%$\Xseq,\Yseq,\Zseq$ to refer to $T$-sequence variables.
%Let $\Variables$ denote the set of all $T$-sequence variables, and
%$\Gamma := \Constants \cup D$.

\subsection{Parametric Context-Free Grammars}

We introduce the notion of \defn{parametric context-free grammars},
which generalize context-free grammars to infinite alphabets
in a similar way as parametric automata~\cite{FK20,FJL22,FL22}
generalize finite automata;
note that parametric automata in turn
generalize both symbolic automata \cite{DV21,symbolic-power} and
variable automata \cite{GKS10} and are
expressible incomparable %\sideartur{and its expressibility is incomparable}
 to symbolic register automata \cite{SRA} (i.e. neither
subsumes the other).
%Note that the parametric automata are not, in general, closed under complementation.

In brief, a production $(A, \alpha, \phi)$ of a parametric context-free grammar is an extension of a production $(A, \alpha)$ of a classical context-free grammar.
The production replaces a non-terminal $A$.
However, $\alpha$ is not a sequence of characters and non-terminals, but a sequence of local variables and non-terminals, e.g.\ $y A y$.
The final component $\phi$ is a guard over both the parameters of the grammar and the local variables.
E.g. $\phi$ may assert $y = x$ for the parameter $x$.
Hence, if $x$ took the concrete value $a$, the production would replace $A$ with $a A a$.

%which is the reason why we do not allow constraints $\lnot U(\Xseq)$,
%which is often allowed, say, for word equations.
\begin{definition}[Parametric Grammar]
    A \defn{parametric grammar} over alphabet theory $T$ is
    of the form $\Grammar = (\params, \nonterm, D, P, S)$,
    where $\params$ is a finite set of parameters,
    $\nonterm$ are the nonterminal symbols,
    $D$ is the (perhaps infinite) set of symbols from the domain,
    $S$ is a starting symbol
    and $P$ is a finite set of triples $(A,\alpha,\phi)$ from $\nonterm \times
    (\locals \cup \nonterm)^* \times T(\params, \locals)$, where $\locals$ is the
    set of local variables.
\end{definition}
Here, \emph{parameters}
are uninterpreted $T$-constants, i.e.\
$\params \subseteq \Constants$;
the $\locals$ variables are ``local'' in the sense that they are instantiated each time the production is used.
Formulas that appear in the productions from $P$ will be referred to as
\emph{guards}, since they restrict which symbols can be produced.

The semantics is defined by generalizing the rewriting-style definition of derivation for context-free grammars.
A derivation begins with the singleton sequence $\beta_0$ consisting only of the starting symbol $S$.
From a sequence $\beta A \beta'$ we can derive $\beta \alpha' \beta'$,
when $\alpha'$ can be derived from $A$, which is defined as follows:
For interpretations $\iparam$ of $\params$ and $\ivar$ of $\locals$ we write
$\alpha[\params/\iparam,\locals/\ivar]$
for the substitution of the constants in $\params$ by their interpretation under $\iparam$ and the variables $\locals$ by their interpretation under $\ivar$.
The nonterminal
$A$ can be rewritten by a rule $(A, \alpha, \varphi(\params, \locals))$
with $\alpha' = \alpha[\params/\iparam,\locals/\ivar]$ when $T \models \varphi(\iparam, \ivar)$.
%(note that $y$ is a sequence of symbols).
We call $\alpha[\params/\iparam,\locals/\ivar]$ an \emph{instantiation} of $\alpha$ and often denote it by $\alpha'$.
We will often refer to a rule $A \to \alpha$ in this case, suppressing the actual instantiation.

Then $L_\iparam(\Grammar)$ is the set of nonterminal-free sequences from $D^*$ that can be derived by $\Grammar$ for an interpretation $\iparam$ of the parameters
and $L(\Grammar) = \bigcup_\iparam L_\iparam(\Grammar)$.

\begin{example}
Consider a grammar $\Grammar$ with one nonterminal, no parameters and rules
$$
(S, ySy, \top), \quad (S, yy, \top)
$$
where the formula $\top$ is always true. Then the generated language is
$
L(\Grammar) = \{w w^R \: : \: w \in D^+\} \ .
$
Let us add a single parameter $x$ and change the rules to:
$$
(S, yxSxy, \top), \quad (S, yxxy, \top) \ .
$$
Then the language is
$$
    L(\Grammar) = \bigcup_{c\in D} \{d_1 c d_2 c \cdots d_k c c d_k c \cdots c d_1 \: : \: k \in \N_+, d_1, \ldots, d_k \in D\} \ .
$$
Notice that the parameter $x$ takes the value $c$ in all productions, but the local variable $y$ can take a different value ($d_\idxi$) during each application of the productions.

Assuming $D = \N$ we can add a condition on the production:
$$
(S, yxSxy', \text{``}y > x \land y + y' = 0\text{''}), \quad (S, yxxy', \text{``}y > x \land y + y' = 0\text{''})
$$
which,
using $(d_\idxi)$ to denote the negative number $-d_\idxi$,
results in a language
$$
\bigcup_{c \in D} \{d_1 c d_2 c \cdots d_k c c (d_k) c \cdots c (d_1) \: : \: k \in \N_+, d_1, \ldots, d_k > c \} \enspace .
$$
\end{example}

In proofs dealing the properties of the derivations (and not the derived sequences),
we focus on rules $A \to \alpha$ and their instantiations $\alpha'$,
and not the guards, which are defined implicitly by the choice of rule.

Note that when all rules of the grammar are of the form $A \to yB$ with $|y| = 1$ and $B \in \nonterm$ or $A \to \epsilon$,
then we can think of $\Grammar$ as an automaton (with $\nonterm$ taking the
role of states, $S$ being the initial state and states such that $B \to
\epsilon$ being the final states). Such an automaton is an extension of both
symbolic automata \cite{DV21,symbolic-power,veanes12} and variable automata
\cite{GKS10,FK20}, which is referred to as parametric (symbolic) automaton
\cite{FL22,FJL22,JLMR23}.

%The semantics is quite simply defined: given an assignment
%$\mu$
%%: \mathcal V \cup \params \to D$
%of parameters and sequence variables, the constraint~$\Aut(\Xseq)$ is satisfied when
%%we
%%obtain the language $L_\mu(\Aut)$ of $\Aut$ with respect to $\mu$ as follows:
%%$L_\mu(\Aut)$ contains all sequences $$ such that
%there is a sequence of $\transrel$-transitions
%\[
%    (q_0,\varphi_1(\curr, \params),q_1),
%    (q_1,\varphi_2(\curr, \params),q_2),
%    \ldots
%    (q_{n-1},\varphi_n(\curr, \params),q_n),
%\]
%such that $q_n \in \finals$ and $T \models \varphi_i(d_i, \mu(\params))$
%and $\mu(\Xseq) = (d_1,\ldots,d_n)$.

%\begin{proposition}
%    Assume that $T$ is solvable in \np (resp.\ \pspace). Then, deciding
%    nonemptiness of a parametric automaton over $T$ is in \np (resp.\ \pspace).
%\end{proposition}
%The proof is standard (e.g.\ see~\cite{FK20,FJL22,FL22}), and only sketched
%here. The algorithm first nondeterministically guesses a simple path in
%the automaton $\Aut$ from an initial state $q_0$ to some final state $q_F$.
%Let us say that the guards appearing in this path are $\psi_1(\curr,\params),
%\ldots,\psi_k(\curr,\params)$. We need to check if this path is realizable by
%checking $T$-satisfiability of
%\[
%    \exists \params.\, \bigwedge_{i=1}^k \exists \curr.\,(\psi_i(\curr, \params)).
%\]
%It is easy to check that the resulting algorithm is an NP (resp.\ NPSPACE =
%PSPACE) procedure.

\begin{proposition}
    Assume that $T$ is solvable in \np (resp.\ \pspace). Then, deciding
    nonemptiness of a parametric grammar over $T$ is in \np (resp.\ \pspace).
\end{proposition}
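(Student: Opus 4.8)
The plan is to reduce nonemptiness of a parametric grammar to a satisfiability question that can be decided within the stated complexity bounds. The key observation is that nonemptiness of $\Grammar$ is, on the one hand, a question about the \emph{derivation structure} (can some derivation tree reach a nonterminal-free sequence from $S$?) and, on the other hand, a question about the \emph{satisfiability of the accumulated guards} under a single global interpretation $\iparam$ of the parameters together with per-application interpretations of the local variables.

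\medskip

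\noindent\emph{Step 1: separate the combinatorial and the logical parts.} First I would observe that the set of rules that is ``usable'' in some derivation is determined purely combinatorially, provided each individual guard is satisfiable: a rule $(A,\alpha,\phi)$ can contribute to a derivation only if $\exists \params,\locals\,.\,\phi$ holds, and this check is a single call to the $T$-solver. I would then compute, by a standard least-fixpoint \emph{productivity} saturation, the set of nonterminals that can derive a nonterminal-free sequence using only guard-satisfiable rules, and check whether $S$ is productive. The subtlety is that the guards share the \emph{parameters} $\params$ globally across the whole derivation, so one cannot satisfy each guard independently; the local variables $\locals$, by contrast, are re-instantiated at every rule application and are therefore harmless.

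\medskip

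\noindent\emph{Step 2: handle the global parameters by guessing a witness derivation skeleton.} Because the only coupling between rule applications is through the finitely many parameters in $\params$, I would guess (nondeterministically) a minimal witness derivation tree. By a pumping-style argument, if a nonterminal-free sequence is derivable at all, one is derivable by a tree in which no nonterminal repeats along any root-to-leaf path, so the tree has height at most $|\nonterm|$ and hence the guessed skeleton records at most $|\nonterm|$ distinct rule applications along each branch. I would then form the conjunction $\Phi \defeq \bigwedge_i \phi_i$ of the guards of the chosen rules, where each occurrence uses a \emph{fresh} copy of its local variables $\locals$ but the \emph{same} shared constants $\params$, and an existential closure $\exists\,.\,\Phi$ is satisfiable iff the guessed derivation can be realized by some interpretation. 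This is a single quantifier-free $T$-satisfiability query (after existentially closing, which for quantifier-free $T$ the $T$-solver handles), so it stays within the complexity of $T$.

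\medskip

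\noindent\emph{Step 3: assemble the complexity bound.} For the \np\ case the procedure is: nondeterministically guess the polynomial-size witness skeleton, build $\Phi$ in polynomial time, and invoke the (\np) decision procedure for $T$ once; the whole thing is a polynomial-time nondeterministic computation with an \np\ oracle step, which collapses to \np. For the \pspace\ case, rather than guess the tree I would run the productivity saturation deterministically while maintaining the global conjunction of guards, which can be organized in polynomial space, and decide $T$-satisfiability of the accumulated formula in \pspace; since \pspace\ is closed under the polynomial-space bookkeeping, the bound follows. The main obstacle I anticipate is making the guessing in Step 2 genuinely polynomial: I must argue that although a derivation tree can have exponentially many nodes, a \emph{succinct} witness (e.g.\ recording, per reachable nonterminal, one productive rule together with the global parameter constraints) suffices, so that the formula $\Phi$ handed to the $T$-solver has polynomial size despite the shared parameters. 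Establishing this succinctness — essentially that the global parameter constraints collected from a small set of distinct rules are sufficient and that no exponential unfolding of local-variable copies is needed — is where the real care is required.
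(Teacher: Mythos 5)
Your overall strategy is the same as the paper's: existentially quantify the shared parameters once, guess a small combinatorial certificate of productivity, and discharge all guards in a single $T$-query in which the parameters $\params$ are shared but the local variables get fresh copies per rule. However, as written there is a genuine gap, and it is exactly the one you flag at the end but never close. In Step 2 your witness is a derivation tree in which no nonterminal repeats along a root-to-leaf path; this bounds the \emph{height} by $|\nonterm|$, but not the number of nodes, which can be of order $\ell^{|\nonterm|}$ where $\ell$ is the maximum length of a right-hand side. Consequently the conjunction $\Phi$ you hand to the $T$-solver has exponentially many conjuncts (one per node, each with fresh locals), and the complexity claims of Step 3 do not follow from what you have proved.

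The paper closes this gap by replacing the tree with the standard productivity certificate, lifted to the parametric setting: $L_\iparam(\Grammar) \neq \emptyset$ iff there is an ordered sequence of \emph{distinct} nonterminals $A_0 = S, A_1, \ldots, A_k$ and, for each $A_i$, a single rule $(A_i, \alpha_i, \phi_i)$ such that every nonterminal occurring in $\alpha_i$ is some $A_j$ with $j > i$, and each guard is satisfiable under the common parameter interpretation. The reason one rule per nonterminal suffices --- the ``succinctness'' you correctly identify as the crux --- is precisely that locals are re-instantiated at every application: if $\exists \ivar\, .\, \phi_i(\iparam,\ivar)$ holds, then \emph{every} occurrence of $A_i$ in the (possibly exponentially large) unfolded tree can be expanded by that same rule, each time with its own witnessing $\ivar$. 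This yields the polynomial-size query $\exists \iparam\, .\, \bigwedge_{i=0}^{k} \exists \ivar\, .\, \phi_i(\iparam,\ivar)$, with at most $|\nonterm|$ conjuncts, and the \np\ (resp.\ \pspace) bound follows. A secondary issue: your deterministic \pspace\ variant in Step 3 (saturation ``maintaining the global conjunction of guards'') is not sound as described, since marking nonterminals productive from individually satisfiable guards ignores parameter sharing, while conjoining the guards of \emph{all} usable rules is too strong a requirement; the correct and simpler route is to run the same guess-and-check procedure, which sits inside \pspace\ because \np{} $\subseteq$ \pspace\ and the single $T$-query is decidable in \pspace.
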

\begin{proof}
The proof is a generalization of a standard proof for context-free grammars
and proofs for parametric automata, see~\cite{FK20,FJL22,FL22}.
The language $L(\Grammar)$ is nonempty if and only if it is nonempty for some parameters $\iparam$, which we will existentially quantify over.

As in the case of standard context-free grammars,
(for fixed parameters) the $L_\iparam(\Grammar)$ is nonempty
if and only if we can order a subset of nonterminals
$A_0, A_1, \ldots, A_k$, where $S = A_0$,
and for each find a rule $(A_i,\alpha_i,\phi_i)$ such that:
if $A_j$ is in $\alpha_i$ then $j > i$;
and for each $i$ the guard $\phi_i(\iparam,\ivar)$ is satisfiable (for some $\ivar$).
Hence we guess the sequence $A_0, A_1, \ldots, A_k$,
verify the first condition
and then verify the formula
$
\exists\iparam\ .\ \bigwedge_{i=0}^k \exists\ivar\ .\ \phi_i(\iparam, \ivar)
$.
Clearly, if $T$ is in \np (resp.\ \pspace) then the above algorithm is in the same class.
\end{proof}

\subsection{Parametric pushdown automata}
\label{sec:parametric-pushdown-automata}

As in the case of usual context-free grammars, there is a pushdown equivalent of our parametric grammar.
Some care is needed, as some combinations of allowed transitions lead to a much more powerful model:
for instance, it is not difficult to show that if we do not allow parameters then our grammar model cannot express the language $\bigcup_{d \in D} d^*$,
yet it is easy to come up with a pushdown automaton which can recognize such languages (with no parameters): it is enough to store the first symbol on the stack
and then compare each consecutive symbol with it; hence such a run should not be allowed.
However, it seems natural that we should allow storing elements of $D$
on the stack, so that, say, palindromes can be recognized.
As a solution, if $\Aut$ sees an element from $D$ on the top of the stack
then it must pop it from the stack.
It cannot push and cannot make an $\epsilon$-transition.

Nondeterministic parametric pushdown automata are formally defined as follows (cf.~standard definition~\cite{sipser13}).
We explain some restrictions on the definition below.

\begin{definition}[Nondeterministic Parametric Pushdown Automata]
A nondeterministic parametric pushdown automaton is a tuple $(\controls, D, \Gamma, \transrel, q_0, F)$,
where as usual $\controls$ is a finite set of states,
$\Gamma$ is finite stack alphabet, we require that $\Gamma \cap D = \emptyset$,
$q_0 \in Q$ is the starting state, $F \subseteq \controls$ is a set of accepting states
and $\transrel = \transrel_{\Gamma \cup \{\epsilon\}} \cup \transrel_{D}
\cup \transrel_{\epsilon, \Gamma \cup \{\epsilon\}} \cup \transrel_{\epsilon,D}$ is a transition relation:
\begin{align*}
\transrel_{\Gamma \cup \{\epsilon\}}
	&\subseteq
\underbrace{\controls}_{\text{state}} \times
%    \underbrace{\{\curr\}}_{\text{read symbol}} \times
    \underbrace{(\Gamma \cup \epsilon)}_{\text{stack top symbol}} \times
    \underbrace{T(\curr,\locals, \params)}_{\text{guard}} \times
    \underbrace{\controls}_{\text{new state}} \times
    \underbrace{(\Gamma \cup \params \cup \locals \cup \curr)^*}_{\text{pushed to stack}}\\
\transrel_{D}
	&\subseteq
\underbrace{\controls}_{\text{state}} \times
    \underbrace{T(\curr,\stktop, \locals, \params)}_{\text{guard}} \times
    \underbrace{\controls}_{\text{new state}}  \\
\transrel_{\epsilon, \Gamma \cup \{\epsilon\}}
	&\subseteq
\underbrace{\controls}_{\text{state}} \times
    \underbrace{(\Gamma \cup \epsilon)}_{\text{stack top symbol}} \times
    \underbrace{T(\locals, \params)}_{\text{guard}} \times
    \underbrace{\controls}_{\text{new state}} \times
    \underbrace{(\Gamma \cup \params \cup \locals)^*}_{\text{pushed to stack}}\\
\transrel_{\epsilon, D}
	&\subseteq
\underbrace{\controls}_{\text{state}} \times
    \underbrace{T(\stktop, \locals, \params)}_{\text{guard}} \times
    \underbrace{\controls}_{\text{new state}}       
    \ .
\end{align*}
\end{definition}

Our definition is a strict extension of symbolic visibly pushdown automata 
\cite{DA14}.
In the above definition $\curr$ is a variable bound to the symbol from $D$ read by the automaton,
$\stktop$ is a variable bound to stack top symbol, which is from $D$.
The four cases of transition function are for ease of presentation,
as in principle one could define $\Delta$ as one set with some syntactic conditions.
The case-distinction is as follows:
In $\transrel_{\Gamma \cup \{\epsilon\}} \cup \transrel_{D}$
the automaton reads an input letter (bound to variable $\curr$) 
and in $\transrel_{\epsilon, \Gamma \cup \{\epsilon\}} \cup \transrel_{\epsilon,D}$ it does not, i.e.\ those are $\epsilon$-transitions
and they do not refer to $\curr$ in the guards, nor in the word pushed to stack.
In $\transrel_{\Gamma \cup \{\epsilon\}} \cup \transrel_{\epsilon, \Gamma \cup \{\epsilon\}}$ the stack topmost symbol is from $\Gamma$ or we do not read the stack at all;
in $\transrel_{D} \cup \transrel_{\epsilon, D}$ the stack top-most symbol is
from $D$, in which case we are not allowed to push anything to the stack;
on the other hand we can use it in the guard, say for comparison with $\curr$.
The reason for not allowing pushing to the stack in this case is that
we do not want to copy the stack contents,
which easily leads to recognition of language $\cup_{d \in D} d^*$,
which should not be recognized without parameters.

Like the case of parametric grammars,
the sequence pushed to the stack may in general depend on the read character $\curr$, some local variables $\locals$ and the parameter interpretation $\iparam$.
We require that when $\alpha'$ is actually pushed to stack
(say for a transition in $\transrel_{\Gamma \cup \epsilon}$)
then $\alpha' = \alpha[\curr/d, \params/\iparam, \locals/\ivar]$, where $d$ is the character read and $\ivar$ is any assignment to $\locals$ such that
$\phi(d, \cha, \iparam, \ivar)$ holds, where $\phi$ is the guard of the rule and $\cha$ is the top of stack character;
$\alpha'$ for $\epsilon$-transitions is defined similarly, i.e.\
as $\alpha[\params/\iparam, \locals/\ivar]$.
%As a notational convention: when  $\gamma$ is the stack contents,
%the left-most character is the top-most
%and when we push $\gamma'$ to stack $\gamma$, the $\gamma' \gamma$ is the result.

Note, the guards can provide expressive power:
E.g.\ for palindromes, we can store the first half of the read word on the stack and then, for the second half, check equality with the read symbol while popping character by character from the stack.
That is, using the guard $\stktop = \curr$.

%there are two additional syntactic conditions needed.
%If $c = \epsilon$, then we require that $\alpha$ does not depend on $\curr$.
%This is natural as no character is read, there is nothing to push to the stack.
%If a rule reads a symbol from the stack using $\stktop$ (i.e.\ $s = \stktop$), then $\alpha$ (pushed to the stack)
%cannot depend on $\locals$ nor $\curr$ (it can depend on $\params$).
%This prevents the top of stack ``leaking'' into the pushed characters via the guard, which still needs to be satisfied.
%
%
%The semantics is otherwise as usual.
Let us describe the semantics,
we will focus on $\transrel_{\Gamma \cup \{\epsilon\}}$,
the other cases are defined similarly.
A configuration is a tuple $(q, \word)$ where $q \in \controls$
is a state and $\word \in (\Gamma \cup D)^\ast$ is the stack contents;
as a convention, we assume that the stack top-most symbol is the first in $\word$.
Take a~configuration $(q, \cha \word)$, where $\cha \in \Gamma \cup \epsilon$, and a transition
$(q, \cha, \phi, q', \alpha)$,
let $d$ be the symbol read by the automaton.
If for some $\ivar$ the $\phi(d,\iparam, \ivar)$ holds 
then the automaton can change the configuration to
$(q', \alpha[\curr/d, \params/\iparam, \locals/\ivar]w)$.
Note that if $a \in \Gamma$ then we need to pop it from the stack,
and if $a = \epsilon$ then the transition does not depend on the stack contents.
%where, as above,
%    $\iparam$ is the (fixed) interpretation of the parameters and
%    $\ivar$ is any assignment to $\locals$
%    such that $\phi(d, \cha, \iparam, \ivar)$ holds.
%Note that if we use the element from the stack then we have to pop it.
For $\transrel_{D}$ the move is defined analogously,
but when the stack contents is $s w$ for $s \in D$,
the guard is evaluated as $\phi(d, s, \iparam, \ivar)$.
The semantics of $\epsilon$-transitions is defined similarly.

A word $\word \in D^\ast$ (for parameter interpretation $\iparam$) is accepted if there is a run for $\word$
from $(q_0, \epsilon)$ to $(q, \word')$ for some $\word'$ and $q \in F$.
By $L_\iparam(\Aut)$ we denote the language recognized by a parametric NPDA $\Aut$
for a given interpretation $\iparam$ of parameters,
and define $L(\Aut) = \bigcup_\iparam L_\iparam(\Aut)$.

\begin{theorem}
The class of languages recognized by parametric context-free grammars
and parametric non-deterministic pushdown automata coincide.
\end{theorem}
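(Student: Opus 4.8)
The plan is to prove both inclusions by adapting the two classical constructions relating context-free grammars and pushdown automata, while threading the global parameter interpretation $\iparam$ through both models and pushing all guard and local-variable bookkeeping into the symbolic transitions and productions. Throughout I would fix a parameter interpretation $\iparam$ and establish $L_\iparam(\Grammar) = L_\iparam(\Aut)$; since $\iparam$ is never altered within a derivation or within a run, taking the union over $\iparam$ then yields $L(\Grammar) = L(\Aut)$.

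For the direction from grammars to automata, I would build a pushdown automaton with a single working state $q$ (together with an initial state $q_0$ and a final state $q_f$) that simulates leftmost derivations, keeping the current sentential form on the stack. The stack alphabet is $\Gamma = \nonterm \cup \{\bot\}$ with $\bot$ a fresh bottom marker, so $\Gamma \cap D = \emptyset$ as required. From $q_0$ an $\epsilon$-move in $\transrel_{\epsilon,\Gamma\cup\{\epsilon\}}$ pushes $\bot S$ and enters $q$. Each production $(A,\alpha,\phi)$ becomes an $\epsilon$-move in $\transrel_{\epsilon,\Gamma\cup\{\epsilon\}}$ that reads $A$ from the stack top, uses guard $\phi$, and pushes $\alpha$; upon instantiation the nonterminals of $\alpha$ stay in $\Gamma$ while its local variables become $D$-symbols, exactly matching the allowed pushed alphabet $(\Gamma\cup\params\cup\locals)^*$. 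Whenever a $D$-symbol is exposed on the stack top, a move in $\transrel_{D}$ with guard $\stktop = \curr$ reads the matching input letter and pops it, and exposing $\bot$ triggers an $\epsilon$-move to $q_f$. The crucial point is that $D$-symbols are only ever popped on a match and never read-and-repushed: a nonterminal sitting above a $D$-symbol is always a $\Gamma$-symbol on top and is expanded before the $D$-symbol is exposed, so the stack discipline of the model is respected. A routine induction matching leftmost derivation steps with runs then gives the equality of languages.

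For the converse I would use the triple construction, first converting $\Aut$ to empty-stack acceptance by a gadget that, from every state in $\finals$, clears the stack with $\epsilon$-moves, popping $\Gamma$-symbols via $\transrel_{\epsilon,\Gamma\cup\{\epsilon\}}$ and $D$-symbols via $\transrel_{\epsilon,D}$. The grammar then uses nonterminals $[p,A,q]$ for $p,q\in\controls$ and $A\in\Gamma$ with the classical intended meaning: $[p,A,q]$ derives exactly those words read by computations that go from $p$ with $A$ exposed to $q$ with $A$ (and everything pushed above it) removed, never inspecting the stack below. Each move of $\transrel_{\Gamma\cup\{\epsilon\}}$ or $\transrel_{\epsilon,\Gamma\cup\{\epsilon\}}$ that pops $A$ (or ignores the top) and pushes a sequence $X_1\cdots X_m$ yields productions $[p,A,q]\to y\,[r,X_1,s_1]\cdots[s_{m-1},X_m,q]$ ranging over all intermediate states, where the read input is carried by a local variable $y$ constrained to $\curr$ and the transition's guard and locals are copied into the production's guard; parameters are shared, so all guards refer to the same $\params$.

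The main obstacle, and the place where the model's restrictions really bite, is the treatment of $D$-symbols on the stack together with the ``ignore-the-stack'' ($\epsilon$-top) transitions. A $D$-symbol carries a value from the infinite domain, so it cannot index a finite family of nonterminals $[p,s,q]$, and the standard normalization that turns a push-without-pop into a pop-and-repush is illegal here precisely because reading a $D$-symbol forbids pushing. The resolution I would adopt exploits two facts that follow directly from the transition shapes: everything pushed on top of a $D$-symbol is generated by $\epsilon$-top moves whose guards and pushed words never mention $\stktop$, hence is independent of that symbol's value; and the symbol is removed by a single later move of $\transrel_{D}$ or $\transrel_{\epsilon,D}$. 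I would therefore handle the whole lifetime of a pushed $D$-symbol inside the very production that pushes it: for each $D$-valued slot I introduce a fresh local variable $s$ for its value, guess the state $r$ at which it is finally exposed and the popping transition, place a value-independent auxiliary nonterminal $E_{s_{j-1},r}$ (defined, like the triples, to generate the balanced computations that push on top via $\epsilon$-top moves and clear them without touching the current top) followed by the output of the pop, and let the popping guard refer to the same local variable $s$. Because the push guard and the pop guard now live in one production, the value $s$ is automatically consistent between push and pop---this is exactly the coupling that a separate child nonterminal could not provide. The remaining work is to set up the auxiliary nonterminals $E_{p,q}$ and to verify, by simultaneous induction on run length and derivation length for each fixed $\iparam$, that the triples and the $E_{p,q}$ meet their specifications; granting this, the two inclusions combine to the claimed equivalence.
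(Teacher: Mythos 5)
Your grammar-to-automaton direction is the same construction as the paper's (three states, stack alphabet $\nonterm \cup \{\bot\}$, productions expanded on the stack top, input letters matched against stacked $D$-symbols via the guard $\stktop = \curr$), so there is nothing to compare there. In the converse direction you and the paper share the one idea that makes the parametric case genuinely harder than the classical one: a move that pushes $D$-values must have the pops of those values folded into the very production that represents the push, so that a common local variable carries the value into both the push guard and the pop guards, while the computation in between is represented by value-independent ``pair'' nonterminals --- your $E_{p,q}$ play exactly the role of the paper's $A_{q,q'}$, and for the same reason: the automaton behaves above an untouched $D$-symbol exactly as it does above the empty stack, since in both situations only stack-ignoring moves can fire. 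Where you genuinely diverge is the scaffolding. The paper first normalizes the automaton so that every move either pushes or pops (never both) and then uses \emph{only} pair nonterminals, treating pushed $\Gamma$-symbols and pushed $D$-values uniformly (every pushed symbol's pop is inlined into the pushing production). You skip that normalization, keep pop-and-push moves, and run the classical triple construction $[p,A,q]$ for $\Gamma$-symbols, inlining only the $D$-valued slots. Your hybrid is somewhat more economical --- compound guards appear only where value-sharing forces them --- whereas the paper's uniform treatment buys a single kind of nonterminal and hence a shorter correctness induction; both constructions remain exponential in the maximal push length because of the guessed intermediate states.

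One step is wrong as written, though it is locally fixable: you let the schema $[p,A,q] \to y\,[r,X_1,s_1]\cdots[s_{m-1},X_m,q]$ cover both moves that pop $A$ and moves that ignore the stack top. For an ignore-the-top move, $A$ is still on the stack after $X_1 \cdots X_m$ have been consumed, so the production must end with a recursive occurrence, i.e.\ $[p,A,q] \to y\,[r,X_1,s_1]\cdots[s_{m-1},X_m,s_m]\,[s_m,A,q]$; as stated, your grammar declares $A$ consumed without any popping move having fired, which breaks the very invariant you announce for $[p,A,q]$ and would make the promised induction fail in the soundness direction. The same adjustment is needed inside the productions defining $E_{p,q}$ (whose pushing moves are exactly the stack-ignoring ones), and you should state explicitly that the start symbol is $E_{q_0,q_f}$, since the initial stack is empty and the top-level computation is therefore a pair, not a triple. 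With these repairs your argument goes through.
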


The equality is shown using two natural inclusions,
proven in the Lemmata below.

\begin{lemma}
Given a parametric grammar $\Grammar$ we can compute in polynomial-time a
    parametric NPDA $\Aut$
of size linear in the size of $\Grammar$
such that for each parameter interpretation $\iparam$ we have
$L_\iparam(\Grammar) = L_\iparam(\Aut)$.
\end{lemma}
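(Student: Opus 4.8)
The plan is to mimic the classical grammar-to-PDA construction, adapting it so that guards, parameters, local variables, and the special treatment of domain symbols on the stack are all handled. Given a parametric grammar $\Grammar = (\params, \nonterm, D, P, S)$, I would build a single-state (modulo a couple of auxiliary control states) parametric NPDA $\Aut$ that keeps the current sentential form on its stack. The stack alphabet $\Gamma$ will be $\nonterm$ together with a small number of bookkeeping symbols; crucially, the actual read symbols from $D$ will only ever reach the top of the stack in positions where the grammar dictates that a terminal (i.e.\ an instantiated local variable) must be matched against the input. The parameters $\params$ of $\Aut$ are exactly the parameters $\params$ of $\Grammar$, and each guard of a production is reused verbatim as the guard of the corresponding transition.

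First I would set up the initialization: an $\epsilon$-transition in $\transrel_{\epsilon,\Gamma\cup\{\epsilon\}}$ that pushes $S$ onto the empty stack. Next, for each production $(A,\alpha,\phi)\in P$ I would add a ``predict'' transition that, when $A$ is on top of the stack, pops it and pushes the right-hand side $\alpha$, using the same guard $\phi$ and the same local variables $\locals$; since $\alpha$ is a word over $(\locals\cup\nonterm)^*$, the pushed string is precisely $\alpha[\params/\iparam,\locals/\ivar]$, so instantiated local variables become concrete domain elements sitting in the stack string, exactly as the transition relation $\transrel_{\epsilon,\Gamma\cup\{\epsilon\}}$ allows (it permits pushing words over $\Gamma\cup\params\cup\locals$). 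These are $\epsilon$-transitions because a derivation step reads no input. The subtle design point is that an instantiated $\alpha$ interleaves nonterminals (in $\Gamma$) with domain elements (in $D$); the latter will surface on the stack top only after the nonterminals to their left have been fully rewritten.

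Then I would add the ``match'' transitions in $\transrel_D$: whenever a domain symbol $d$ is on top of the stack, the automaton reads an input letter $\curr$, checks the guard $\stktop=\curr$, pops $d$, and pushes nothing --- which is exactly the behaviour that $\transrel_D$ is constrained to (no pushing when the stack top is from $D$). This is where the restriction on $D$-topped transitions, emphasized in the model definition, turns out to be a perfect fit rather than a limitation. Acceptance is handled by an $\epsilon$-transition to a final state when the stack is empty (or holds only a designated bottom marker). I would argue correctness by exhibiting a bijection between leftmost derivations of $\Grammar$ under interpretation $\iparam$ and accepting runs of $\Aut$ under the same $\iparam$: the stack at each point encodes the suffix of the sentential form yet to be processed, predict-steps correspond one-to-one to production applications, and match-steps consume terminals left to right. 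Since $\params$, guards, and local variables are copied unchanged, the equivalence holds parameter-by-parameter, giving $L_\iparam(\Grammar)=L_\iparam(\Aut)$ for every $\iparam$.

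The size bound is immediate: $\Aut$ has a constant number of control states, $|\Gamma|=O(|\nonterm|)$, and one transition per production plus a handful of fixed auxiliary transitions, each whose pushed string is a right-hand side $\alpha$ --- so the total size is linear in $|\Grammar|$ and computable in polynomial (indeed linear) time. The main obstacle I anticipate is not the construction itself but making the correctness invariant precise in the presence of parameters and local variables: I must be careful that the local variables $\locals$ in a predict-transition are genuinely re-instantiated on each use (matching the per-application freshness of $\ivar$ in the grammar semantics) while the parameters $\params$ remain globally fixed by $\iparam$, and that a single assignment $\ivar$ simultaneously satisfies the guard and produces the string actually pushed. Stating and maintaining the invariant ``the stack contents equals the instantiated unprocessed suffix of the current leftmost sentential form'' across both predict- and match-steps, uniformly over all four transition families, is the part that requires genuine care.
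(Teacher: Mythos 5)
Your proposal is correct and matches the paper's proof essentially step for step: the paper likewise adapts the classic top-down construction (Sipser, Lemma 2.21), using a three-state automaton with stack alphabet $\nonterm \cup \{\bot\}$, an initial push of $S\bot$, ``predict'' $\epsilon$-transitions that pop a nonterminal and push a guarded instantiation of the production's right-hand side, ``match'' transitions guarded by $\stktop = \curr$ for domain symbols, and acceptance upon seeing $\bot$. Your observations about reusing the parameters and guards verbatim, per-use re-instantiation of local variables, and the linear size bound coincide with the paper's argument, so nothing further is needed.
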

\begin{proof}
The proof is an adaptation of the classic proof, see~\cite[Lemma 2.21]{sipser13};
note that since $\epsilon$-transitions are allowed, we do not need Greibach normal form,
which is a little cumbersome in parametric setting.

Given a sequence $w$ the $\Aut$ will simulate the derivation of $\Grammar$
by always greedily expanding the left-most nonterminal and matching the left-most unmatched letter of the input sequence.
We use the same parameter interpretation $\iparam$ as $\Grammar$ does.
The automaton has three states $q_0, q$, and $q_f$.
The starting state is $q_0$ and $q_f$ is the unique accepting state.
The $\Gamma$ is $\nonterm \cup \{\bot\}$,
where $\bot$ represents the stack bottom.
In $q_0$, the automaton $\Aut$ pushes $S \bot$ to the stack (so $S$ is top-most) and moves to $q$, here $S$ is the starting symbol of the grammar..
In $q$ if $\Aut$ sees $\bot$ on the stack then it moves to $q_f$ and accepts (it cannot proceed).
Otherwise, if the topmost symbol is $A \in \nonterm$ then
$\Aut$ chooses (nondeterministically) a rule $A \to \alpha$ and its (valid) instantiation $\alpha'$,
pops $A$ and pushes $\alpha'$ to the stack.
If the topmost symbol is $d \in D$ and the next symbol is $d$ (that is $\stktop = \curr$)
then $\Aut$ pops the letter and reads the next symbol from the input
(and stays in $q$). It is easy to see that the resulting automaton $\Aut$ has
size linear in $\Grammar$, and can furthermore be computed in polynomial-time.
A slight modification of a standard proof shows the for each instantions of parameters  $\iparam$
we have $L_\iparam(\Grammar) = L_\iparam(\Aut)$,
so also $L(\Grammar) = L(\Aut)$, as claimed.
\end{proof}

The other direction is slightly more involved.
Our construction is exponential in the maximum length of $\alpha$ appearing in a transition
$(q, c, \phi, q', \alpha) \in \transrel_{\Gamma \cup \epsilon}$
or in a transition $(q, \phi, q', \alpha) \in \transrel_{\epsilon, \Gamma \cup \epsilon}$.
Unlike the non-parametric case, we cannot split pushing transitions
so that each transition pushes at most one symbol to the stack.
This is because the values of $\locals$ and $\curr$ cannot be transferred across separate transitions.
Thus a rule pushing $\alpha$ to the stack
needs an exponential number of productions to handle all
sequences of intermediate states that may occur while $\alpha$ is later being
popped. However, so long as we fix the maximum length of such $\alpha$, the
resulting grammar is of polynomial size and can be computed in
polynomial-time.

\begin{lemma}
Given a parametric NPDA $\Aut$ there is a parametric grammar $\Grammar$
such that for each parameter interpretation $\iparam$ we have
$L_\iparam(\Grammar) = L_\iparam(\Aut)$.
The size of $\Grammar$ is exponential in the maximum number $M$ of push
    symbols appearing in any transition of $\Aut$. When $M$ is fixed, the
    algorithm runs in polynomial-time.
\end{lemma}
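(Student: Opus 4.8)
The plan is to adapt the classical triple-construction from automata theory, where a nonterminal $[p, X, q]$ generates exactly those input words that take the automaton from state $p$, with $X$ on top of the stack, to state $q$ while popping $X$ (and anything pushed above it) off the stack. The starting symbol will derive $[q_0, \bot, q]$ for each accepting $q \in F$ (after first pushing a bottom marker $\bot$). The main new ingredient, relative to Sipser's proof, is handling the parametric features: the local variables $\locals$ and the current-letter variable $\curr$ that appear in a single transition's guard and push-word cannot be split across several productions, since their values must be chosen simultaneously. Hence a transition that pushes $\alpha = Z_1 \cdots Z_m$ onto the stack must be mirrored by a single production that consumes all of $Z_1, \dots, Z_m$ at once, guessing the intermediate control states $r_1, \dots, r_{m-1}$ that the automaton passes through as each $Z_i$ is eventually popped.

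Concretely, for each transition $(q, \cha, \phi, q', \alpha) \in \transrel_{\Gamma \cup \{\epsilon\}}$ with $\alpha = Z_1 \cdots Z_m$, I would add, for every choice of states $r_0 = q', r_1, \dots, r_m$, a production
$$
[q, \cha, r_m] \;\to\; y\, [r_0, Z_1, r_1]\, [r_1, Z_2, r_2] \cdots [r_{m-1}, Z_m, r_m], \quad \phi,
$$
where the leading local variable $y$ (bound to $\curr$) reads the input letter and the guard $\phi$ is carried over verbatim, so that the local and current-letter choices are made in one place and propagated consistently into the push-word $\alpha$. The cases for $\transrel_D$ (reading a letter while the top of stack is a domain symbol, matched via the guard $\stktop = \curr$) and the two $\epsilon$-transition families are handled by analogous productions, the $\transrel_D, \transrel_{\epsilon,D}$ cases producing a terminal-style rule that consumes the single stacked domain symbol. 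Summing over all $|\controls|^{m}$ choices of intermediate states gives the claimed size bound, exponential in $M = \max|\alpha|$ and polynomial once $M$ is fixed; the polynomial-time computability is then immediate from enumerating these tuples.

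The correctness argument is the main obstacle, and I would split it into the two inclusions of a single invariant: for each parameter interpretation $\iparam$, the nonterminal $[p, X, q]$ derives (under $\iparam$) exactly the input words $w$ such that $\Aut$ has a run from $(p, Xs)$ to $(q, s)$ for any tail $s$, reading $w$ and net-popping $X$. The forward direction (derivation yields run) proceeds by induction on the derivation length, and the backward direction (run yields derivation) by induction on the number of steps in the run, pivoting on the first moment the stack height drops below the height at which $X$ sat. The delicate point is verifying that the guard $\phi$ and the single assignment to $\locals$ and $\curr$ chosen in one production step correspond to the single transition step of $\Aut$, and that nesting these triples reconstructs precisely the sequence of pops of $Z_1, \dots, Z_m$ in order; since the guard is copied unchanged and the intermediate states are quantified over all possibilities, both directions match up. I expect the bookkeeping around the $\transrel_D$ transitions --- where no push occurs but a domain symbol already on the stack must be matched against $\curr$ via $\stktop = \curr$ --- to require the most care, as these are the productions that terminate a triple's expansion.
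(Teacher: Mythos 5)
There is a genuine gap, and it sits exactly at the point that makes the parametric model harder than the classical one. Your construction keeps the popping transitions as \emph{separate} productions: a nonterminal $[r_{i-1}, Z_i, r_i]$ introduced by the pushing production is expanded later by a ``terminal-style rule'' coming from $\transrel_{D}$ or $\transrel_{\epsilon,D}$. But when the pushed symbol $Z_i$ is $\curr$ or a local variable, its concrete value (an element of the possibly infinite domain $D$) is fixed by the instantiation chosen when the \emph{pushing} production fires, while the guard of the popping transition is a formula $\psi(\curr, \stktop, \locals, \params)$ that must be evaluated with $\stktop$ equal to that very value. A parametric grammar has finitely many nonterminals and no mechanism for transmitting a domain value from one production into another production used later in the derivation (parameters are global, locals are per-instantiation), and indexing the nonterminal by the \emph{symbolic name} $Z_i$, as in $[r_{i-1}, Z_i, r_i]$, carries no value at all. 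So the guard of that later production cannot be written: you must substitute something for $\stktop$, and nothing in scope denotes the pushed value. Relatedly, your invariant ``$[p,X,q]$ derives exactly the words taking $\Aut$ from $(p, Xs)$ to $(q,s)$'' is not well-posed when $X$ is an instantiation-dependent domain value: the language genuinely depends on that value, and there are infinitely many of them. (Your scheme does work when every pushed symbol lies in $\Gamma$, since those pops have guards without $\stktop$, and also for pushed \emph{parameters}, which are globally available; the failure is precisely for pushed occurrences of $\curr$ and of local variables, which is the case the lemma is really about.)

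The paper's proof avoids this by refusing to split: a single grammar production encodes the pushing transition \emph{together with all $k$ transitions that later pop} $s_1, \ldots, s_k$. Its right-hand side is $y_0\, A_{q_1,q_2}\, y_1\, A_{q_2,q_3}\, y_2 \cdots A_{q_k,q_{k+1}}\, y_k$, where $y_i$ is a fresh local variable standing for the letter read by the $i$-th popping transition, and the guard conjoins $\phi_0$ (the pushing guard, with $\curr$ renamed to $y_0$) with every pop guard $\phi_i$, in which $\stktop$ is substituted by $s_i'$, defined as $y_0$ when the pushed $s_i$ was $\curr$ and as $s_i$ otherwise; fresh copies $\locals_i$ of the locals keep the different transitions' choices independent. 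Because the pushed value and all pop guards now live in one guard, the binding your construction loses is preserved; the key structural consequence is that the letters read at pop time must appear as terminals \emph{inside the pushing production}, not inside the sub-derivations of the intermediate nonterminals. If you restructure your triple construction along these lines, the guessing of intermediate states and your $|\controls|^{M}$ size analysis go through essentially as you describe.
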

\begin{proof}
We modify a standard construction cf.~\cite[Lemma 2.27]{sipser13}.
For a fixed parameter interpretation $\iparam$,
$L_\iparam(A_{q,q'})$ is the language recognized by $\Aut$ with starting state $q$,
final state $q'$.
We modify $\Aut$ so that:
\begin{itemize}
\item it has a single accepting state
\item it empties the stack before accepting
\item in each move it either pushes something (perhaps the empty sequence)
to the stack or pops from the stack, but not both.
\end{itemize}
The first two conditions are easy to ensure, the last depends on the form of the transition:
\begin{itemize}
\item If the transition is from $\transrel_D \cup \transrel_{\epsilon, D}$
then it does not push, as required.
\item If the transition is from $\transrel_{\Gamma \cup \epsilon} \cup \transrel_{\epsilon, \Gamma \cup \epsilon}$
and it reads $\epsilon$ from the stack,
then it does not pop from the stack, as required.
\item If the transition is
from $\transrel_{\Gamma \cup \epsilon} \cup \transrel_{\epsilon, \Gamma \cup \epsilon}$
for a topmost symbol $\gamma \in \Gamma$
then we create a new state $q_{q, \gamma}$ and create an $\epsilon$-transition
from $q$ that removes $\gamma$ without reading a letter and
goes to $q_{q,\gamma}$.
Then from $q_{q,\gamma}$ the automaton ignores the stack and acts as if it were in $q$ with $\gamma$ on top of the stack.
\end{itemize}
The defined automaton recognizes the same language
(for parameter interpretation $\iparam$.)

Note, that the conditions on the transition relation
when the topmost symbol is from $D$
are tailored so that the above separation of popping and pushing is possible.

In a standard proof of equivalence of NPDAs and CFGs,
cf~\cite[Lemma 2.27]{sipser13},
the computation of $\Aut$ is split into parts
in which it empties the stack (from the symbols it introduced).
The assumption that $\Aut$ can push at most one element to the stack
makes the proof easier; however, we need to push more symbols.
But this only means that when $a_1 a_2 \cdots a_k$
is pushed to the stack, the computation is split into $k$ subcomputations,
in which it removes $a_1, a_2, \ldots, a_k$ from the stack.

To be more precise,
let $\Gamma = \nonterm \times \controls \times \controls$
and denote its elements by $A_{q,q'}$,
with the intention that $L_\iparam(A_{q,q'})$ is the language of words
such that $\Aut$ starting in $q$ (and empty stack) will go on this word to the empty stack and state $q'$.
In particular,
we will set $A_{q_0,q_f}$ as the starting symbol,
where $q_0$ is the starting state and $q_f$ the unique final state,
and then $L(A_{q_0,q_f}) = L(\Aut)$.

Recall the classic construction, in which case each rule pushing
an element to the stack pushes at most one symbol.
When describing the computation taking the automaton from $q$ to $q'$
and from the empty stack to the empty stack,
i.e.\ corresponding to the nonterminal $A_{q,q'}$,
either the stack is emptied somewhere on the way, say at state $q''$,
which means that we have a rule $A_{q,q'} \to A_{q,q''}A_{q'',q'}$,
or it is emptied at the last step. Hence if the first transition
pushes $s$ to the stack, the last removes it from the stack
and in the meantime the computation is as if it started and ended 
on an empty stack.
Hence the rule is of the form $A_{q,q'} \to a A_{p,p'} b$
such that there is a transition from $q$ to $p$, reading $a$, pushing $s$
and a transition from $p'$ to $q'$ popping $s$, reading $b$
(both $a,b$ can be a letter or $\epsilon$).

In our case we cannot assume that a transition pushes just a single symbol to the stack,
as the sequence pushed may contain the same local variable or $\curr$ several times
and splitting into many rules would lose this connection.
Hence we need to consider rules pushing, say,
$s_1\cdots s_k$ and the $k$ rules that pop those letters
(and in between $\Aut$ acts as if it were starting and ending on the empty stack),
at the same time and ``compound'' their computation.

We include the rules given below;
The first bullet point states that a non-terminal $A_{q,q}$ can be rewritten to $\epsilon$.
That simulates a move from $q$ to $q$ without firing any transitions.
In the second bullet, the production
$A_{q,q''} \to A_{q,q'}A_{q',q''}$
handles the case where the stack is emptied at state $q'$ on the run from $q$ to $q''$.
\begin{itemize}
\item $A_{q,q} \to \epsilon$
\item for each $q,q',q'' \in \controls$ a rule
$$
A_{q,q''} \to A_{q,q'}A_{q',q''}
$$
\item for each $(q_0,\epsilon, \phi_0, q_1, s_1\cdots s_k) \in \transrel_{\Gamma \cup \epsilon} \cup \transrel_{\epsilon, \Gamma \cup \epsilon}$
(so the one pushing $s_1\cdots s_k$ to the stack and not looking at the stack)
such that there are $q_2, \ldots, q_{k+1}$
(states in which $s_1, \ldots, s_k$ are popped)
such that for all $2 \leq i \leq k$
we have
$(q_i,\phi_i, q_{i+1}) \in \transrel_D \cup \transrel_{\epsilon, D}$ or 
$(q_i,s_i', \phi_i, q_{i+1},\epsilon) \in \transrel_{\Gamma \cup \epsilon} \cup \transrel_{\epsilon, \Gamma \cup \epsilon}$
(the transitions popping those letters)
we add a production 
$$
A_{q_0,q_{k+1}} \to y_0 A_{q_1,q_2} y_1 A_{q_2,q_3} y_2 \cdots A_{q_k,q_{k+1}} y_k
$$
and a guard
$$
    \phi_0[\curr/y_0, \params, \locals/\locals_0] \land
    \bigwedge_{i=1}^k \phi_i[\curr/y_i, \stktop/s_i, \params, \locals/\locals_\idxi]
$$
the $y_i$ is the symbol read by the $i$-th transition,
$\locals_i$ are the local variables of the $i$-th transition,
$s'_i$ is the symbol popped by $i$-th transition
and $\phi_i$ is the guard of the $i$-th transition;
those are described in detail below.
Note that if the $i$-th transition for $0 \leq i \leq k$
is an $\epsilon$-transition then $\phi_i$ does no depend on the $\curr$,
and if it does not read the stack top element, then $\phi_i$ does not use $\stktop$,
but we write like this to streamline the argument.

Concerning $y_i$, corresponding to letters read by popping transitions,
if the $i$-th transition is an $\epsilon$ transition
then $y_i = \epsilon$ and otherwise it is a fresh local variable
(which appears in the guard).
Concerning $s'_i$, they ``should be'' $s_i$,
but the problem arises when $s_i$ is equal to $\curr$,
i.e.\ $s_i$ pushed to the stack is the read letter;
then $\curr$ ``should be'' $y_0$.
Hence, if $s_i = \curr$ we define $s'_i = y_0$.
That is, at $q_0$ the read symbol $y_0$ was pushed to the stack.
Otherwise --- if $s_i \neq \curr$ --- we set $s'_i = s_i$,
which could be both an element from $\Gamma$ or $D$.
%Recall that when $c_0 = \epsilon$ our syntactic restrictions mean $a_i \neq \curr$.

We have different fresh copies $\locals_\idxi$ of the local variables.
This is because multiple pushdown transitions are encoded in a single rule.
The value of the local variables may differ for each transition fired.
Hence, we need separate copies for the different guards $\phi_i$ combined in the grammar production.
Notice that these separate copies are never pushed onto the stack as there is only one pushing transition represented by the production.
\end{itemize}

The proof is now a generalization of the standard one~\cite[Lemma 2.27]{sipser13}.
When starting from $q$ with an empty stack and ending in $q'$,
then if we reach the empty stack somewhere on the way we use $A_{q,q''} \to A_{q,q'}A_{q',q''}$.
Otherwise, we push some $s_1\cdots s_k$ to the stack
and take them from the stack one by one: i.e.\ for each $i$ there is the first moment
when $s_i$ is taken from the stack and from the moment we took $s_{i-1}$
right before we take $s_i$ the $\Aut$ acts as if on empty stack.
\end{proof}

%!TEX root = popl24.tex

\section{Computing Parikh images} \label{sec:parikh}

We first generalise the notion of a Parikh image to the parametric setting.
Then we discuss the construction of Parikh images, and the complexity of a related decision problem.

\subsection{Definition}

We first recall the classical definition of the Parikh image of a language.
Take a finite alphabet $\Sigma = \{\cha_1, \ldots, \cha_n\}$.
For a word $\word \in \Sigma^\ast$, let
$\countof{\cha}{\word}$
be the number of occurrences of the character $\cha$ in the word $\word$.
For a linearisation $\cha_1, \ldots, \cha_n$ of the characters of $\Sigma$, the Parikh image $P(\word)$ of $\word$ is a mapping
$f : \{1, \ldots, n\} \to \N$
such that $f(\idxi) = \countof{\cha_\idxi}{\word}$ for all $1 \leq \idxi \leq n$.
For a language $L \subseteq \Sigma^\ast$ the Parikh image is the set of mappings
$\{P(\word) : \word \in L\}$.
That is, the Parikh image counts the number of occurrences of each character in each word of $L$.

Parametric context-free grammars may have large or even infinite alphabets.
Counting each of the characters is either impractical or impossible.
Hence, we define a version of the Parikh image that is relative to a sequence of predicates.
We then count the number of characters satisfying each predicate, rather than the number of each individual character.

\begin{definition}[Parametric Parikh Image]
    For a word $\word = \cha_1 \ldots \cha_k \in D^\ast$ and $T$-formula $\psi(\curr)$ over one local variable $\curr$, the count
    $\countof{\psi}{\word}$
    is the number of positions $\idxi$ of $\word$ such that
    $T \models \psi(\cha_\idxi)$.
    For a sequence $\Psi := \psi_1,\ldots,\psi_n$ of $T$-formulas, the Parikh image $P_\Psi(\word)$ of $\word$ over $\Psi$ is a mapping
    $$f : \{1, \ldots, n\} \to \N$$
    with $f(\idxi) = \countof{\psi_\idxi}{\word}$ for all $\idxi$.
    For a parametric grammar $\Grammar$ over $T$, the Parikh image of $\Grammar$ over $\Psi$ is
    \[
        P_{\Psi}(L(\Grammar)) \defeq
        \{P_{\Psi}(\word) : \word \in L\} \ .
    \]
\end{definition}

Let us fix a parametric grammar $\Grammar$ over $T$ and a sequence $\Psi := \psi_1,\ldots,\psi_n$ of $T$-formulas.

\subsection{Representing the Parikh Image}

We can construct a formula in the combined theory of $T$ and quantifier-free Presburger arithmetic that represents the Parikh image of a given parametric grammar $\Grammar$.
Such a formula is polynomial in size and can be used as part of a query to an SMT solver to solve decision problems over the grammar.
Below, let $\mathrm{QFPresburger}$ be the theory of quantifier-free fragment %\sideartur{of language}
of Presburger arithmetic.

\begin{theorem}
\label{thm:Parikh_image_represented}
Let $\Grammar$ be a parametric grammar over the theory $T$ and take a sequence $\Psi := \psi_1,\ldots,\psi_n$ of $T$-formulas.
There is an existential $T + \mathrm{QFPresburger}$ formula $\phi(x_1, \ldots, x_n)$ of size polynomial in the size of $\Grammar$ such that
$f \in P_\Psi(L(\Grammar))$ iff $\phi(f(1), \ldots, f(n))$ holds.
\end{theorem}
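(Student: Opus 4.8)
The plan is to separate the \emph{structural} content of a derivation (which is classical and alphabet-free) from its \emph{symbolic} content (how the leaves of the derivation tree are filled with domain elements), handle the former with the existing Parikh construction, and tame the latter with a ``few distinct instantiations'' argument. First I would strip the guards from $\Grammar$ and read each rule $A \to \alpha$, with $\alpha \in (\locals \cup \nonterm)^\ast$, as a classical context-free grammar whose ``terminals'' are the local-variable occurrences. For this skeleton the algorithm of Verma et al.~\cite{VermaSS05} (building on Esparza~\cite{Esparza97}) produces an existential Presburger formula over production-count variables $\{z_p\}_{p \in P}$ that holds exactly when $\{z_p\}$ is the production multiset of some valid derivation tree: a flow (Kirchhoff) constraint balancing, for each nonterminal, how often it is produced against how often it is expanded (with $S$ seeded once), plus a connectivity constraint — encoded with auxiliary ``distance-from-root'' variables — that rules out disconnected spurious cycles. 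This part is verbatim classical and contributes only a polynomial-size subformula.

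The real difficulty is symbolic. A production $p=(A,\alpha,\phi_p)$ fired $z_p$ times instantiates its local variables, once per firing, to a tuple of domain elements satisfying $\phi_p(\iparam,\cdot)$, and each element contributes to $\countof{\psi_j}{\word}$ according to which predicates it satisfies. Since $z_p$ is a variable and $\phi_p$ \emph{couples} all local variables of a firing, one cannot enumerate the $(2^n)^{|\alpha|}$ possible joint ``type profiles'' of a firing, nor afford per-firing variables — precisely the origin of the naive exponential blow-up. The key lemma I would prove is that this is unnecessary: every $f \in P_\Psi(L(\Grammar))$ admits a witnessing derivation in which the number of \emph{distinct} instantiation tuples used across all firings is polynomial, so that only $O((n+|\nonterm|)\cdot\mathrm{polylog})$ distinct domain elements occur (with an additional factor in $\ell$). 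Two ingredients drive this: domain elements satisfying the same predicates are interchangeable for $P_\Psi$, so within one type only as many representatives as a single guard can force to be pairwise distinct are ever needed (bounded by $\ell$); and an integer-Carath\'eodory / small-support argument (in the spirit of Eisenbrand--Shmonin) applied to the count vector $f \in \N^n$ shows that only $O(n\log n)$ distinct predicate-types need appear — this is where the logarithmic factor enters.

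Given the lemma I would assemble $\phi(x_1,\dots,x_n)$ as follows. Existentially quantify the parameters $\params$ and, for each production $p$, a polynomial number of instantiation tuples $\vec u^{(p,t)}$ as fresh $T$-constants together with quantifier-free Presburger multiplicities $m_{p,t}\in\N$; assert the guard $\phi_p(\iparam,\vec u^{(p,t)})$ for every used tuple, impose $\sum_t m_{p,t}=z_p$, and feed the $z_p$ into the flow/connectivity subformula of the first step. Finally define the predicate counts by $x_j = \sum_{p,t} m_{p,t}\cdot\big(\sum_{s}\,[\psi_j(u^{(p,t)}_s)]\big)$, where $[\psi_j(u^{(p,t)}_s)]$ is the $0/1$ value of the $T$-formula $\psi_j$ at the $s$-th component of the tuple, expressible by an if-then-else in $T+\mathrm{QFPresburger}$. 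Soundness is immediate by reading off an actual derivation, and since each quantified block is polynomial, so is $\phi$.

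I expect the main obstacle to be the key lemma of the second paragraph, and in particular its \emph{completeness} half: showing that the integer-Carath\'eodory bound on predicate-types can be achieved \emph{simultaneously} with a genuine derivation of $\Grammar$, i.e.\ that thinning the instantiations used by a witness down to a small basis — while respecting the per-firing distinctness forced by guards — does not destroy membership in $L(\Grammar)$ nor change $P_\Psi$. The flow/connectivity machinery of the first step should make it possible to re-route the thinned instantiations onto a valid derivation tree, but reconciling the three constraints (grammar structure, guard satisfiability per firing, and the Carath\'eodory-bounded type multiset) is where the technical work lies.
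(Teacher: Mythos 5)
Your overall architecture --- the Verma et al.\ flow-plus-connectivity characterization of derivations, a ``few distinct instantiations'' lemma, and a polynomial-size formula that guesses parameters, a small collection of instantiation tuples with Presburger multiplicities, and checks guards and predicate membership via $0/1$ indicators --- matches the paper's. But the key lemma of your second paragraph is left unproven, and you have correctly identified its completeness half as the obstacle: a small-support (Carath\'eodory / Eisenbrand--Shmonin style) argument applied to the predicate-count vector $f \in \N^n$ alone says nothing about whether the thinned multiset of instantiations can still be arranged into a derivation tree of $\Grammar$. That reconciliation is not a technicality to be deferred; it is the main content of the proof, and your proposal does not contain the idea that closes it.

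The paper closes it by changing \emph{what} gets thinned. Instead of vectors in $\N^n$ recording predicate types of individual characters, it attaches to each rule instantiation $\alpha'$ of $A \to \alpha$ a mapping $f_{A,\alpha,\alpha'}$ over the enlarged index set $\{1,\ldots,n\} \cup \{s,t\} \times \nonterm$: the first $n$ coordinates are the Parikh contribution of $\alpha'$, coordinate $(s,A)$ records which nonterminal the rule expands, and the $(t,B)$ coordinates record how many occurrences of each nonterminal $B$ the rule introduces. The Euler/flow conditions for being a derivation are linear in $\sum F$, where $F$ is the multiset of these mappings, so exchanging two disjoint equal-sum subsets $F_1, F_2$ of $F$ preserves both the Parikh image and the flow conditions \emph{by construction}; guards remain satisfied because the exchange only re-weights whole instantiations that already occur in the witness derivation and never fabricates new tuples (this also dissolves your worry about per-firing distinctness forced by guards). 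Connectivity is preserved by reserving at most $|\nonterm|$ mappings that keep the underlying graph connected, and the replacement process terminates via a well-founded order on multisets. Finally, the counting lemma feeding this exchange is elementary pigeonhole rather than Carath\'eodory: among $k \geq 2d\log(d\ell)$ distinct mappings into $\{0,\ldots,\ell\}$ with $d = n + 2|\nonterm|$, one has $2^k > (k\ell+1)^d$, so two subsets collide in sum. This yields a witness using at most $|\nonterm| + 2d\log(d\ell)$ distinct mappings, after which your formula assembly goes through essentially as you describe it.
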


In principle there are exponentially many (in $n$)
different symbols $\cha$ which yield different images $P_\Psi(\cha)$.
A naive approach to computing Parikh images would compute the possible images $f$ of some character $\cha$ and calculate their possible sums in words generated by $\Grammar$.
Since this may require an exponential number of different characters, the naive approach implies at least an exponential running time.
We show that this is not the case and that the formula can be polynomial in size.

The main step of the proof is to show that if $f \in P_\Psi(L(\Grammar))$
then there is a sub-domain $D_0 \subset D$ of nearly linear (in $n$ and $|N|$) size such that
$f \in P_\Psi(L(\Grammar) \cap D_0^*)$
(the actual statement is more precise).
This essentially reduces the problem to the standard case of a finite alphabet;
some details are still needed (like finding the value of the parameters, finding the exact subdomain, etc.), but this is the crucial step.
Note that this finitization method does not hold for some models
over infinite alphabets, e.g., semilinear data automata of \cite{FL22}, which
can impose that there are exponentially many different elements of the domain in
a given predicate $\pred$. A similar result was shown independently to obtain
Carath\'eodory bounds for integer cones~\cite{ES06} and used to derive 
parallel results on the complexity of non-emptiness of symbolic tree
automata~\cite{R23}.

To show the existence of such a sub-domain,
we consider derivations of a word and we focus on the number
of times (counts) each production is used.
Note that the Parikh image is determined by those counts.
Hence, the counts are an ``intermediate notion'' between the exact derivations and Parikh images and we focus on them.
For classic context-free grammars it is well-known that a derivation with given counts exists
if and only if those counts satisfy simple arithmetic constraints.
Those constraints can be formulated in terms of sums of mappings,
which are similar to the Parikh image mappings.

Then, using combinatorial arguments, we show that if we have a large number of such mappings (we use many productions)
then there are subsets that have the same effect: i.e.\ in any derivation we can replace a subset of the productions with another subset without changing the Parikh image of the result.
Moreover, we can construct weights for such sets and ensure that each time we make a replacement, the weight drops,
which means that the replacing terminates at some point.
The result has to be a derivation with a small number of different productions used.

We illustrate this process with a small example over characters rather than productions.
Take the language $D^\ast$.
Assume we have three predicates $\Psi = \psi_1, \psi_2, \psi_3$.
Take a word $a_1 a_2 a_3 a_4 a_5$ using five different characters and suppose the Parikh image (using vectors to represent the maps) is
\[
    P_\Psi(a_1 a_2 a_3 a_4 a_5) =
        (1, 0, 0) +
        (1, 1, 1) +
        (1, 1, 0) +
        (1, 0, 1) +
        (0, 0, 0) \ .
\]
Notice
\[
    P_\Psi(a_1 a_2) =
        (1, 0, 0) +
        (1, 1, 1) =
    P_\Psi(a_3 a_4 a_5) =
        (1, 1, 0) +
        (1, 0, 1) +
        (0, 0, 0)
\]
and hence we can construct the same Parikh image using only $a_1$ and $a_2$.
That is $P_\Psi(a_1 a_2 a_1 a_2) = P_\Psi(a_1 a_2 a_3 a_4 a_5)$.
Our proof shows that if a large number of different productions are used, the same principle will allow us to find subsets with the same sum.

To make the above intuition formal,
we first recall that there is a derivation of a CFG with a given number of times each production is used,
if and only if those counts satisfy a simple arithmetical relation~\cite{VermaSS05};
informally: for each nonterminal, the number of times it is introduced and expanded are the same (except for starting nonterminal)
plus a condition guaranteeing a variant of being connected;
this characterization is similar in spirit and proof to the Euler condition for directed graphs.
Formally, for a rule $A \to \alpha$ let $n_{A, \alpha}$ be its count,
then
\begin{lemma}[cf.~{\cite[Thm.~3, 4]{VermaSS05}}]
\label{lem:derivation_conditions_simple}
There is a derivation of a CFG (with starting symbol $S$)
that uses $n_{A, \alpha}$ times a rule $A \to \alpha$
if and only if
\begin{align*}
\sum_{(S \to \alpha) \in P} n_{S,\alpha}
	&=
1 + \sum_{(A \to \alpha) \in P} |\alpha|_S \cdot n_{A,\alpha}\\
\forall B \in N \setminus \{ S \}
\sum_{(B \to \alpha) \in P} n_{B,\alpha}
	&=
\sum_{(A \to \alpha) \in P} |\alpha|_B \cdot n_{A,\alpha}
\end{align*}
and the underlying graph is connected.
Moreover, if $w$ is generated by such a derivation then
$$
|w|_a = \sum_{(a\to \alpha) \in P} n_{A, \alpha} \cdot |\alpha|_a.
$$
\end{lemma}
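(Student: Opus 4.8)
The plan is to prove the two directions separately, reading the balance equations as an Euler-type (flow-conservation) condition on the directed multigraph $G$ whose vertices are the nonterminals and which has, for every rule $A\to\alpha$ used with positive count and every occurrence of a nonterminal $B$ in $\alpha$, an $A\to B$ edge. I would first establish necessity together with the ``Moreover'' clause by bookkeeping along a derivation. Track the multiset of nonterminals in the current sentential form: the derivation starts from the singleton $\{S\}$ and, since $w$ is terminal, ends with the empty multiset, while each application of $A\to\alpha$ deletes one occurrence of $A$ and creates $|\alpha|_B$ occurrences of each $B$. Equating, for every nonterminal, the total number of occurrences created and destroyed over the whole derivation (rule $A\to\alpha$ being applied $n_{A,\alpha}$ times), and accounting for the single initial occurrence of $S$, yields exactly the two displayed equations. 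The same bookkeeping gives the terminal counts: each application of $A\to\alpha$ contributes $|\alpha|_a$ copies of $a$, so $|w|_a=\sum_{(A\to\alpha)\in P} n_{A,\alpha}\cdot|\alpha|_a$. Finally, every expanded nonterminal appears in some sentential form derived from $S$ and is thus reachable from $S$ along edges supplied by used rules, so $G$ is connected, as required.

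For sufficiency --- the main direction --- I assume the balance equations and connectivity and must construct an actual derivation. I would schedule the rule applications greedily: maintain a working multiset of currently open nonterminals, initialised to $\{S\}$, together with a residual count for each rule, and repeatedly fire a rule whose left-hand side is open and whose residual count is positive. The balance equations guarantee that the global supply and demand of every nonterminal match, so a schedule that exhausts all counts necessarily terminates on the empty multiset and is a genuine derivation; the only risk is a deadlock in which positive counts remain but no required nonterminal is open.

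Ruling out such deadlocks is the step I expect to be the main obstacle, and it is exactly where connectivity is needed: the balance equations alone are also satisfied by counts forming a cycle of productions entirely detached from $S$, which can never occur in a real derivation. This mirrors Euler's theorem, where balanced in/out-degrees guarantee a traversal only once the multigraph is connected. Adapting the classical Eulerian construction, I would first use connectivity to pick, for each nonterminal $B$ carrying positive flow, a path of used rules from $S$ to $B$ and schedule these ``spanning'' applications first, so that every such $B$ becomes open before it is demanded. The residual counts still satisfy the balance equations at every vertex and remain anchored to reachable nonterminals, so an induction on the total count $\sum_{A,\alpha} n_{A,\alpha}$ --- peeling off one spanning application at a time and checking that balance and connectivity survive on the residual system --- lets the remaining applications be interleaved without deadlock and completes the derivation. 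Verifying that residual connectivity is preserved after each peeling is the one genuinely delicate piece of bookkeeping.
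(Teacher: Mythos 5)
The paper itself does not prove this lemma: it imports it from Verma et al.~\cite{VermaSS05} (remarking that the statement is implicit in their proofs, and that the delicate part --- the Presburger encoding of the connectedness condition --- contained an error in the original, later corrected, e.g., in~\cite{B06}). So your attempt has to be measured against that standard argument. Your necessity direction and the ``Moreover'' clause are correct: the multiset bookkeeping over sentential forms is exactly the usual argument, and it in fact yields the stronger, \emph{directed} conclusion that every expanded nonterminal is reachable from $S$ along used rules.

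The genuine gap is in the sufficiency direction, at the step ``use connectivity to pick, for each nonterminal $B$ carrying positive flow, a path of used rules from $S$ to $B$.'' The connectivity hypothesis of the lemma, as the paper defines it immediately after the statement, is connectivity of an \emph{undirected} graph (edge $\{A,B\}$ whenever some used rule with left-hand side $A$ produces an occurrence of $B$). An undirected path from $S$ need not be fireable, since its edges may be oriented backwards, and this cannot be patched: balance plus undirected connectivity do not imply the directed reachability your schedule needs. Concretely, take the grammar whose rules and counts are $n_{S\to A}=1$, $n_{A\to\epsilon}=2$, $n_{B\to CC}=1$, $n_{C\to B}=1$, $n_{C\to A}=1$: both displayed equations hold and the undirected graph is connected ($S$--$A$--$C$--$B$), yet the only derivation this grammar admits is $S\Rightarrow A\Rightarrow\epsilon$, so no derivation realizes these counts --- your spanning paths to $B$ and $C$ simply do not exist. (This example also shows the equivalence is false under the literal undirected reading; the condition Verma et al.\ actually use, and the one your proof tacitly assumes, is that every nonterminal with positive count is reachable from $S$ in the \emph{directed} graph of used productions.) Even under that corrected hypothesis, one further claim needs repair: after firing the spanning applications, the residual counts do \emph{not} ``still satisfy the balance equations at every vertex,'' as you assert; the invariant your induction needs is that, for each nonterminal, residual consumption equals residual production plus the number of its currently open occurrences in the working multiset. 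With those two fixes your Euler-style construction becomes the standard argument.
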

Here the underlying graph has $\nonterm$ as vertices and there is an (undirected) edge
$\{A, B\}$ when $n_{A, \alpha} > 0$ and $|\alpha|_B > 0$,
for some $\alpha$.
In~\cite[Thm.~3, 4]{VermaSS05} Verma et al.\ give explicitly a stronger variant of this claim and implicitly formulate Lemma~\ref{lem:derivation_conditions_simple}
in the proof.
Moreover, the condition of the underlying graph being connected is also formulated
via a Presburger arithmetic formula.
While the original construction of such a Presburger formula contained a small error, there are alternative, correct variants in the literature (e.g.~\cite{B06});
we follow those correct constructions.

We extend this characterization to the case of parametric grammars
and reformulate conditions from Lemma~\ref{lem:derivation_conditions_simple}
in terms of mappings, which will allow reasoning on Parikh images and derivations at the same time.
We consider an extension of the Parikh image mappings that also assign counts to pairs $\{s, t\} \times \nonterm$.
A pair $(s, A)$ indicates how many times the nonterminal $A$ is the source of a derivation step, and a pair $(t, A)$ indicates how many times $A$ is introduced by a derivation step.
Thus, we use mappings $f : \{1,\ldots, n\} \cup \{s, t\} \times \nonterm \to \N$.
Let $d = n + 2|\nonterm|$ denote the size of the domain of the mapping.
The mapping restricted to $\{1, \ldots, n\}$ corresponds to the Parikh image.

Given an instantiation $\alpha'$ of production $A \to \alpha$
we denote by $f_{A,\alpha,\alpha'}$ the mapping $f$ with
$f(i) = (P_\Psi(\alpha'))(i)$ for $i = 1, \ldots, n$
(we assume that $P_\Psi$ ignores the nonterminals),
and $f(s, A) = 1$ and $f(s,A') = 0$ for $A \neq A'$
and $f(t, B)$ is $|\alpha|_B$, for each $B \in \nonterm$.
Note that we associate several mappings with a single production,
as there are many instantiations $\alpha'$ for a fixed $\alpha$
and on the other hand several instantiations of a rule can have the same mapping.

We consider multisets $F$ of mappings as above,
so $F : \N ^{\{1,\ldots, n\} \cup \{s, t\} \times \nonterm} \to \N$,
which implicitly defines how many times each mapping $f \in F$ is used;
by $m \cdot \{f\}$ we denote a multiset consisting of $m$ instances of $f$.
By $\sum F$ we denote $\sum_{f \in F} f$, which is an element-wise sum:
$(\sum_{f \in F} f)(i) = \sum_{f \in F} f(i)$.
Given a multiset $F$ of mappings as above we say that a derivation (for some fixed interpretation of parameters)
which uses $n_{A,\alpha, \alpha'}$ times the instantiation $\alpha'$ of rule $A \to \alpha$,
is corresponding to $F$, when
$$
F = \bigcup_{\substack{(A \to \alpha) \in P\\ \alpha'}}
n_{A, \alpha, \alpha'} \cdot \{f_{A,\alpha,\alpha'}\}.
$$

\begin{lemma}
\label{lem:derivation_conditions}
Given a multiset $F$ of mappings there is a derivation corresponding to it if and only if
\begin{subequations}
\label{eq:euler_conditions}
\begin{align}
\left(\sum F\right)(s,S) &= 1 + \left(\sum F\right)(t,S)\\
\left(\sum F\right)(s,A) &= \left(\sum F\right)(t,A) &\text{for all } B \in N \setminus \{S\}
\end{align}
\end{subequations}
and the underlying graph is connected.
Moreover, this derivation yields a word with Parikh image
$
\left(\sum F \right)
$
restricted to $\{1,\ldots, n\}$.
\end{lemma}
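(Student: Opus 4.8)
The plan is to reduce the statement to the classical characterization of Lemma~\ref{lem:derivation_conditions_simple} by \emph{grounding} the grammar at a fixed parameter interpretation. Since a derivation must use a single interpretation $\iparam$ of the parameters, I would first fix $\iparam$ and form the ordinary context-free grammar $\Grammar_\iparam$ over the (possibly infinite) terminal alphabet $D$ whose productions are exactly the valid instantiations: for each rule $(A,\alpha,\phi)$ of $\Grammar$ and each $\ivar$ with $T\models\phi(\iparam,\ivar)$, include the classical production $A\to\alpha'$ with $\alpha'=\alpha[\params/\iparam,\locals/\ivar]\in(D\cup\nonterm)^*$. By construction a parametric derivation of $\Grammar$ under $\iparam$ is literally a derivation of $\Grammar_\iparam$, and using the instantiation $\alpha'$ of a rule $A\to\alpha$ exactly $n_{A,\alpha,\alpha'}$ times is the same as using the production $A\to\alpha'$ that many times. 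As only finitely many mappings occur in $F$, only finitely many grounded productions are relevant, so I may pass to the finite sub-grammar they span and apply Lemma~\ref{lem:derivation_conditions_simple} verbatim.

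The heart of the argument is then to check that the three quantities appearing in Lemma~\ref{lem:derivation_conditions_simple} --- the per-nonterminal expansion counts, the per-nonterminal introduction counts, and the underlying graph --- depend on a grounded production only through its mapping $f_{A,\alpha,\alpha'}$, and hence can be read off from $\sum F$. The key observation is that instantiation does not touch nonterminals, so $|\alpha'|_B=|\alpha|_B=f_{A,\alpha,\alpha'}(t,B)$. Summing over the multiset, $(\sum F)(s,A)=\sum_{\alpha,\alpha'}n_{A,\alpha,\alpha'}$ is exactly the number of times $A$ is expanded, while $(\sum F)(t,A)=\sum_{A',\alpha,\alpha'}n_{A',\alpha,\alpha'}|\alpha|_A$ is exactly the number of times $A$ is introduced on a right-hand side. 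Substituting these into the flow equations of Lemma~\ref{lem:derivation_conditions_simple} yields precisely~\eqref{eq:euler_conditions}, the $+1$ term attached to $S$ accounting for the initial occurrence of the start symbol. The graph whose edges are the pairs $\{A,B\}$ with $n_{A,\alpha,\alpha'}>0$ and $|\alpha|_B>0$ coincides with the graph having an edge $\{A,B\}$ whenever some $f\in F$ satisfies $f(s,A)=1$ and $f(t,B)>0$, so the two notions of connectedness agree. This transports both directions of the ``iff'' through the grounding.

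For the Parikh-image claim I would expand the predicate counts into character counts and invoke the ``moreover'' part of Lemma~\ref{lem:derivation_conditions_simple}. Writing $\countof{\psi_i}{w}=\sum_{a\in D}[\,T\models\psi_i(a)\,]\cdot|w|_a$ and substituting $|w|_a=\sum_{A,\alpha,\alpha'}n_{A,\alpha,\alpha'}|\alpha'|_a$, the inner sum over $a$ collapses to $\countof{\psi_i}{\alpha'}=(P_\Psi(\alpha'))(i)=f_{A,\alpha,\alpha'}(i)$, so the total equals $(\sum F)(i)$; hence the generated word has Parikh image $\sum F$ restricted to $\{1,\ldots,n\}$, as required.

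The step I expect to be the main obstacle is the reverse direction, i.e.\ rebuilding an actual derivation from a multiset $F$ satisfying~\eqref{eq:euler_conditions} and connectedness. Its combinatorial core --- turning the flow equations plus connectedness into a genuine derivation tree, in the style of the Euler-path argument --- is exactly the content inherited from Verma et al.\ via Lemma~\ref{lem:derivation_conditions_simple}, so the only genuinely parametric subtlety is that the chosen grounded productions must be simultaneously realizable, i.e.\ arise from one common interpretation $\iparam$. This is guaranteed because ``corresponding to $F$'' already exhibits each $f\in F$ as some $f_{A,\alpha,\alpha'}$ under a fixed $\iparam$: the local assignments $\ivar$ are chosen independently at each production use and never interact across steps, so no constraint beyond the single shared $\iparam$ is imposed, and the ambiguity of attributing a mapping to one of several rules with identical $f$ is harmless, since interchangeable productions leave both $\sum F$ and the derivation unaffected.
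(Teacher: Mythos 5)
Your proposal is correct and follows essentially the same route as the paper: both fix the parameter interpretation $\iparam$, treat each rule--instantiation pair $(A\to\alpha,\alpha')$ as a production $A\to\alpha'$ of an ordinary CFG, observe that instantiation leaves nonterminal counts unchanged so that the flow equations of Lemma~\ref{lem:derivation_conditions_simple} become exactly conditions~\eqref{eq:euler_conditions}, and read off the Parikh-image claim from the ``moreover'' part of that lemma. Your handling of the reverse direction (choosing, for each $f\in F$, an arbitrary realizing rule and instantiation, noting the choice is harmless) matches the paper's argument as well.
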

Here the underlying graph has nodes $\{A : \left(\sum F\right)(s,A) > 0 \}$
and edges $\{A,B\}$ when there is $f \in F$ such that $f(s,A), f(t,B) > 0$.
%\shortlong{The proof appears in the full version.}{The proof appears in Appendix~\ref{apx:proof_derivation_conditions}.}
\begin{proof}

%\section{Proofs Missing from the Main Text}
%
%\subsection{Proof of Lemma~\ref{lem:derivation_conditions}} \label{apx:proof_derivation_conditions}

After fixing the parameters and the possible instantiations of the rules,
the parametric CFG becomes a CFG grammar over a finite-size alphabet.
We show that condition~\eqref{eq:euler_conditions} from the Lemma is equivalent to the one from
Lemma~\ref{lem:derivation_conditions_simple}.
Note that the condition that the underlying graph is connected is the same
in proven Lemma and in Lemma~\ref{lem:derivation_conditions_simple}.

Suppose that there is a derivation corresponding to $F$.
Denote by $n_{A, \alpha, \alpha'}$ the counts of the instantiations $\alpha'$ of the rules $A \to \alpha$.
Then we can treat the parametric grammar as an ordinary CFG,
with a pair $(A \to \alpha), \alpha'$ being treated as a single rule $A \to \alpha'$.
Hence the numbers $n_{A, \alpha, \alpha'}$
satisfy the following equations as in Lemma~\ref{lem:derivation_conditions_simple}:
\begin{align*}
\sum_{\substack{(S \to \alpha) \in P\\ \alpha'}} n_{S,\alpha,\alpha'}
	&=
1 + \sum_{\substack{(A \to \alpha) \in P\\ \alpha'}} |\alpha|_S \cdot n_{A,\alpha,\alpha'}\\
\forall B \in N \setminus \{ S \}
\sum_{\substack{(B \to \alpha) \in P\\ \alpha'}} n_{B,\alpha,\alpha'}
	&=
\sum_{\substack{(A \to \alpha) \in P\\ \alpha'}} |\alpha|_B \cdot n_{A,\alpha,\alpha'}\\
\end{align*}
Observe that $\sum_{(A \to \alpha \in P), \, \alpha'} n_{S,\alpha,\alpha'} $
is $(\sum F)(s,A)$
while $\sum_{(A \to \alpha) \in P, \, \alpha'} |\alpha|_B \cdot n_{A,\alpha,\alpha'}$ is $(\sum F)(t, A)$.
Hence the two equations are a reformulation of the conditions~\eqref{eq:euler_conditions}.

In the other direction, the argument is similar,
but note that for a mapping $f \in F$ with count $n_f$
we need to give counts for each production $A \to \alpha$
and its instantiation $\alpha'$ such that $f_{A, \alpha, \alpha'} = f$
and that the sum of those counts is $n_f$. 
This is done arbitrarily: given a mapping $f \in F$ with count $n_f$
we choose arbitrarily a single such rule $A \to \alpha$ and its instantiation  $\alpha'$
such that $f_{A,\alpha, \alpha'} = f$ and set $n_{A, \alpha, \alpha'}$ to $n_f$.
The same argument as above shows that the equations from the Lemma are just reformulations of equations from Lemma~\ref{lem:derivation_conditions_simple}.

The claim on the Parikh image follows:
the derivation uses an instantiation $\alpha'$ of the $A \to \alpha$ rule $n_{A,\alpha,\alpha'}$ times, which yields that the contribution of the letters
in $\alpha'$ is as stated.
\end{proof}

Now the idea is that if $F$ corresponds to a derivation,
then there is a different multiset $F'$ with the same sum $\sum F = \sum F'$.
Moreover, $F'$ satisfies conditions~\eqref{eq:euler_conditions}
but uses fewer different mappings than $F$.
In particular, $F'$ also corresponds to some derivation.

\begin{lemma}
Let a multiset of mappings $F$ satisfy condition~\eqref{eq:euler_conditions}
from Lemma~\ref{lem:derivation_conditions}.
Suppose that there are two multisets $F', F'' \subseteq F$
such that $\sum F' = \sum F''$, $\sum (F \setminus F') \cup F'' = \sum F$
and the underlying graph for $(F \setminus F') \cup F''$ is connected.
Then $(F \setminus F') \cup F''$ satisfies condition \eqref{eq:euler_conditions} from Lemma~\ref{lem:derivation_conditions}.
\end{lemma}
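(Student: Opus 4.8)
The plan is to observe that the balance equations comprising condition~\eqref{eq:euler_conditions} are expressed \emph{solely} in terms of the componentwise sum of a multiset, and that this sum is, by hypothesis, left unchanged by the swap; the conclusion then follows with no further work. Concretely, condition~\eqref{eq:euler_conditions} constrains only the finitely many numbers $\left(\sum F\right)(s,S)$, $\left(\sum F\right)(t,S)$ and, for each $A \in \nonterm \setminus \{S\}$, the pair $\left(\sum F\right)(s,A)$, $\left(\sum F\right)(t,A)$. No other feature of $F$ — which mappings occur, or with what multiplicity — enters these equations.

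The second step is to record that $\sum\bigl((F \setminus F') \cup F''\bigr) = \sum F$. This is exactly one of the stated hypotheses; alternatively it follows from $\sum F' = \sum F''$, since $F' \subseteq F$ and $\cup$ denotes multiset addition, whence $\sum\bigl((F \setminus F') \cup F''\bigr) = \sum F - \sum F' + \sum F'' = \sum F$. Writing $G = (F \setminus F') \cup F''$, we therefore have $\left(\sum G\right)(x) = \left(\sum F\right)(x)$ for every coordinate $x \in \{1,\ldots,n\} \cup \{s,t\}\times\nonterm$. Substituting $G$ into condition~\eqref{eq:euler_conditions} then yields literally the same equations as substituting $F$, and those hold by assumption; hence $G$ satisfies condition~\eqref{eq:euler_conditions}. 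The connectivity of the underlying graph of $G$, which is what is additionally needed to invoke Lemma~\ref{lem:derivation_conditions} and conclude that $G$ corresponds to an actual derivation, is itself supplied directly as a hypothesis.

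I do not expect a genuine obstacle in this lemma: it is a bookkeeping step asserting that a sum-preserving, connectivity-preserving replacement leaves the Euler-type balance equations intact, and the content is entirely absorbed by the hypotheses $\sum F' = \sum F''$ and connectivity of $G$. The real difficulty — exhibiting such $F'$ and $F''$ that strictly reduce the number of distinct mappings while maintaining connectivity and the common sum — belongs to the subsequent combinatorial argument (the finitization of the sub-domain), not to this verification.
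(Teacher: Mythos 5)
Your proposal is correct and takes essentially the same route as the paper's own proof: both reduce the lemma to the computation $\sum\bigl((F \setminus F') \cup F''\bigr) = \sum F - \sum F' + \sum F'' = \sum F$ (using $F' \subseteq F$ and $\sum F' = \sum F''$) together with the observation that condition~\eqref{eq:euler_conditions} depends only on this sum, while connectivity is given as a hypothesis. If anything, your write-up is slightly more explicit than the paper's terse argument, but there is no substantive difference.
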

\begin{proof}
As $F' \subseteq F$ the $F \setminus F'$ is well defined and
$\sum ((F \setminus F') \cup F'') = (\sum F) - (\sum F')  + (\sum F'')$.
By assumption $\sum F' = \sum F''$ we have that
$\sum ((F \setminus F') \cup F'') = \sum F$
and hence the assumptions of Lemma~\ref{lem:derivation_conditions} are met
for $(F \setminus F') \cup F''$.
%In particular, the Parikh image of the derived word is the same.
\end{proof}

We now show that given a large enough set of mappings we can always find
two its subsets of the same sum.
Note that here we do not use multisets, but rather simply sets.

\begin{lemma}
\label{lem:subset_sums_equal}
Given a set of $k$ different mappings $f : \{0,1, \ldots , d \} \to \{0,1, \ldots, \ell\}$, where $d > 3$
and $k \geq 2 d \log (d \ell) $, there are two disjoint subsets of this set
that have the same sum.
\end{lemma}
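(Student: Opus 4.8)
The plan is to prove this by a pigeonhole argument on subset sums, which is the finite-alphabet analogue of the replacement idea sketched earlier. Identify each mapping $f$ with the integer vector $(f(0),\ldots,f(d)) \in \{0,\ldots,\ell\}^{d+1}$, and for a subset $S$ of the given $k$ mappings write $\sum S := \sum_{f \in S} f$ for its coordinatewise sum, so that $(\sum S)(j) = \sum_{f \in S} f(j)$. There are $2^{k}$ subsets $S$; hence if the number of distinct vectors that $\sum S$ can take is strictly less than $2^{k}$, then two distinct subsets must have the same sum.

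First I would bound the number of possible sum-vectors. For each coordinate $j \in \{0,\ldots,d\}$ put $T_j := \sum_{f} f(j)$, the total of coordinate $j$ over all $k$ mappings; then every subset satisfies $0 \le (\sum S)(j) \le T_j \le k\ell$, so $\sum S$ lies in a box of at most $\prod_{j=0}^{d}(T_j+1) \le (k\ell+1)^{d+1}$ points, bounding each factor by $k\ell+1$. It therefore suffices to establish the counting inequality $2^{k} > (k\ell+1)^{d+1}$.

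Granting this, I would extract the two disjoint subsets as follows. The pigeonhole principle yields distinct $S_1 \neq S_2$ with $\sum S_1 = \sum S_2$; setting $A := S_1 \setminus S_2$ and $B := S_2 \setminus S_1$, the contribution of the common part $S_1 \cap S_2$ cancels, so $A$ and $B$ are disjoint with $\sum A = \sum B$, and $A \neq B$ since $S_1 \neq S_2$ forces $A \cup B \neq \emptyset$. This is exactly the claimed conclusion. (Should one want both subsets nonempty, it suffices to run the same pigeonhole over subsets of a single fixed cardinality $t$, which forces $|A| = |B| \geq 1$.)

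The one real computation, and the step I expect to be the main obstacle, is the counting inequality. Taking logarithms, $2^{k} > (k\ell+1)^{d+1}$ is equivalent to $k > (d+1)\log(k\ell+1)$. The point is that for $k = O(d\log(d\ell))$ the term $\log(k\ell+1)$ exceeds $\log(d\ell)$ only by an additive $O(\log\log(d\ell))$ coming from $\log k$, so the dominant part of the right-hand side is about $d\log(d\ell)$; the factor $2$ in the hypothesis $k \geq 2d\log(d\ell)$ is precisely what absorbs the $(d+1)$-versus-$d$ discrepancy in the exponent together with these lower-order $\log\log$ and constant contributions, while the assumption $d > 3$ is calibrated to keep the bound valid in the small cases where such terms would otherwise dominate. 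Concretely I would substitute $k = 2d\log(d\ell)$, verify the inequality at this threshold, and then observe that $k - (d+1)\log(k\ell+1)$ is increasing in $k$ over the relevant range (its derivative is positive once $k$ exceeds a constant multiple of $d$, which holds here), so that the inequality persists for all larger $k$.
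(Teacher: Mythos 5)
Your overall strategy is exactly the one the paper uses: compare the $2^k$ subsets against a box of possible sum-vectors, then turn two equal-sum subsets into disjoint ones by deleting their intersection (that step, and your monotonicity-in-$k$ observation, are both fine). The genuine gap is that you never carry out what you yourself call ``the one real computation,'' and in fact it cannot be carried out as you have set it up. The pigeonhole step forces the base-2 logarithm, since $(k\ell+1)^{d+1} < 2^k$ is equivalent to $(d+1)\log_2(k\ell+1) < k$, and at the threshold $k = 2d\log_2(d\ell)$ your inequality is simply false for small admissible parameters: for $d=4$, $\ell=1$ we get $k=16$, and $2^{16} = 65536$ while $(k\ell+1)^{d+1} = 17^5 = 1419857$. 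The failure is not an isolated fluke: it persists for $d=4$ with every $\ell \le 8$, and for $\ell=1$ up to about $d=10$. So the heuristic that the factor $2$ ``absorbs the $(d+1)$-versus-$d$ discrepancy together with the $\log\log$ terms'' is wrong in precisely the small-$d\ell$ regime that the hypothesis $d>3$ is supposed to cover, and monotonicity in $k$ cannot rescue a base case that fails.

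The paper sidesteps most of this by using the coarser count $(k\ell+1)^{d}$ --- effectively treating the mappings as vectors with $d$ coordinates rather than $d+1$ --- which at the threshold reduces to the scalar inequality $x^2 > 2x\log x + 1$ with $x=d\ell$, and that inequality is then checked directly rather than deferred. (Even there the constants are delicate: with base-2 logarithms the paper's inequality fails at exactly $x=4$, i.e.\ $17^4 = 83521 > 2^{16}$, though it holds for all $x \ge 5$; the mismatch between the statement's domain $\{0,\ldots,d\}$ and the proof's exponent $d$ is an off-by-one slip in the paper itself.) To complete your version you would need to do one of the following: weaken your count to exponent $d$ (reading the statement's domain as having $d$ coordinates), strengthen the hypothesis (a constant larger than $2$, or a requirement that $d\ell$ be sufficiently large), or dispose of the small-parameter cases by a separate argument. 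As written, the proposal's central step fails rather than merely being unfinished.
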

\begin{proof}

It is enough to show that there are two subsets of the same sum,
they can be made disjoint by removing their intersection.

There are $2^k$ different subsets  and at the same time
each value of the sum of mappings is between $0$ and $k \ell$,
so in total there are at most $(k \ell+1)^d$ different possible sums
and so it is enough to show that $(k \ell + 1)^d < 2^k$.
That is $d \log(k \ell+1) < k$,
i.e.\ $d < \frac{k}{\log(k \ell+1)}$.
As $\frac k {\log (k \ell+1)}$ is increasing for $k \geq 2$,
it is enough to verify for $k = 2 d \log (d \ell)$ that :
\begin{equation*}
d < \frac{2 d \log (d \ell)}{\log( 2 d \ell \log (d \ell) +1)}
	=
d \frac{\log ((d \ell)^2)}{\log( 2 d \ell \log (d \ell) +1)} \enspace .
\end{equation*}
Which is equivalent to (setting ($x = d \ell$))
\begin{equation*}
\log (x ^2)> \log(2 x \log x +1)
\end{equation*}
As $\log$ is an increasing function, then it is enough to show that
$x ^2 > 2 x \log x + 1$,
which clearly holds for  $x > 3$, so in particular for $d > 3$.
\end{proof}

\begin{lemma}
\label{lem:small_alphabet}
Given a multiset $F$ of mappings such that there is a derivation corresponding to it,
there is a multiset $F'$ of mappings corresponding to it, $\sum F = \sum F'$ and
$F'$ contains at most $|\nonterm| + 2d \log d \ell$ different mappings,
where $\ell$ is an upper-bound on the value of each mapping in $F$.
\end{lemma}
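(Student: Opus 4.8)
The plan is to prove this by a \emph{reduction procedure} that repeatedly shrinks the number of distinct mappings used, invoking Lemma~\ref{lem:subset_sums_equal} to expose a redundant pair of subsets and Lemma~\ref{lem:derivation_conditions} to certify that the modified multiset still corresponds to a derivation. I induct on the size of the support $G$ of $F$ (the set of distinct mappings appearing in $F$). If $|G| \le |\nonterm| + 2d\log(d\ell)$ there is nothing to do and I take $F' = F$. Otherwise I carry out one replacement step that strictly decreases $|G|$ and appeal to the induction hypothesis; since $|G|$ is a nonnegative integer, this terminates.

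The crux of the replacement step is to \emph{guard connectivity}, since Lemma~\ref{lem:subset_sums_equal} is oblivious to the graph structure and could otherwise delete the only mappings linking two parts of the grammar. By Lemma~\ref{lem:derivation_conditions} the underlying graph of $F$ is connected on the node set $\{A : (\sum F)(s,A) > 0\}$, which has at most $|\nonterm|$ vertices. I fix a spanning tree; each of its at most $|\nonterm|-1$ edges is witnessed by some $f \in G$, so I can select a set $H \subseteq G$ of at most $|\nonterm|-1$ mappings whose edges alone keep the graph connected. Because $|G| > |\nonterm| + 2d\log(d\ell)$ and $|H| \le |\nonterm|-1$, the unprotected part satisfies $|G \setminus H| > 2d\log(d\ell)$, so Lemma~\ref{lem:subset_sums_equal} (with domain of size $d$ and value bound $\ell$) yields two disjoint subsets $F_1, F_2 \subseteq G \setminus H$ with $\sum F_1 = \sum F_2$. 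Both are nonempty: no mapping $f_{A,\alpha,\alpha'}$ is identically zero, as $f(s,A)=1$, so any set summing to the zero mapping is empty, whence the two differences produced by Lemma~\ref{lem:subset_sums_equal} cannot both vanish.

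Now set $m := \min_{f \in F_1} F(f) \ge 1$ and define
\[
F_{\mathrm{new}} := \bigl(F \setminus (m\cdot F_1)\bigr) \cup (m \cdot F_2).
\]
Since $m\cdot F_1 \subseteq F$ this is a well-defined multiset, and $\sum F_1 = \sum F_2$ gives $\sum F_{\mathrm{new}} = \sum F$. A mapping $f^\star \in F_1$ attaining the minimum has multiplicity $0$ in $F_{\mathrm{new}}$; as $F_1 \cap F_2 = \emptyset$ and copies are added only of mappings in $F_2 \subseteq G$, no new distinct mapping is introduced and the support strictly shrinks. Connectivity is preserved: the node set depends only on $\sum F_{\mathrm{new}} = \sum F$ and is unchanged, while $H$ is disjoint from $F_1 \cup F_2$, so every mapping of $H$ retains positive multiplicity and its spanning tree survives. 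Finally, the conditions~\eqref{eq:euler_conditions} depend only on $\sum F_{\mathrm{new}} = \sum F$ and therefore still hold; together with connectivity, Lemma~\ref{lem:derivation_conditions} shows $F_{\mathrm{new}}$ corresponds to a derivation. (Equivalently, one may invoke the preceding replacement lemma, whose proof uses only $m\cdot F_1 \subseteq F$ and $\sum F_1 = \sum F_2$.) Iterating yields the required $F'$.

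The main obstacle is precisely this interplay between shrinking the support and maintaining connectivity; protecting a spanning tree of at most $|\nonterm|-1$ mappings resolves it, and is exactly the source of the additive $|\nonterm|$ term in the bound, the $2d\log(d\ell)$ term being inherited from the threshold of Lemma~\ref{lem:subset_sums_equal}. One technicality remains: Lemma~\ref{lem:subset_sums_equal} requires (in effect) $d\ell > 3$, since its counting inequality is stated for $x = d\ell > 3$. The finitely many boundary configurations with $d\ell \le 3$ force $|\nonterm|=1$ and tiny $n,\ell$, so that only a constant number of distinct mappings exist in total, already below the target bound; these are handled directly by taking $F' = F$.
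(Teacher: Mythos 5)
Your proof is correct and follows the paper's overall strategy: reserve a few mappings witnessing a spanning tree of the underlying graph (the source of the additive $|\nonterm|$ term), apply Lemma~\ref{lem:subset_sums_equal} to the rest to extract disjoint equal-sum subsets $F_1, F_2$, swap one for the other while preserving $\sum F$, and certify the result via Lemma~\ref{lem:derivation_conditions}. Where you genuinely diverge is the termination mechanism. The paper swaps a \emph{single} copy of each mapping in $F_1$ for one of each in $F_2$ (or vice versa); since this need not shrink the support, it introduces a well-founded order on multisets (first by the number of non-zero components, then lexicographically by multiplicities) and argues that one of the two possible swap directions strictly decreases it. Your bulk swap with $m = \min_{f \in F_1} F(f)$ forces the minimizing mapping's multiplicity to zero while $F_1 \cap F_2 = \emptyset$ and $F_2 \subseteq G$ ensure no new mapping enters the support, so the support strictly shrinks at every step and plain induction on $|G|$ suffices; this makes the paper's lexicographic bookkeeping unnecessary. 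You also close two small gaps the paper elides: you verify that $F_1$ and $F_2$ are nonempty (since $f(s,A)=1$ means no nonempty set of mappings sums to zero), which is needed for $m$ to be defined, and you handle the boundary configurations where the hypothesis of Lemma~\ref{lem:subset_sums_equal} (in effect $d\ell > 3$) fails, correctly observing that the total number of possible mappings is then already below the target bound. A stylistic difference without mathematical consequence: you keep the spanning-tree witnesses $H$ inside $F$ and constrain $F_1, F_2 \subseteq G \setminus H$, whereas the paper removes $F_0$ up front and re-adds it only at the end; both yield the $|\nonterm| + 2d\log(d\ell)$ bound.
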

\begin{proof}
%For each mapping $f$ in $F$ fix an instantiation $\alpha'$.
%so in particular the set of derived letters is fixed.
The idea is as follows:
given a multiset $F$ we use Lemma~\ref{lem:subset_sums_equal}
to claim that $F$ uses few different mappings or
to find $F_1, F_2 \subset F$ such that both
$(F \setminus F_2) \cup F_2$ and $(F \setminus F_2) \cup F_1$
correspond to a valid derivation
and $\sum (F \setminus F_1) \cup F_2 = \sum F = \sum (F \setminus F_2) \cup F_1$;
we then replace $F$ with on of them.
To guarantee that the replacement terminates, we introduce a natural well-founded order on multisets of mappings
and show that (at least) one of $(F \setminus F_1) \cup F_2$ and $(F \setminus F_2) \cup F_1$
is strictly smaller in this order than $F$.
This shows that the replacement terminates at some point
and so the resulting multiset uses few different mappings,
which shows the claim of the Lemma.

Consider the \emph{set} of mappings in $F$,
let us linearly order them in some arbitrary way as $f_1, f_2, \ldots, f_m$.
We now treat the multisets of mappings as mappings themselves,
i.e.\ $F'(f_i)$ gives the number of times $f_i$ is in $F'$.
%Let $\Sigma_0$ be the set of all such letters, over all $f \in F$,
%and let us fix it for the remaining part of the proof.
%Define $P$ which gives counts of 
%
%$P For a mapping $f_i \in 
%For words over $\Sigma_0$ let $P$ denote their ``standard'' Parikh image,
%i.e.\ the mapping counting the number of occurrence of each letter in $\Sigma_0$;
%we extend $P$ to $f \in F$ in the natural way:
%when $\alpha'$ is the instantiation fixed for $f$ then we set $P(f) = P(\alpha')$.
%Then we extend $P$ to $F$ as well, as $P(F) = \sum_{f \in F} P(f)$.
We introduce a linear well-founded order $\leq$ on the multisets of mappings:
we first compare by the number of non-zero components
(a smaller number implies the image is smaller according to $\leq$),
i.e.\ if $|\{i\: : \: F'(f_i) > 0 \}| < |\{i\: : \: F''(f_i) > 0 \}|$ then $F' < F''$
and otherwise we compare the mappings lexicographically by coordinates,
i.e.\ if $|\{i\: : \: F'(f_i) > 0 \}| = |\{i\: : \: F''(f_i) > 0 \}|$
and $i$ satisfies $F'(f_j) = F''(f_j)$ for $j < i$ and $F'(f_i) < F''(f_i)$
then $F' < F''$.
%(in some arbitrary order);
%on the first difference, a smaller values implies the image is smaller according to $\leq$.
%Formally:
%$$
%F' \leq F''
%$$
By standard arguments, this is a well-founded order
(as the order on $\mathbb N$ is well-founded and a lexicographic order
such that the order on each component is well-founded is itself well-founded).
%We lift this order to multisets of mappings in the natural way:
%$F \leq F'$, when $P(F) \leq P(F')$.
%
%
%We transform the multiset $F'$ (initially equal to $F$)
%step by step: when $F'$ is changed to $F''$ then
%$\sum F' = \sum F''$ and $F' > F''$.
%As $\leq$ is well founded, we terminate at some point.
%Moreover, we ensure that $F''$ satisfies conditions~\eqref{eq:euler_conditions}
%from Lemma~\ref{lem:derivation_conditions}.

We arbitrarily choose $F_0 \subseteq F$ mappings from $F$,
where $|F_0| \leq |\nonterm|$,
which guarantee that the underlying graph is connected;
this can be done as the underlying graph is connected,
by Lemma~\ref{lem:derivation_conditions_simple},
and has $|\nonterm|$ many vertices.
The $F_0$ will be added to the final multiset of mapping to guarantee that the underlying graph is connected.
Let $F' = F \setminus F_0$ be the remaining multiset of mappings.
If $F'$ has more than $2d\log d\ell$ different mappings,
then by Lemma~\ref{lem:subset_sums_equal} there are two disjoint \emph{sets}
$F_1,F_2 \subset F'$ such that $\sum F_1 = \sum F_2$.
Let $F_1' = (F' \setminus F_1) \cup F_2$ and $F_2' = (F' \setminus F_2) \cup F_1$,
where $F_1'$ and $F_2'$ are multisets.
By the same Lemma, $\sum F_1' = \sum F_2' = \sum F'$.
Note also that $F_1', F_2'$ cannot have more non-zero components
than $F'$, as we are only adding mappings which are already in $F'$.
Hence, if $F_1'$ has a non-zero component,
it is also non-zero in $F'$, and similarly for $F_2'$.
Hence $F'$ cannot be smaller than $F_1'$ or $F_2'$ because it has less non-zero components.

%Suppose $F_1'$ has fewer non-zero components than $F'$.
%Then $F_1' < F'$ and $F_1' \cup F_0$ is the desired multiset.
%Similarly if $F_2'$ has fewer non-zero components than $F'$.
%
%
%If both $F_1'$ and $F_2'$ have the same number of non-zero components as $F'$,
%then observe that by the assumption that $F_1, F_2 \subseteq F'$
%We have
%$$
%F_1' = F' - F_1 + F_2
%\qquad \text{ and } \qquad
%F_2' = F' + F_1 - F_2 \enspace ,
%$$
As $F_1 \neq F_2$, consider the smallest $i$ such that $F_1(f_i) \neq F_2(f_i)$.
If $F_1(f_i) > F_2(f_i)$ then $F_1'(f_i) < F'(f_i)$
and $F_1'(f_j) = F'(f_j)$ for $j < i$;
hence $F_1' < F'$ and we replace $F'$ with $F_1'$
(note that it could be that $F_1'$ has fewer non-zero components than $F'$, in which case the above calculations are superfluous).

If $F_1(f_i) < F_2(f_i)$ then similarly $F_2' < F'$.

We can proceed in this manner as long as there are at least $2d\log d \ell$
different mappings in $F'$.
Since $\leq$ is well-founded, the process terminates.
Let $F''$ be the final set;
it uses less than $2d\log (d \ell) = 2(n + 2|\nonterm|)\log((n + 2|\nonterm|)\ell)$ different mappings.
Then $F_0 \cup F''$, where $F_0$ are the initially chosen $|\nonterm|$ many mappings that guarantee connectedness,
we get the desired multiset: the underlying graph is connected thanks to
$F_0$ and $\sum (F_0 \cup F'') = \sum F$ by easy induction
and then condition~\eqref{eq:euler_conditions} from Lemma~\ref{lem:derivation_conditions}
holds and so by this Lemma there is a derivation corresponding to $F_0 \cup F''$.
\end{proof}

This is enough to give a proof of Theorem~\ref{thm:Parikh_image_represented}:
we can construct a formula for theory $T$ combined with quantifier-free Presburger arithmetic representing the conditions needed.
We then finally existentially quantify all intermediate variables to leave the free variables $x_1, \ldots, x_n$ counting the number of times each $\psi_i$ is satisfied.

The formula first guesses the values $\iparam$ of the parameters $\params$ for the grammar.
Then it guesses $|N| + 2d \log d \ell$ mappings
$\{0, 1, \ldots , n\} \cup \{s,t\} \times \nonterm \to \{0, 1, \ldots, \ell\}$,
where $\ell$ is the maximal length of $\Grammar$ rules
(we can represent a mapping with $n+1 + 2|\nonterm|$ integer variables ranging over $\{0, 1, \ldots, \ell\}$).
It also guesses the counts of each mapping,
yielding a multiset $F$.
For each of those mappings it guesses the rule of the grammar and
its instantiation.
It verifies the correctness of those guesses:
\begin{enumerate}
\item That $\sum F$ satisfies \eqref{eq:euler_conditions}.
\label{item:Parikh_image_existence_1}
\item The underlying graph is connected \label{item:Parikh_image_existence_3}
\item That each $f \in F$ indeed corresponds to a guessed rule $A \to \alpha$ and an instantiation $\alpha'$.
\label{item:Parikh_image_existence_2}
\end{enumerate}
If all tests are satisfied then the formula accepts, otherwise it rejects.
Notice that the first condition is a quantifier-free Presburger condition.
For the second condition, for each $f$, we need to guess an assignment to $\locals$.
(We can use a fresh copy of $\locals$ for each of the $|N| + 2d \log d \ell$ mappings).
Via a disjunction over all grammar rules, the formula guesses the corresponding rule, and verifies that the number of symbols in $\alpha$ satisfying $\psi_i$ matches $f(i)$ and similarly for the nonterminal symbols (on both sides).
To see how to do this with a polynomial-size formula, let $\alpha_1 \cdots \alpha_m$ be the right-hand side of the guessed rule.
For each $\psi_i$, introduce a 0-1 variable $c^i_j$ for each $\alpha_j$.
This variable is $0$ if $\psi_i(\alpha_j)$ does not hold and $1$ otherwise.
Then we check $f(i) = \sum_j c^i_j$.
For this, we only require Boolean combinations of the two theories.

Each free-variable $x_i$ of the formula takes the value $\sum_{f \in F} f(i)$ for all $1 \leq i \leq n$.

If there is a satisfying assignment to the formula, giving a mapping $f$, then indeed there is $w \in L_\iparam(\Grammar)$
such that $P_\Psi(w)$ equals $f$ on the first $n$ components.
We guessed the value of the parameters and $F$ corresponds to a derivation,
by Lemma~\ref{lem:derivation_conditions} and the Parikh image of the derived word
is indeed $f = \sum F$ restricted to components $1, \ldots, n$.

In the other direction, if there is $w \in L(\Grammar)$
then by definition for some $\iparam$ we have $w \in L_\iparam(\Grammar)$.
Fix a derivation of $w$ and let $F$ be the corresponding set of mappings;
$F$ satisfies the conditions from Lemma~\ref{lem:derivation_conditions},
so in particular satisfies~\ref{item:Parikh_image_existence_3}.
Then by Lemma~\ref{lem:small_alphabet} we can assume without loss of generality that there are at most $|\nonterm| + 2d \log d \ell$ different mappings in $F$.
From this we can construct assignments to the variables representing $F$ and its counts that satisfy condition~\ref{item:Parikh_image_existence_1}.
Since each $f \in F$ corresponds to a rule in a concrete derivation, it also follows that condition~\eqref{item:Parikh_image_existence_2} is also satisfied.

\subsection{Complexity}

To measure the complexity of Parikh images of parametric grammars, we consider
the problem of deciding whether $P_\Psi(L(\Grammar)) \cap \Sem{\outfmla} = \emptyset$, for
a given existential Presburger formula $\outfmla$ over variables $x_1,\ldots,x_n$.
Here, $\Sem{\outfmla}$ refers to the set of assignments $f : \{1, \ldots, n\} \to \N$
such that $f(1),\ldots,f(n)$ satisfies $F$.
Because we are able to compute a polynomial representation of the Parikh image, the complexity remains similar to $T$.

\begin{theorem}
\label{thm:Parikh_image_exists_T}
Let $T$ be solvable in complexity class $\mathcal C$.
Then the complexity of Parikh over images of parametric grammars $T$ is in $\exists \mathcal C$.
In particular, if $T$ is solvable in \textsc{P}, \np, \pspace then the Parikh images of parametric grammars are in \np, \np, \pspace, respectively.
\end{theorem}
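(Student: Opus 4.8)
The plan is to reduce the decision problem to a single satisfiability query over the combined theory and then bound the cost of that query. First I would invoke Theorem~\ref{thm:Parikh_image_represented} to compute, in polynomial time, an existential $T + \mathrm{QFPresburger}$ formula $\phi(x_1,\ldots,x_n)$ of polynomial size with $f \in P_\Psi(L(\Grammar))$ iff $\phi(f(1),\ldots,f(n))$ holds. Since $\Sem{\outfmla}$ is defined by the given existential Presburger formula $\outfmla(x_1,\ldots,x_n)$, the intersection $P_\Psi(L(\Grammar)) \cap \Sem{\outfmla}$ is nonempty precisely when the formula $\exists x_1,\ldots,x_n\,.\,\bigl(\phi(\bar x) \wedge \outfmla(\bar x)\bigr)$ is satisfiable. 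As $\phi$ and $\outfmla$ are both existential and of polynomial size, this conjunction (after pulling the existential prefix to the front) is again an existential $T + \mathrm{QFPresburger}$ formula of polynomial size. It therefore suffices to show that satisfiability of an existential $T + \mathrm{QFPresburger}$ formula lies in $\exists\mathcal C$.

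The core observation I would exploit is that the two ingredient theories live over disjoint sorts and are coupled only through Boolean structure. Inspecting the formula produced by Theorem~\ref{thm:Parikh_image_represented}, every atomic subformula is either a $T$-atom mentioning only $D$-sorted symbols (the parameters $\iparam$, the fresh local copies $\locals_i$, and the arguments of the guards and of the predicates $\psi_i$) or a Presburger atom mentioning only the $\N$-sorted variables (the mapping entries, the counts, the bridge variables $c^i_j$, and $x_1,\ldots,x_n$); crucially, no single atom mixes the two sorts, since each predicate/guard test $\psi_i(\alpha_j)$ is linked to its count only through a Boolean bridge of the form $(c^i_j = 1 \wedge \psi_i(\alpha_j)) \vee (c^i_j = 0 \wedge \neg\psi_i(\alpha_j))$. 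Consequently there are no shared terms between the theories, and Nelson--Oppen combination~\cite{NO79} degenerates: a DPLL($T$, LIA)-style decision~\cite{KS08} reduces satisfiability to (i) guessing a truth assignment $\beta$ to the polynomially many atoms that satisfies the Boolean skeleton, (ii) checking that the conjunction of the $T$-literals selected by $\beta$ is $T$-satisfiable, and (iii) checking that the conjunction of the selected Presburger literals is satisfiable. Because the sorts are disjoint and no arrangement of shared constants has to be agreed upon, the witnesses for (ii) and (iii) may be chosen independently, so the formula is satisfiable iff some such $\beta$ passes both tests.

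Finally I would account for the cost. The guess in (i) is a nondeterministic choice of polynomial size, i.e.\ the leading $\exists$. Step (iii) is existential Presburger satisfiability, which is in \np; step (ii) is a single satisfiability query for a quantifier-free $T$-formula, decidable in $\mathcal C$ by assumption. Hence the whole procedure lies in $\exists\mathcal C$. In the three concrete cases this collapses as expected: with $T$ in \textsc{P} the existential guess dominates and gives \np; with $T$ in \np a nondeterministic machine that also calls an \np-subroutine stays in \np; and with $T$ in \pspace the guess can be eliminated by Savitch's theorem, keeping the problem in \pspace. I expect the only real subtlety to be the disjoint-sort argument of the previous paragraph: it is what lets us avoid guessing equalities between shared variables that full Nelson--Oppen would otherwise require, and it hinges on the specific shape of the formula from Theorem~\ref{thm:Parikh_image_represented}, in which counting (Presburger) and membership in the predicates and guards ($T$) communicate purely through the Boolean bridge variables $c^i_j$.
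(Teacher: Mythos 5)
Your proposal is correct, and it shares the paper's first move: build $\phi$ from Theorem~\ref{thm:Parikh_image_represented}, conjoin it with $\outfmla$, and reduce everything to satisfiability of a polynomial-size existential $T+\mathrm{QFPresburger}$ formula. Where you diverge is in how that query is discharged. The paper argues semantically about the specific formula: fixing the set of distinct mappings, it reads condition~\ref{item:Parikh_image_existence_1} together with $\outfmla$ as a system of linear equations over the $|\nonterm|+2d\log d\ell$ counts $\{n_f\}$ with polynomial-size constants, invokes the standard small-solution bound to conclude these counts may be taken of at most exponential magnitude (hence polynomially many bits), and observes that the rule/instantiation checks and the connectivity check can be handled separately; guess-and-check then lands in \np\ (resp.\ \pspace). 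You instead prove the more general, purely syntactic statement that satisfiability of \emph{any} polynomial-size existential $T+\mathrm{QFPresburger}$ formula is in $\exists\mathcal C$: guess a truth assignment to the atoms, verify the Boolean skeleton, and check the two pure conjunctions independently --- the Presburger one via \np-membership of existential Presburger arithmetic (which internally rests on the same small-solution bound the paper makes explicit), the $T$-one via the assumed $\mathcal C$ procedure. Your route is more modular and reusable, being insensitive to the details of the construction behind Theorem~\ref{thm:Parikh_image_represented}; the paper's makes the witness structure (which mappings, which counts) concrete. Two minor notes: the purity of atoms that you call the ``only real subtlety'' in fact holds by definition in this paper, since $T_1+T_2$ is \emph{defined} as a Boolean combination of pure quantifier-free formulas of either theory and Theorem~\ref{thm:Parikh_image_represented} asserts $\phi$ has this form (though your inspection correctly identifies the $c^i_j$ variables as the only coupling); and for the \pspace\ case, Savitch's theorem is heavier than needed --- a \pspace\ machine can simply enumerate all polynomial-size witnesses in place and run the \pspace\ check on each.
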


First, we construct $\phi(x_1, \ldots, x_n)$ as in Theorem~\ref{thm:Parikh_image_represented}.
We then need to find an assignment to $x_1, \ldots, x_n$ such that
$\phi(x_1, \ldots, x_n) \land \outfmla(x_1, \ldots, x_n)$ holds.

If there is $w \in P_\Psi(L(\Grammar)) \cap \Sem{\outfmla}$,
fix a derivation of $w$ and let $F$ be the corresponding set of mappings;
$F$ satisfies \eqref{eq:euler_conditions} and $\outfmla$.
For each different $f \in F$ consider its count $n_f$ in $F$.
Consider conditions~\ref{item:Parikh_image_existence_1} and \ref{item:Parikh_image_existence_2} and $\outfmla$.
We claim that if they are satisfied, then they are satisfied for counts $\{n_f\}_{f \in F}$
that are at most exponential.
Given the set of different mappings in $F$,
condition~\eqref{item:Parikh_image_existence_1} and
$\outfmla$
can be interpreted as a system of linear equations on $\{n_f\}_{f \in F}$,
and condition~\eqref{item:Parikh_image_existence_2} requires a guess
of the rule and its instantiation, which can be done separately.
The check whether the underlying graph is connected
can also be done separately, as it does not depend on $\{n_f\}_{f \in F}$.

Then the system of equations is over $|\nonterm| + 2 d \log d \ell$ variables
and with polynomial-size constants.
By standard results, if it has a solution, it has one that is at most exponential-size.
Thus, the integer variables have solutions encodable via a polynomial number of bits.
If $T$ is solvable in P or NP, then we can decide
$P_\Psi(L(\Grammar)) \cap \Sem{\outfmla} \neq \emptyset$
in NP\@.
If $T$ is PSPACE, then the problem is also PSPACE.

\section{Abstraction of String Constraints} \label{sec:string}

In this section we outline an application of the symbolic Parikh image abstraction to solving string constraints.
String constraints arise naturally during symbolic execution analysis of programs, where the string data type is ubiquitous.
During symbolic execution, the potential paths of the program are explored and the constraints on the variables are collected.
For example, the positive branch of an if-statement with the condition $\svar = \cha\chb$ will result in the condition $\svar = \cha\chb$ being added to the collection of constraints on the path.
If an error state is discovered, a check whether the collection of constraints on the path to the state are satisfiable is made.
That is, is there some assignment to the variables that would cause this path to be executed?
Because many of these checks will be made during analysis, it is important that the string constraint solver is efficient.

In this setting, we consider constraints that may contain string- and integer-valued variables, Presburger arithmetic, and a number of common string operations such as concatenation and containment in a regular language.

To help improve solver efficiency, we may consider using the Parikh image abstraction to help identify unsatisfiable constraints and avoid a potentially costly proof search by the solver.
The goal of the abstraction is to overapproximate the satisfiable instances.
This means an unsatisfiable abstracted instance implies the original instance was also unsatisfiable.
That is, we allow false positives, but not false negatives.
We may hope that the abstracted constraint -- which no longer includes string variables but integer variables representing the Parikh image of the strings -- can be solved more quickly than the unabstracted constraint.

In the next sections, we introduce the constraint language, describe our abstraction of the input constraints, then describe a modified representation of the Parikh image of a regular expression.

\subsection{The Constraint Language}

We focus on a subset of the QF\_SLIA theory of SMT-LIB 2.6~\cite{SMTLIB2}.
That is, string constraints with linear integer arithmetic.
The constraint language is explained below.
Because we focus on SMT-LIB, our constraint language only has constraints that a string is in a regular language.
Of course, we can easily extend this language to support containment checks in the language of a parametric context-free grammar.

Note, we assume all formulas are given in negation normal form.
This is because, as explained below, abstracting negative equations can lead to false negatives.
Our constraint language contains the following components.

\begin{itemize}
\item
    String-valued expressions
    \[
        \sexp, \sexp_1, \sexp_2 \defeq
            \word \synalt
            \svar \synalt
            \sconcat(\sexp_1, \sexp_2) \synalt
            \sreplace(\sexp, \sexp_1, \sexp_2) \synalt
            \ssubstr(\sexp, \iexp_1, \iexp_2) \ .
    \]
    That is, string-valued expressions can be
        string literals $\word$,
        string variables $\svar$,
        the concatenation $\sconcat$ of two string-valued expressions $\sexp_1$ and $\sexp_2$,
        the result of replacing the first occurrence of $\sexp_1$ in $\sexp$ with $\sexp_2$, or
        the substring of $\sexp$ from position $\iexp_1$ of length $\iexp_2$,  for integer expressions $\iexp_1$ and $\iexp_2$.

\item
    Boolean-valued string expressions
    \[
        \begin{array}{c}
            \sinre(\sexp, \regex) \synalt
            \neg \sinre(\sexp, \regex) \synalt
            \sexp_1 = \sexp_2 \synalt \\
            \scontains(\sexp_1, \sexp_2) \synalt
            \sprefixof(\sexp_1, \sexp_2) \synalt
            \ssuffixof(\sexp_1, \sexp_2) \ .
        \end{array}
    \]
    That is, the Boolean-valued expressions can be
        the test that $\sexp$ is (or is not) in the regular expression $\regex$,
        that $\sexp_1$ equals $\sexp_2$,
        $\sexp_1$ contains the contiguous substring $\sexp_2$,
        $\sexp_1$ is a prefix of $\sexp_2$, or
        $\sexp_1$ is a suffix of $\sexp_2$.
    We support the regular expressions supported by Z3 which are detailed below.
    Boolean expressions may appear in any positive Boolean combination.

\item
    The integer-valued expressions
    \[
        \slen(\sexp)
    \]
    for string expression $\sexp$.
    Note, other integer valued expressions (that do not include strings) that are supported by Z3 are also permitted.
    In particular, quantifier-free Presburger arithmetic (or linear integer arithmetic).

\item
    Regular expressions have a standard interpretation and are
    \[
        \regex, \regex_1, \regex_2 \defeq
            \cha \synalt
            \rrange(\cha, \cha') \synalt
            \regex_1 \regex_2 \synalt
            \regex_1 \lor \regex_2 \synalt
            \regex_1 \land \regex_2 \synalt
            \neg \regex \synalt
            \regex^\ast \synalt
            \regex^+ \synalt
            \regex^? \synalt
            \regex^{l, h} \synalt
            \emptyset \synalt
            \Sigma
    \]
    where
        $\cha$ is a concrete character;
        $\rrange(\cha, \cha')$ is any character between $\cha$ and $\cha'$ (inclusive);
        $\regex_1 \regex_2$ is concatenation;
        $\regex_1 \lor \regex_2$, $\regex_1 \land \regex_2$, and $\neg \regex$ are Boolean operations;
        $\regex^\ast$ is 0 or more consecutive matches of $\regex$;
        $\regex^+$ is 1 or more consecutive matches of $\regex$;
        $\regex^?$ is 0 or 1 matches of $\regex$;
        $\regex^{l, h}$ is between $l$ and $h$ matches of $\regex$ where $l$ is a nonnegative integer and $h$ is a nonnegative integer or infinity;
        $\emptyset$ matches no word; and
        $\Sigma$ matches any character.
\end{itemize}

\subsection{Abstraction of Input}

Let $\preds := \pred_1, \ldots, \pred_n$ be the predicates of the Parikh image.
We assume that $\pred_1 = \top$ as it is convenient for encoding the length of a string.
We will take a vector view of the Parikh image mappings $f$.
That is, we will denote $f$ as a vector $(f_1, \ldots, f_n)$ where $f_\idxi = f(\idxi)$.
This means we can refer to a vector of variables or expressions that represent a Parikh image.

Let $\anyparikh$ be the Parikh image of the regular expression $\Sigma^\ast$ that matches any string.
We use $\anyparikh$ to assert that any vector of expressions $\vec{\scount} = (\scount_1, \ldots, \scount_n)$ encodes the Parikh image of a string.
E.g., for predicates $\top, \bot$, the counts $0, 1$ are not a valid Parikh image and $\anyparikh(0, 1)$ does not hold.

Our approach creates an overapproximation abstraction of the input SMT-LIB formula.
Each string expression $\sexp$ is abstracted as a vector $\vec{\scount}^\sexp$ where each component $\scount^\sexp_\idxi$ is an expression counting the number of times a character satisfies $\pred_i$ in the value of $\sexp$.
Similarly, regular expressions are abstracted as a formula $\varphi$ recognising the Parikh image of the expression.

This means all string variables $\svar$ are abstracted with a vector $\vec{\svar}^\preds = (\svar^\preds_1, \ldots, \svar^\preds_n)$.
The variable $\svar^\preds_\idxi$ counts the number of occurrences of characters matching the predicate $\pred_\idxi$ in the value assigned to $\svar$.
We assert $\anyparikh(\vec{\svar}^\preds)$ for each abstracted string variable.

Each string expression $\sexp$ is abstracted recursively.
The translation may introduce new variables, for which additional side conditions need to be asserted.
The side conditions are collected as a side-effect of the translation.
Pseudo-code is given in Algorithm~\ref{alg:sexp-translation} and explained below.
The $\Assert$ function adds the assertion to the output formula.

The abstraction of $\word$ directly counts the number of characters of $\word$ that satisfy each predicate.
Since $\word$ is a concrete string, this is a straightforward character-by-character check against each predicate.
For a string variable $\svar$ we use the abstraction variables $\vec{\svar}^\preds$.

The $\sreplace$ operation is the most subtle and reflects the semantics of $\sreplace$ in SMT-LIB.
We introduce fresh variables $\vec{\varsvar}^\preds$ to store the abstracted result of the replace.
There are three possible outcomes.
If $\sexp_2$ does not appear in $\sexp_1$, then $\sexp_1$ is unchanged and we assert $\vec{\varsvar}^\preds = \vec{\scount}^1$ where $\vec{\scount}^1$ is the abstraction of $\sexp_1$.
If $\sexp_2$ appears in $\sexp_1$, then the first instance of $\sexp_2$ is removed from $\sexp_1$ and replaced with $\sexp_3$.
The final case is when $\sexp_2$ is the empty string.
In this case, the SMT-LIB semantics is that $\sexp_3$ is prepended onto $\sexp_1$.
This is encoded by $\vec{\varsvar}^\preds = \vec{\scount}^1 + \vec{\scount}^3$.
Notice, since the values of $\vec{\varsvar}^\preds$ are derived from $\vec{\scount}^1$, $\vec{\scount}^2$, and $\vec{\scount}^3$, we do not need to assert $\anyparikh$ as they are already implied.

Finally, for $\ssubstr$ we again introduce fresh variables $\vec{\varsvar}^\preds$.
We ignore the integer arguments $\iexp_1$ and $\iexp_2$ and simply require that $\vec{\varsvar}^\preds$ is (point-wise) contained within $\vec{\scount}^1$.
We assert $\anyparikh$ to ensure the values represent a string.

\begin{algorithm}
\caption{\label{alg:sexp-translation} $\AbstractSEXP(\sexp)$}
\If{$\sexp = \word$}{
    \Return $\vec{\scount}$ where each $\scount_\idxi$ is the count of characters satisfying $\pred_\idxi$
}
\If{$\sexp = \svar$}{
    \Return $\vec{\svar}^\preds$
}
\If{$\sexp = \sconcat(\sexp_1, \sexp_2)$}{
    \Return{$\AbstractSEXP(\sexp_1) + \AbstractSEXP(\sexp_2)$}
}
\If{$\sexp = \sreplace(\sexp_1, \sexp_2, \sexp_3)$}{
    $\vec{\scount}^1 \gets \AbstractSEXP(\sexp_1)$;
    $\vec{\scount}^2 \gets \AbstractSEXP(\sexp_2)$;
    $\vec{\scount}^3 \gets \AbstractSEXP(\sexp_3)$\;
    Let $\vec{\varsvar}^\preds$ be fresh integer variables\;
    $\Assert(
        \vec{\varsvar}^\preds = \vec{\scount}^1 \lor
        \vec{\varsvar}^\preds =
             \vec{\scount}^1 - \vec{\scount}^2 + \vec{\scount}^3 \lor
        (\vec{\scount}^2 = 0
             \land \vec{\varsvar}^\preds = \vec{\scount}^1 + \vec{\scount}^3)
    )$\;
    \Return $\vec{\varsvar}^\preds$
}
\If{$\sexp = \ssubstr(\sexp_1, \iexp_1, \iexp_2)$}{
    $\vec{\scount}^1 \gets \AbstractSEXP(\sexp_1)$\;
    Let $\vec{\varsvar}^\preds$ be fresh integer variables\;
    $\Assert(\vec{\varsvar}^\preds \leq \vec{\scount}^1)$;
    $\Assert(\anyparikh(\vec{\varsvar}^\preds))$\;
    \Return $\vec{\varsvar}^\preds$
}
\end{algorithm}

We can then abstract Boolean expressions contained in the input.
We first convert each assertion to negation normal form.
This is because we can abstract, for example, $\svar_1 = \svar_2$ but not $\svar_1 \neq \svar_2$.\footnote{
    Because the only satisfying assignments to $\svar_1$ and $\svar_2$ may happen to have the same Parikh image.
    In this case we cannot assert that the Parikh image of $\svar_1$ is not equal to the Parikh image of $\svar_2$, making a satisfiable formula unsatisfiable.
}
We then substitute maximal subexpressions according to the following scheme.

\begin{itemize}
\item
    $\neg \sinre(\sexp, \regex)$
    is replaced by
    $\varphi_{\neg\regex}(\AbstractSEXP(\sexp))$
    where $\varphi_{\neg\regex}$ is the Parikh image of the complement of the language accepted by $\regex$ (discussed in the next section).

\item
    $\sinre(\sexp, \regex)$
    is replaced by
    $\varphi_\regex(\AbstractSEXP(\sexp))$
    where $\varphi_\regex$ is the Parikh image of the language accepted by $\regex$.

\item
    $\sexp_1 = \sexp_2$
    is replaced by
    $\AbstractSEXP(\sexp_1) = \AbstractSEXP(\sexp_2)$.

\item
    $\scontains(\sexp_1, \sexp_2)$
    is replaced by
    $\AbstractSEXP(\sexp_1) \geq \AbstractSEXP(\sexp_2)$.

\item
    $\sprefixof(\sexp_1, \sexp_2)$ and $\ssuffixof(\sexp_1, \sexp_2)$
    are replaced by
    $\AbstractSEXP(\sexp_1) \leq \AbstractSEXP(\sexp_2)$.

\item
    $\slen(\sexp)$
    is replaced by
    $\scount_1$
    where
    $\vec{\scount}$ is the result of $\AbstractSEXP(\sexp)$.
    Recall $\pred_1$ was assumed to be $\top$, so $\scount_1$ gives the length of the string.
\end{itemize}

\subsection{Construction of the Parikh Image}

The abstraction above uses $\varphi_\regex$, which is the symbolic Parikh image abstraction of the regular expression $\regex$.
We describe how we encode $\varphi_\regex$ as an SMT-LIB formula.
Our encoding goes via symbolic automata and is slightly different to the proof of Theorem~\ref{thm:Parikh_image_exists_T}.
The alternative encoding is driven by the concrete transitions of the automaton representing $\regex$, rather than a nondeterministically chosen subset of the states and transitions.
This is because when transitions are represented by variables, many clauses need to contain a disjunction over all possible instantiations of the variables, causing an undesirable polynomial blow-up.
This new encoding may require $2sn\log(n)$ character variables, where $s$ is the number of transitions of the symbolic automaton.
This is theoretically worse than the $2(n + 2|Q|)\log(n + 2|Q|)$ characters needed in the proof of Theorem~\ref{thm:Parikh_image_exists_T}.
In the next section we describe some mitigating optimisations.

Given a regex $\regex$, we build an equivalent symbolic automaton $\aut$.
We briefly recall the definition we use of a symbolic automaton. It is equivalent to parametric context-free grammars where all productions are of the form $(A, \alpha, \varphi(\curr))$ (i.e.\ no parameters) and $\alpha = \curr\ B$ or $\alpha = \epsilon$.

For a given theory $T$, a symbolic automaton is a tuple
$(\controls, \transrel, q_0, F)$
where
    $\controls$ is a finite set of states,
    $\transrel \subseteq \controls \times T(\curr) \times \controls$ is the transition relation,
    $q_0$ is the initial state, and
    $F \subseteq \controls$ is the set of final states.
A run over a word $\cha_1 \ldots \cha_\ell$ is a sequence of transitions
$(q_0, \varphi_1, q_1)(q_1, \varphi_2, q_2)\ldots(q_{\ell-1}, \varphi_\ell, q_\ell)$
where $T \models \varphi_\idxi(\cha_\idxi)$ for all $\idxi$ and $q_\ell \in F$.

Using predicates that are Boolean combinations\footnote{
    Boolean combinations may arise during, e.g., product constructions and complementation.
} of
$\varphi(\curr) := \curr = \cha$
and of
$\varphi(\curr) := \sinre(\curr, \rrange(\cha, \cha'))$,
standard constructions can be used to obtain a symbolic automaton that is equivalent to a regular expression as defined above.
From $\aut$ we build a formula $\varphi(\scount_1, \ldots, \scount_n)$, where the free variable $\scount_\idxi$ indicates the number of characters satisfying predicate $\pred_\idxi$.

Let $\labels(\aut)$ be the set of predicates appearing on the transitions of $\aut$.
We use $\tpred$ to denote these predicates to avoid confusion with the output predicates $\pred_1, \ldots, \pred_n$.
For a predicate $\tpred \in \labels(\aut)$, let $\translabelled(\tpred)$ be the set of transitions of $\aut$ labelled by the predicate $\tpred$.
Let $\labels(\aut)$ be $\tpred_1, \ldots, \tpred_m$.
Using a corrected version of Verma et al.~\cite{VermaSS05,B06}, we can build a linear-sized existential Presburger formula from $\aut$ with free variables $\scount_{\tpred_1}, \ldots, \scount_{\tpred_m}$ where $\scount_{\tpred}$ indicates how many transitions labelled $\tpred$ appear in a run of $\aut$.
That is, given a run $\run = \atran_1 \ldots \atran_\ell$ of $\aut$ and a transition $\atran$, let $\countof{\atran}{\run}$ be the number of occurrences of $\atran$ in $\run$.
Then, for a predicate $\tpred$, let
$\countof{\tpred}{\run} = \sum\limits_{\atran \in \translabelled(\tpred)} \countof{\atran}{\run}$.
If $\varphi(\scount_{\tpred_1}, \ldots, \scount_{\tpred_m})$ holds, then there is a run $\run$ of $\aut$ with $\countof{\tpred_\idxi}{\run} = \scount_{\tpred_\idxi}$ for all $\idxi$.

We split the target counts $\scount_1, \ldots, \scount_n$ of the output predicates $\pred_1, \ldots, \pred_n$ between the counts contributed by each of the transition labels in $\labels(\aut)$.
To do this we introduce variables $\scount^\tpred_\idxi$ indicating that the predicate $\tpred$ accounts for $\scount^\tpred_\idxi$ characters satisfying $\pred_\idxi$.
That is for all $\idxi$
\[
    \scount_\idxi = \sum\limits_{\tpred \in \labels(\aut)} \scount^\tpred_\idxi
    \ .
\]

We then check, for each $\tpred \in \labels(\aut)$ whether there exists a sequence of characters $\cha_1, \ldots, \cha_{\ell'}$ such that $\ell' = \scount_{\tpred}$ (recall $\scount_{\tpred}$ is the number of times $\tpred$ appeared as the label of a transition in the run).
Moreover, for each output predicate $\pred_\idxi$, we require that $\scount^\tpred_\idxi$ is the number of characters $\cha$ in $\cha_1, \ldots, \cha_{\ell'}$ such that $\pred_\idxi(\cha)$ holds.

By Lemma~\ref{lem:subset_sums_equal}, the sequence $\cha_1, \ldots, \cha_{\ell'}$ needs at most $2n\log(n)$ different characters.
We introduce character variables $\cha^\tpred_\idxi$ for $1 \leq \idxi \leq 2n\log(n)$ and character count variables $\chacount^\tpred_\idxi$.
That is, character $\cha^\tpred_\idxi$ appears $\chacount^\tpred_\idxi$ times in $\cha_1,  \ldots, \cha_{\ell'}$.
Additionally, we naturally require that $\cha_\idxi$ satisfies $\tpred$.

All together, we assert (using $\varphi$) that
\begin{itemize}
\item
    the counts for each label ($\scount_\tpred$) is in the Parikh image of the automaton (using $\varphi$),
\item
    the total count for each output predicate ($\scount_\idxi$) is the sum of counts of the labels satisfying the predicates (using $\varphi_\mathrm{sum}$), and
\item
    for each label, we assert (using $\varphi_\mathrm{labels}$) that
    \begin{itemize}
    \item
        the count for that label is spread across $2n\log(n)$ characters (using $\varphi_\mathrm{lcounts}$, recalling $\cha^\tpred_\idxi$ appears $\chacount^\tpred_\idxj$ times),
    \item
        each character satisfies the label (using $\varphi_\mathrm{preds}$), and
    \item
        the counts of the labels satisfying a predicate is correct (using $\varphi_\mathrm{pcounts}$).
    \end{itemize}
\end{itemize}
That is, recalling $\labels(\aut) = \tpred_1, \ldots, \tpred_m$, we define $\varphi_\regex(\scount_1, \ldots, \scount_n)$ to first use the non-symbolic Parikh image
$\varphi(\scount_{\tpred_1}, \ldots, \scount_{\tpred_m})$
to calculate how many time each transition label can occur. Then it uses
$\varphi_\mathrm{sum}$
to assert that the total number of times $\pred_\idxi$ is satisfied is distributed across each transition label.
Finally
$\varphi_\mathrm{labels}$
bridges between the number of labels, the number of characters satisfying those labels, and the number of times each output predicate $\pred_\idxi$ is satisfied.
That is, $\varphi_\regex(\scount_1, \ldots, \scount_n) :=$
\[
    \exists_{\substack{
        \tpred \in \labels(\aut) \\
        1 \leq \idxi \leq n \\
        1 \leq \idxj \leq m
    }}
        \scount_\tpred,
        \scount^\tpred_\idxi,
        \cha^\tpred_\idxj,
        \chacount^\tpred_\idxj \ .\ %
        \varphi(\scount_{\tpred_1}, \ldots, \scount_{\tpred_m})
        \ \land\ %
        \varphi_\mathrm{labels}
        \ \land\ %
        \varphi_\mathrm{sum}
\]
where
\[
    \begin{array}{rcll}
        \varphi_\mathrm{sum} &:=&
            \bigwedge\limits_{1 \leq \idxi \leq n} \left(
                \scount_\idxi =
                \sum\limits_{\tpred \in \labels(\aut)}
                     \scount^\tpred_\idxi
            \right)
        & \mathexpl{5cm}{output counts are split between the transition labels}
        \\
        \varphi_\mathrm{labels} &:=&
        \bigwedge\limits_{\tpred \in \labels(\aut)} \left(
            \varphi_\mathrm{lcounts}
            \land %
            \varphi_\mathrm{preds}
            \land %
            \varphi_\mathrm{pcounts}
        \right)
        & \mathexpl{4.5cm}{correctness of the labels, see below}
        \\
        \varphi_\mathrm{lcounts} &:=&
            \scount_\tpred =
            \sum\limits_{1 \leq \idxj \leq 2n\log(n)}
                \chacount^\tpred_\idxj
        & \mathexpl{5cm}{the count of transitions labelled by $\tpred$ is split between $2n\log(n)$ different characters, each appearing $\chacount^\tpred_\idxj$ times}
        \vspace{1ex}
        \\
        \varphi_\mathrm{preds} &:=&
            \bigwedge\limits_{1 \leq \idxj \leq 2n\log(n)}
                \tpred(\cha^\tpred_\idxj)
        & \mathexpl{5cm}{each of the $2n\log(n)$ characters satisfies $\tpred$}
        \\
        \varphi_\mathrm{pcounts} &:=&
            \bigwedge\limits_{1 \leq \idxi \leq n} \left(
                \scount^\tpred_\idxi =
                \sum\limits_{1 \leq \idxj \leq 2n\log(n)}
                    \chacount^\tpred_\idxj
                    \times
                    \pred^{0,1}_\idxi(\cha^\tpred_\idxj)
            \right)
        & \mathexpl{5cm}{the number of times $\pred_\idxi$ is satisfied by transitions labelled $\tpred$ is the sum of the $2n\log(n)$ character counts that satisfy $\pred_\idxi$}
    \end{array}
\]
and $\pred^{0,1}(\cha)$ is $1$ when $\cha$ satisfies $\pred$ and $0$ otherwise.

\subsection{The Overapproximation}

We put everything together to gain an overapproximation of an SMT-LIB formula containing string and integer expressions.
We abstract each string variable and expression as a sequence of integer variables -- one for each output predicate.
We replace string expressions with their abstracted equivalent, which may include Parikh images of symbolic automata.
If the abstracted constraint is unsatisfiable, we conclude that the original constraint was also unsatisfiable, and avoid reasoning over the string data type.
We describe our experiments in the next section.

\subsection{Extensions}

Here we discuss possible extensions of our constraint language and technique.
First, we could additionally allow context-free constraints, not just regular
expressions. This is allowed already by some string solvers (e.g.\ TRAU
\cite{TRAU,TRAU-pldi}), but this is not yet supported by SMT-LIB. The technique
in this section easily extends to context-free constraints since our general
results in Section \ref{sec:parikh} concerns symbolic context-free grammars.
This can also be extended to symbolic pushdown automata with the restriction
that the number of push symbols in a transition is small. Second, using
parametric grammars, we could
support a currently ``forward-looking feature'' of string theory inside SMT-LIB
2.6, namely, the operator \texttt{to\_re}, which converts a string (possibly
with string variables) into a regular language. This results in a highly
expressive language, which may capture word equations with Kleene stars.
Existing solvers and benchmarks 
only handle the use cases of \texttt{to\_re}, to which the input contains only
string constants. Using parameters, we may capture constraints of the form
$x \in y^*$, where $x$ is a string variable and $y$ is a ``character variable''
(meaning, string variable of length 1). This can be expressed as follows in
SMT-LIB 2.6:
\begin{verbatim}
        (declare-fun x () String)
        (declare-fun w () String)
        (assert (str.in_re w (re.* (str.to_re x) ) ) )
        (assert (str.len x 1))
\end{verbatim}
Third, we
could also allow other effective boolean algebras, and consider instead
\emph{sequence theories}. Although such an extension is partly supported by
leading SMT-solvers like Z3 \cite{Z3} and CVC5 \cite{CVC5}, there is as yet
a standard logic and file format for sequence theories. In addition, the
decidability of such theories has only been very recently studied
\cite{JLMR23}, whereby the quantifier-free fragment consisting of sequence
equational constraints (i.e. concatenation of sequence variables and constants)
and regular constraints (as parametric symbolic automata) is shown to be
reducible to the case of finite alphabet, but incurring an exponential blow-up
in the alphabet size. An example of such a constraint over LIA is
\[
    yz = zy \wedge y \in ([x \equiv_6 p])^+ \wedge z \in ([x \equiv_7 p^+]),
\]
which has solutions $y, z \mapsto \Z^*$, which satisfy the equation $yz = zy$
and that $y$ (resp. $z$) is a sequence of numbers that are $p$ modulo 6 (resp.
$p$ modulo 7), for some $p \in \Z$. Our results allow us to also analyze such
constraints by similar approach outlined above for string constraints, even when
the sequence constraints are additionally extended with other predicates
that we permit for string constraints (e.g. length constraints, contains, etc.)
and symbolic context-free grammars.

\section{Implementation} \label{sec:experiments}

We implemented our approach described in Section~\ref{sec:string} in C++.
We used the Z3~\cite{Z3} library to parse and represent SMT-LIB formulas.
We supported symbolic regular expressions by adapting the symbolic automata code and translations from the Z3 codebase.
In the next sections we describe the optimisations we have implemented, the benchmarks used for testing, and then finally our results and analysis.

\subsection{Optimisations}

We improve the performance of the tool with two optimisations.
The first reduces the number of characters $\cha^\tpred_\idxi$ required for each transition label, the second helps restrict the search space of Z3 when solving the final abstracted constraints.

We remark that our implementation allows predicates
\[
     \pred(\curr) := (\curr = \cha)
     \qquad\text{ and }\qquad
     \pred(\curr) := \sinre(\curr, \rrange(\cha, \cha')) \ .
\]
In our optimisations, we do not exploit this interval representation.
This means our optimisations apply to theories other than strings.
In a dedicated string solver it would be possible to use well-known optimisations for character intervals to produce smaller formulas.

\subsubsection{Reducing the Characters per Transition Label}

In the encoding above, each transition label requires $2n\log(n)$ different character variables.
However, consider the transition predicate $\tpred(\curr) := (\curr = a)$ that asserts that the character on the transition is the `a' character.
Clearly there is only one character that can satisfy $\tpred$ and $2n\log(n)$ characters are not needed.

Similarly, if the only output predicate were $\pred(\curr) := \top$, then all $2n\log(n)$ characters contribute the same vector $(1)$.
In this case also only one character is required.

Using these observations, we approximate the number of characters than can satisfy $\tpred$ while having pairwise different profiles with respect to the output predicates they satisfy.
To do this, we place the output predicates into ``buckets''.
Initially, one may suppose that each predicate is in its own bucket
\[
    \{\pred_1\}, \ldots, \{\pred_n\} \ .
\]
Supposing each character can either satisfy or not satisfy a predicate, a naive upper bound on the number of possible characters with different profiles is $2 \times \cdots \times 2 = 2^n$.

However, suppose the first two output predicates $\pred_1$ and $\pred_2$ were such that there is no character $\cha$ such that $\pred_1(\cha) \land \pred_2(\cha)$ holds.
That is, a character either satisfies $\pred_1$ or $\pred_2$ but never both.
This gives three possibilities ($\cha$ satisfies $\pred_1$, $\cha$ satisfies $\pred_2$, or $\cha$ satisfies neither) instead of the naive upper bound $2^2 = 4$.
In fact, this condition can be tightened: we only need that there is no character satisfying the transition predicate $\tpred$ that can simultaneously satisfy $\pred_1$ and $\pred_2$.

We can extend this to multiple predicates. If $\pred_1$, \ldots, $\pred_{n'}$ are mutually exclusive (i.e.\ any value $\cha$ can only satisfy at most one of the predicates), then there are $n' + 1$ possibilities instead of $2^{n'}$.

When we identify such situations, we can replace the buckets
$\{\pred_1\}, \ldots, \{\pred_{n'}\}$
with a single bucket
$\{\pred_1, \ldots, \pred_{n'}\}$.

Supposing we are able to group the output predicates into buckets $\bucket_1, \ldots, \bucket_{n'}$.
The number of possible vectors with respect to the predicates in a bucket $\bucket$ is the size of the bucket, plus one if it's possible to simultaneous satisfy $\tpred$ and not satisfy any of the predicates in the bucket.
That is, for each bucket, let
\[
    \countof{\tpred}{\bucket} :=
    \begin{cases}
        |\bucket| + 1 & \text{
            if $\tpred(\curr) \land \bigwedge\limits_{\pred \in \bucket} \neg \pred(\curr)$
            is satisfiable
        } \\
        |\bucket| & \text{otherwise.}
    \end{cases}
\]
Our approximation of the upper bound on the number of characters for buckets $\bucket_1, \ldots, \bucket_{n'}$ is then
\[
    \countof{\tpred}{\bucket_1}
    \times \cdots \times
    \countof{\tpred}{\bucket_{n'}} \ .
\]
If this value is less than $2n\log(n)$, we use it instead of $2n\log(n)$ for the characters associated with the transition label $\tpred$ in the encoding above.

To compute the buckets $\bucket_1, \ldots, \bucket_{n'}$ we use Algorithm~\ref{alg:bucket-comp} which is a simple greedy approach to allocating label predicates to buckets.
Note, the ``continue'' keyword jumps to the next iteration of the for loop, so a new bucket is only created if a predicate overlaps with some predicate in all buckets computed so far.

\begin{algorithm}
\caption{\label{alg:bucket-comp} $\ComputeBuckets(\tpred, \{\pred_1, \ldots, \pred_n\})$}
    $\bucketlist \gets \varepsilon$\;
    \For{$\pred$ in $\{\pred_1, \ldots, \pred_n\}$}{
        \For{$\bucket$ in $\bucketlist$}{
            \If{
                $\tpred(\curr) \land \pred(\curr) \land \pred'(\curr)$
                unsatisfiable for all $\pred'$ in $\bucket$
            }{
                $\bucket \gets \bucket \cup \{\pred\}$\;
                continue
            }
        }
        $\bucketlist \gets \bucketlist, \{\pred\}$
    }
    \Return $\bucketlist$
\end{algorithm}

\subsubsection{Restricting the Search Space} \label{sec:restrictsearch}

Our second optimisation is to help Z3 to determine the satisfiability of an abstracted formula.
Suppose for some $\tpred$ we have character variables $\cha^\tpred_1, \ldots, \cha^\tpred_m$.
Suppose further that the solver has managed to determine that the assignment $\cha_1, \ldots, \cha_m$ cannot lead to a satisfying assignment.
It is clear that any permutation of $\cha_1, \ldots, \cha_m$ also cannot lead to a satisfying assignment.
However, without sophisticated inference, the solver needs to repeat the proof for all permutations.
We extend our encoding to eliminate permutations as much as possible.

First, we assume the characters have a linear order $<$.
That is, we can assert $\cha < \cha'$.
In our implementation we represent characters with integers, so such an ordering is readily available.
This means we can add the following constraint to our formula to eliminate permutations.
\[
    \bigwedge\limits_{\tpred \in \labels(\aut)}
        \bigwedge\limits_{1 \leq \idxi < 2n\log(n)}
            \cha^\tpred_\idxi < \cha^\tpred_{\idxi + 1}
\]

We can go a little further an also enforce that characters have different profiles with respect to the satisfaction of the output predicates.
We enforce this with the following constraint.
\[
    \bigwedge\limits_{\tpred \in \labels(\aut)}
        \bigwedge\limits_{1 \leq \idxi < 2n\log(n)}
            \bigvee\limits_{1 \leq \idxj \leq n}
                \pred_\idxj(\cha^\tpred_\idxi) \neq \pred_\idxj(\cha^\tpred_{\idxi + 1})
\]
Notice that we could have enforced this constraint for each pair of characters $\cha_\idxi$, $\cha_{\idxi'}$ for $1 \leq \idxi \neq \idxi' \leq 2n\log(n)$.
However, this would have required a much larger formula.

\subsection{Experimental Results}

Our implementation is written in \cpp\ and uses Z3 4.12.1.
Z3 is used to read SMT-LIB files and its data structures are used to represent the fomulas.
Z3 is also used as the backend solver for the produced constraints.
Our implementation of symbolic automata is a slightly adapted version of the internal Z3 code.\footnote{
    Specifically, we changed the representation to use transition sets instead of vectors to avoid transition duplication.  We also used sets instead of vectors during minterm calculation when complementing automata to avoid multiple copies of the same predicate. Other minor changes include using Z3's push/pop feature instead of reset, and providing some extra convenience functions.
}
We represent characters as Z3 integers, using their unsigned character codes.
The implementation is available via its online repository~\cite{SymParikhGit} and as an artifact with a disk image on Zenodo~\cite{ZenodoArtifact}.

Our tool provides two methods for selecting the predicates to use in the Parikh image.
In the default mode, the predicates are those appearing on the transitions of the symbolic automata constructed when parsing the input regular expressions.
We only select those predicates that are of the form
$\varphi(\curr) := \curr = \cha$
or
$\varphi(\curr) := \sinre(\curr, \rrange(\cha, \cha'))$.
In the second mode, we additionally take predicates of the form
$\varphi(\curr) := \curr = \cha$
from the string literals appearing in the string equations.
That is, if $\cha_1 \ldots \cha_n$ appears as a string literal in a string equation, we introduce the predicates
$\varphi_\idxi(\curr) := \curr = \cha_\idxi$
for all $1 \leq \idxi \leq n$.

We describe the benchmarks used before giving the results.

\subsubsection{Benchmark Sets}

We used several benchmark sets from SMTCOMP 2022~\cite{SMTCOMP2022}, under the QF\_SLIA category.
That is, quantifier-free constraints using strings and linear integer arithmetic.
We also generated a set of benchmarks from regular expressions with a 5-star rating on \url{regexlib.com}.

Because we intend our technique to complement existing solvers, we restricted our attention to ``difficult'' benchmarks.
We defined the ``difficult'' benchmarks to be those that could not be solved by Z3 in less than 10 seconds.

\begin{itemize}
\item
    The Norn benchmarks were introduced for the Norn tool~\cite{AACHRRS14} and consist of concatenations of string literals and variables tested for membership (and non-membership) of regular expressions.
\item
    The Kepler benchmarks were introduced for the Kepler tool~\cite{PLPSQ18} and consist of quadratic word equations.
    That is, equality tests between two concatenations of string literals and variables, with each variable appearing at most twice.
\item
    The WordEQ benchmarks were randomly generated by us for this paper from regular expressions taken from \url{regexlib.com}.
    This website collects user-submitted regular expressions for tasks such as email recognition, currency values, and others.
    We took regular expressions with a 5-star rating to avoid spam submissions.

    The generated benchmarks were designed to test conjunctions of membership queries between overlapping regular expressions.
    We generated 100 benchmarks of the form
    \[
        \begin{array}{c}
            \sinre(\svar_1, (\regex_1 \ldots \regex_n)^\ast)\ \land \\
            \sinre(
                \svar_2,
                (\regex_1 \ldots \regex_n)^+ (\regex'_1 \ldots \regex'_m)^+
            )\ \land \\
            \sinre(\svar_3, (\regex'_1 \ldots \regex'_m)^\ast)\ \land \\
            \sexp_1 = \sexp_2
        \end{array}
    \]
    where $1 \leq n, m \leq 3$ are randomly chosen integers,\footnote{
        Using Python's \texttt{random.randint} function.
    } $\regex_1, \ldots, \regex_n, \regex'_1, \ldots, \regex'_m$ were randomly selected from the regular expressions obtained as above, and $\sexp_1$ and $\sexp_2$ are each concatenations of three variables picked randomly (possibly with duplicates) from $\{\svar_1, \svar_2, \svar_3\}$ such that each variable appears at least once in $\sexp_1$ or $\sexp_2$ (and possibly in both).
\end{itemize}

In addition to the above sets, we also considered other QF\_SLIA benchmarks submitted to SMTCOMP 2022.
These results are not included as the remaining benchmarks either contained unsupported features (such as string-to-integer functions, or character index functions), or were solved within \SI{10}{s} by Z3.

\subsubsection{Comparison Solvers}

We compare our tool with three state-of-the-art string solvers: Z3 (4.12.1)~\cite{Z3}, CVC5 (1.0.5)~\cite{CVC5}, and OSTRICH~\cite{popl19,popl22,ostrich-int,OSTRICH}.
Z3 is a well-known SMT solver developed at Microsoft.
CVC5 performs strongly in SMTCOMP competitions.
Because the performance of Z3 and CVC5 can sometimes be similar, we also compare with two variants of the OSTRICH tool, which uses an automaton-based approach and often out-performed Z3 and CVC5 on unsat instances in SMTCOMP 2022.
The CEA variant of OSTRICH uses cost-register automata and also makes use of Parikh images~\cite{ostrich-int}.
It was run with the parameters \texttt{+parikh} and \texttt{-profile=strings}.
Both variants were taken from the Cea-new branch of OSTRICH, commit ce855e26~\cite{OSTRICH}.

\subsubsection{Results}

Our experiments were performed on a Lenovo X380 Yoga ThinkPad with 8Gb RAM and 8 Intel\textregistered\ i7-8550U 1.8GHz CPUs, running Arch Linux (kernel 6.4.1).
We used the default method for generating predicates for the Parikh image for all benchmarks except Kepler.
For Kepler we extracted predicates from the string literals as the benchmarks do not contain regular expressions.
When running the tools on the ``difficult'' benchmarks, we set the timeout to \SI{30}{s}.

Our tool over-approximates the true satisfiability of the input string equations and may return false-positives.
Hence we are interested in the number of unsatisfiable instances, and those reported incorrectly as sat by our tool.
For each benchmark set we consider the following.
\begin{itemize}
\item
    How many of the total number of benchmarks were ``difficult''?
\item
    How many of the ``difficult'' benchmarks were unsatisfiable instances?
\item
    How many of the unsatisfiable instances were identified as false-positives by our tool (i.e.\ our tool returned ``sat'')?
\item
    The runtime of our solver, compared with other competitive solvers?
\end{itemize}

The results for the final bullet point are presented in Figure~\ref{fig:results}, where our tool is labelled sym\_parikh.
These graphs show the cumulative number of benchmarks solved in the given time.
The remaining data is in Table~\ref{tbl:results}.
We note that Z3 did not solve any of the selected benchmarks within the timeout.
However, we expect this is due to the bias in benchmark selection: we chose ``difficult'' benchmarks, where the Z3 solver was used to determine difficulty.
Hence, only benchmarks that Z3 found difficult were included.

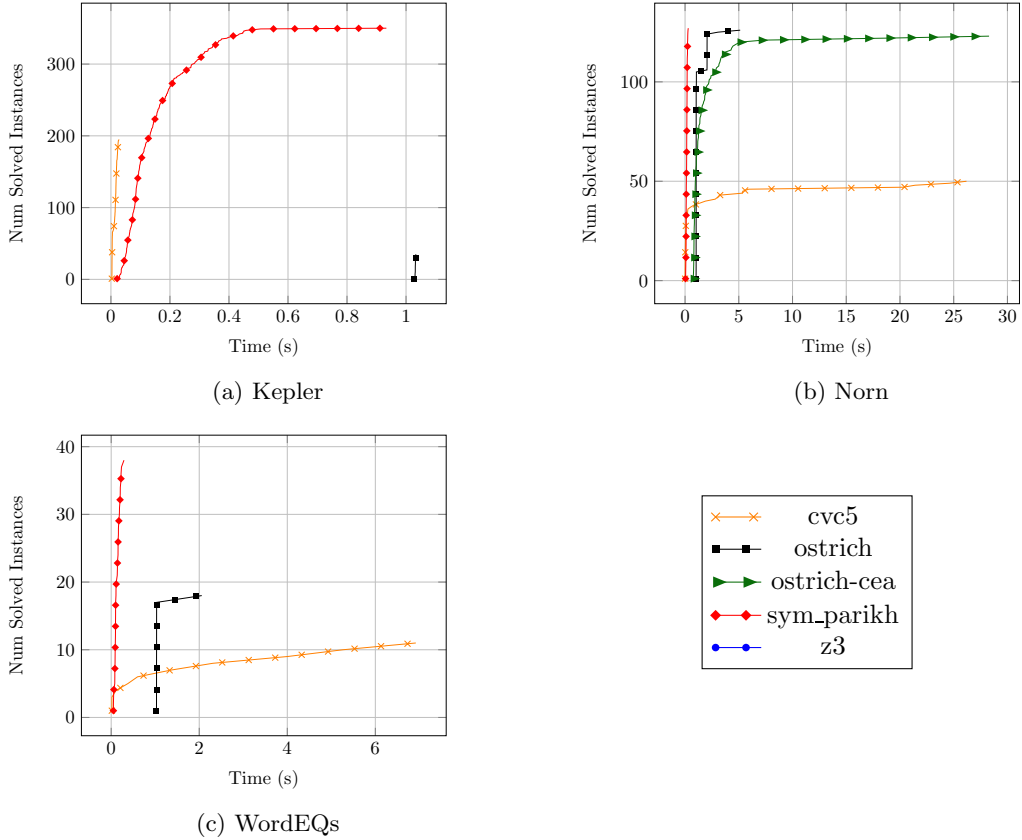
\begin{figure}

\tikzset{cvcdeco/.style={
        orange,
        decoration={
            markings,
            mark = between positions 0 and 1 step 5mm with {
                \node[cross out,draw=orange,inner sep=0mm,minimum size=1mm]{};
            },
        },
        postaction={decorate}
    }
}
\tikzset{ostrichdeco/.style={
        black,
        decoration={
            markings,
            mark = between positions 0 and 1 step 4mm with {
                \node[rectangle,fill=black,inner sep=0mm,minimum size=1mm]{};
            },
        },
        postaction={decorate}
    }
}
\tikzset{ceadeco/.style={
        green,
        decoration={
            markings,
            mark = between positions 0 and 1 step 4mm with {
                \node[isosceles triangle,fill=green,inner sep=0mm,minimum size=1.5mm]{};
            },
        },
        postaction={decorate}
    }
}
\tikzset{symparikhdeco/.style={
        red,
        decoration={
            markings,
            mark = between positions 0 and 1 step 4mm with {
                \node[diamond,fill=red,inner sep=0mm,minimum size=1.5mm]{};
            },
        },
        postaction={decorate}
    }
}
\tikzset{zzzdeco/.style={
        blue,
        decoration={
            markings,
            mark = between positions 0 and 1 step 4mm with {
                \node[circle,fill=blue,inner sep=0mm,minimum size=1mm]{};
            },
        },
        postaction={decorate}
    }
}

\begin{subfigure}{0.48\linewidth}
\popleasychair{\pgfplotstableread{kepler_times.dat}{\table}}{\pgfplotstableread{kepler_times.dat}{\table}}
\begin{tikzpicture}[scale=0.7]
\begin{axis}[
    xlabel={Time (s)},
    ylabel={Num Solved Instances},
    legend pos=north west,
    grid=both
]
\addplot[cvcdeco] table [x = {cvc5_time}, y = {cvc5_count}] {\table};
\addplot[ostrichdeco] table [x = {ostrich_time}, y = {ostrich_count}] {\table};
\addplot[ceadeco] table [x = {ostrich-cea_time}, y = {ostrich-cea_count}] {\table};
\addplot[symparikhdeco] table [x = {sym_parikh_time}, y = {sym_parikh_count}] {\table};
\addplot[zzzdeco] table [x = {z3_time}, y = {z3_count}] {\table};
\end{axis}
\end{tikzpicture}
\caption{Kepler}
\end{subfigure}
\begin{subfigure}{0.48\linewidth}
\popleasychair{\pgfplotstableread{norn_times.dat}{\table}}{\pgfplotstableread{norn_times.dat}{\table}}
\begin{tikzpicture}[scale=0.7]
\begin{axis}[
    xlabel={Time (s)},
    ylabel={Num Solved Instances},
    legend pos=north west,
    grid=both
]
\addplot[cvcdeco] table [x = {cvc5_time}, y = {cvc5_count}] {\table};
\addplot[ostrichdeco] table [x = {ostrich_time}, y = {ostrich_count}] {\table};
\addplot[ceadeco] table [x = {ostrich-cea_time}, y = {ostrich-cea_count}] {\table};
\addplot[symparikhdeco] table [x = {sym_parikh_time}, y = {sym_parikh_count}] {\table};
\addplot[zzzdeco] table [x = {z3_time}, y = {z3_count}] {\table};
\end{axis}
\end{tikzpicture}
\caption{Norn}
\end{subfigure}
\begin{subfigure}{0.48\linewidth}
\popleasychair{\pgfplotstableread{wordeqs_times.dat}{\table}}{\pgfplotstableread{wordeqs_times.dat}{\table}}
\begin{tikzpicture}[scale=0.7]
\begin{axis}[
    xlabel={Time (s)},
    ylabel={Num Solved Instances},
    legend pos=north west,
    grid=both
]
\addplot[cvcdeco] table [x = {cvc5_time}, y = {cvc5_count}] {\table};
\addplot[ostrichdeco] table [x = {ostrich_time}, y = {ostrich_count}] {\table};
\addplot[ceadeco] table [x = {ostrich-cea_time}, y = {ostrich-cea_count}] {\table};
\addplot[symparikhdeco] table [x = {sym_parikh_time}, y = {sym_parikh_count}] {\table};
\addplot[zzzdeco] table [x = {z3_time}, y = {z3_count}] {\table};
\end{axis}
\end{tikzpicture}
\caption{WordEQs}
\end{subfigure}
\begin{subfigure}{.48\linewidth}
\begin{center}
\begin{tikzpicture}
\begin{axis}[
   hide axis,
   xmin=50,
   xmax=50,
   ymin=50,
   ymax=50,
   legend style={at={(0.3,0.6)},anchor=west}
]
\addplot[cvcdeco] {x};
\addplot[ostrichdeco] {x};
\addplot[ceadeco] {x};
\addplot[symparikhdeco] {x};
\addplot[zzzdeco] {x};
\legend{cvc5,ostrich,ostrich-cea,sym\_parikh,z3}
\end{axis}
\end{tikzpicture}
\end{center}
\end{subfigure}
\caption{\label{fig:results}
    The number of instances solved in the given time across the benchmark sets.
    The line markings (shapes) are only to distinguish lines without colors, and are not individual data points.
    Our tool is labelled sym\_parikh.
}
\end{figure}

\begin{table}
\caption{\label{tbl:results}
     Summary data for benchmark sets.
     For each benchmark set, the table shows
        the number of instances,
        how many of the instances are ``difficult'',
        the number of difficult benchmarks that are unsatisfiable,
        and the number of those unsatisfiable instances identified as satisfiable by our approximation.}
\begin{tabular}{lccccc}
\toprule
Set & Instances & Difficult & Unsat & Approx Sat \\
\midrule
Kepler & 587 & 350 & 195 & 129 \\
Norn & 1027 & 127 & 127 & 9 \\
WordEQs & 100 & 38 & 35 & 0 \\
\bottomrule
\end{tabular}
\end{table}

\subsection{Analysis}

Of 515 difficult instances, the majority -- 357 (\SI{69}{\%}) -- were found to be unsatisfiable.
Of these, 219 (\SI{61}{\%}) were able to be proved unsatisfiable using the Parikh image abstraction.
Thus, our approach gives useful results in \SI{42}{\%} of considered cases.
It can be seen in Figure 1 that performance was relatively robust on our benchmarks when compared with with the exact analysis of the comparison tools, with all queries answered within \SI{1}{s}.
This shows some potential for the use of the approach in the optimisation of string solvers.
However, the performance can be seen to vary between the benchmarks sets.
Since the summary results can be affected by the number of available benchmarks in each set, we discuss each set individually below.
This will allow us to gain some intuition on where the approach may be best applied, and where it may be less useful.

Our tool performed well on the Norn benchmarks, solving all instances almost immediately.
All difficult benchmarks were unsatisfiable instances.
Our tool reported 9 false positives.
These results are promising and indicate that the Parikh image abstraction may prove useful in quickly filtering unsatisfiable string constraints with regular expression containment checks.

The performance on the Kepler benchmarks was more mixed.
These benchmarks proved difficult for most solvers, with only CVC5 and our tool able to return a large number of answers within the timeout period.
CVC5 solved fewer instances than our tool, but did so more quickly.
We note that our tool solved almost all instances within \SI{0.5}{s}.
However, out of 195 unsatisfiable instances, our tool reported 129 false positives.
We conjecture that this high false positive rate can be explained by the nature of the word equations.
The values taken by the string variables were not limited by regular expression containment checks.
This provides a lot of freedom for variables to take on values that equalise the Parikh images of both sides of the word equations, especially in cases where a variable only appears on one side of the equation.
For example, in $\cha \svar = \svar \varsvar$, the variable $\varsvar$ can contain the required $\cha$ character.
Unsatisfiability of Parikh image equality can require rarer inconsistencies.
For example $\svar \varsvar = \varsvar \cha \svar$ will always require one more $\cha$ character on the right hand side than the left.

Finally, our tool performed well on the difficult WordEQ instances.
Of the 38 that were difficult for Z3, 35 were unsatifiable instances.
Our tool was able to determine the correct result quickly in all cases.
This shows that the Parikh image abstractions may prove useful for examples containing complex interactions between overlapping regular expressions.

\section{Conclusion}
\label{sec:conc}

%\paragraph{Related work.}

%\paragraph{Future work.}

We have investigated Parikh images of languages over symbolic and parametric alphabets.
In such a settings, the large, or even infinite alphabet makes a naive use of Parikh images impractical.
Instead, our parametric version of the Parikh image is relative to a sequence of predicates $\Psi = \psi_1, \ldots, \psi_n$ and counts the number of times each predicate is satisfied by a character in the word.

The fact that Parikh images over classical context-free grammars can be computed by a linear-sized existential Presburger formula is a key ingredient in several verification applications.
We introduce a parametric version of context-free grammars and an equivalent pushdown model.

Because the alphabet is large and multiple predicates can be satisfied simultaneously, one may expect an exponential blow-up over the classical results.
Surprisingly, this turns out not to be the case.
We can represent the Parikh image of a parametric context-free grammar with a polynomially-sized existential formula, and the complexity of related decision problems remains the same.

We presented an application of our results to overapproximate satisfiability of string constraints and provided an implementation based on Z3.
Our experimental results showed that constraints that are difficult for existing solvers can be solved quickly using our abstraction.

\textit{Future work.} These initial results suggest several avenues of future 
work.
We first discuss limitations of our implementation. %, which compute the Parikh
%images of symbolic automata.
Firstly, our implementation makes a naive selection of
predicates $\Psi$ over which to compute the Parikh image.
Improved predicate selection algorithms may balance the need for insightful information about the constraints being analysed, and the need to keep the number of variables small to allow constraints to be solved quickly.
One may also investigate how existing solvers can deploy these techniques from
within the solver, rather than as a one-off preprocessing step that analyses the
whole formula at once. Secondly, our prototypical application to string
constraint solving does not exploit the full potential of the results. For
example, SMT-LIB does not currently support context-free constraints (as
supported by some solvers like TRAU \cite{TRAU,TRAU-pldi}) and
sequence theories over any effective boolean algebra \cite{JLMR23}, and we
have remarked that our results admit an easy extension to these.
As an example, we may use parametric context-free languages to analyse streams of XML data, where the set of possible tags is infinite and should respect a nested structure.

On the theory side, it is still an open problem whether our results can be
extended to other classes of recognizers over infinite alphabets, e.g.,
\cite{SRA,BS19,nominal-automata,FL22}. 
%These include
%symbolic register automata \cite{SRA}, nominal automata
%\cite{BS19,nominal-automata}, and variants of data automata \cite{FL22}. 
In
particular, we mention recent results on Parikh images of subclasses of nominal
automata \cite{parikh-infinite} and variants of data automata \cite{FL22}, which
provide a more precise Parikh abstraction and thus require a higher computational
complexity (e.g. in \cite{FL22} double-exponential time algorithms).
%study a more precise Parikh abstraction that removes the position ordering but
%counts the number of occurrences of each elements in the domain (in contrast to
%our weaker abstraction, which does not distinguish elements in the given
%predicates). Consequently, these results require higher computational
%complexity (e.g. in the case of \cite{FL22}, typically double-exponential time
%algorithms). 
Secondly, in the light of the polynomial-time complexity result
\cite{KT10} on reasoning about Parikh images of NFA with fixed alphabet size $k$, one
could study Parikh images of parametric automata with a fixed number of
predicates (for certain alphabets like ASCII, this number might be as small as
10 \cite{pldi23,DV21}). Here, a simple application of the result in \cite{KT10} yields a
polynomial-time complexity for any fixed $k$, but the actual complexity would be
double exponential in $k$. Is it possible to lower this to a single 
exponential in $k$?

Finally, one could investigate further potential applications of our results.
For example, as explained in \cite{DV21}, model checking is typically done over
Kripke structures over atomic propositions $P_1,\ldots,P_n$. This gives rise as
well to exponential-sized alphabets. Parikh's Theorem for symbolic automata
could potentially be used to model checking temporal logics with additional
predicate-counting abstractions. Similar applications for the case of finite
alphabets have been discussed in \cite{HL11,counting-CTL,counting-LTL}. To
avoid potentially large automata, one could potentially also consider
restrictions of temporal logics (e.g. LTL with only future/global operators
\cite{BLW13}).

\OMIT{
e.g., model checking of 
Our results may also find applications in LTL(F) model-checking.
By creating $n$ copies of the alphabet, is possible to split the computation into $n$ phases that can be tracked by the Parikh image.
This information could be used to verify LTL formulas with $n$ nestings of the F operator.
}

%We may extend our implementation to support context-free grammars, and hence a richer class of constraint problems.
%Similarly, our current implementation is restricted to applications over Unicode strings.
%We may investigate applications to other theories, such as sequences real-valued variables, which may lead to applications in wider areas without discrete domains.

%Data Availability Statement
\section*{Data Availability Statement}
The implementation is available for reuse via its online repository~\cite{SymParikhGit}.
It is also available for reproduction as an artifact with a disk image on Zenodo~\cite{ZenodoArtifactv3}, version v3.

% Acknowledgments
\paragraph{Acknowledgments}
We thank anonymous reviewers, Nikolaj Bjorner, Oliver Markgraf, and Margus Veanes for helpful comments.
Artur Je\.z was supported under National Science Centre, Poland project number~2017/26/E/ST6/00191.
Anthony Lin was supported by European Research Council under European Union’s Horizon research and innovation programme (grant agreement no 101089343).

\section{Errata} \label{sec:errata}

There is an error in the optimisation proposed in Section~\ref{sec:restrictsearch} and published in POPL 2024.
It may happen, for example, that a transition label permits only one possible value.
It is then not possible to choose values for $\cha^\tpred_\idxi$ such that $\cha^\tpred_\idxi < \cha^\tpred_{\idxi+1}$.
This may result in some satisfiable instances becoming unsatisfiable.
To correct the optimisation, we can apply it only to characters $\cha^\tpred_\idxi$ such that $\chacount^\tpred_\idxj > 0$.
That is, we require a character differs from its predecessor, or it is not used at all.
If a character is not used at all, we can assert it is the same as its predecessor to further restrict the search space.
We can further assert that if a character is not used, then none of its successors are used either.

The two optimising formulas become
\[
    \bigwedge\limits_{\tpred \in \labels(\aut)}
        \bigwedge\limits_{1 \leq \idxi < 2n\log(n)}
            (\chacount^\tpred_{\idxi+1} = 0
             \land
             \cha^\tpred_\idxi = \cha^\tpred_{\idxi + 1})
            \lor
            \cha^\tpred_\idxi < \cha^\tpred_{\idxi + 1}
\]
and
\[
    \bigwedge\limits_{\tpred \in \labels(\aut)}
        \bigwedge\limits_{1 \leq \idxi < 2n\log(n)} \left(
            \chacount^\tpred_{\idxi+1} = 0
            \lor
            \bigvee\limits_{1 \leq \idxj \leq n}
                \pred_\idxj(\cha^\tpred_\idxi) \neq \pred_\idxj(\cha^\tpred_{\idxi + 1})
        \right)
\]
with additionally
\[
    (\chacount^\tpred_\idxi = 0) \Rightarrow (\chacount^\tpred_{\idxi + 1} = 0) \ .
\]

With the new encoding, the experimental results are changed.
Updated runtimes are given in Figure~\ref{fig:results-updated} and summary data is given in Table~\ref{tbl:results-updated}.
The rerun experiments were conducted on the same Lenovo X380 Yoga ThinkPad with 8Gb RAM and 8 Intel\textregistered\ i7-8550U 1.8GHz CPUs, running Arch Linux, updated to kernel 6.7.3.
We compared with Z3 version 4.12.5 (updated from 4.12.1) and CVC5 1.1.1 (updated from 1.0.5).
The same version of OSTRICH and OSTRICH-CEA were used.

The change to the Z3 version used means there are slight differences in how many benchmarks were considered ``hard''.
In particular, 310 Kepler benchmarks were considered hard, whereas 350 were difficult for Z3 previously.
Our estimation of the number of unsatisfiable instances has also changed in some cases.
These changes will be for benchmarks where our tool was the only tool returning a result, which was erroneously unsat instead of sat due to the over-aggressive optimisation.

The largest difference in the results is for the Norn benchmark set.
In this case, the number of false positives reported by our tool increased from 9 out of 127 to 110 out of 126 benchmarks.
The other two benchmark sets showed only slight changes to the overall picture of the results.

Overall, there are now 474 difficult instances, with 313 being unsatisfiable (66\%).
Of these unsatisfiable benchmarks, our tool detected unsatisfiability in 72 cases (23\%).
Thus, our approach gives useful results in 15\% of considered cases.
This is a drop from the 42\% reported previously.

However, we observe that runtimes remain good for all benchmarks, meaning that our technique can quickly be used as an unsatisfiability check when traditional techniques are taking a long time.
Thus, the approach remains a viable approximate filter, particularly for benchmarks of the shape explored by the WordEQs benchmarks, which are of the form that initially motivated this work.

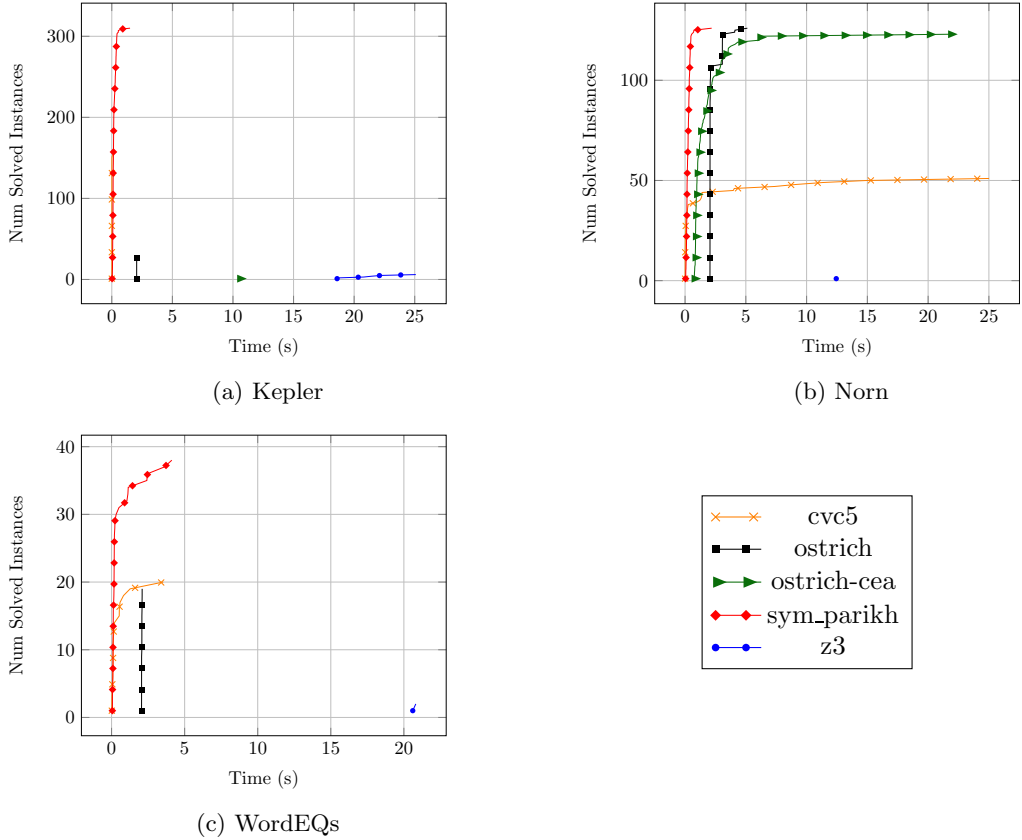
\begin{figure}

\tikzset{cvcdeco/.style={
        orange,
        decoration={
            markings,
            mark = between positions 0 and 1 step 5mm with {
                \node[cross out,draw=orange,inner sep=0mm,minimum size=1mm]{};
            },
        },
        postaction={decorate}
    }
}
\tikzset{ostrichdeco/.style={
        black,
        decoration={
            markings,
            mark = between positions 0 and 1 step 4mm with {
                \node[rectangle,fill=black,inner sep=0mm,minimum size=1mm]{};
            },
        },
        postaction={decorate}
    }
}
\tikzset{ceadeco/.style={
        green,
        decoration={
            markings,
            mark = between positions 0 and 1 step 4mm with {
                \node[isosceles triangle,fill=green,inner sep=0mm,minimum size=1.5mm]{};
            },
        },
        postaction={decorate}
    }
}
\tikzset{symparikhdeco/.style={
        red,
        decoration={
            markings,
            mark = between positions 0 and 1 step 4mm with {
                \node[diamond,fill=red,inner sep=0mm,minimum size=1.5mm]{};
            },
        },
        postaction={decorate}
    }
}
\tikzset{zzzdeco/.style={
        blue,
        decoration={
            markings,
            mark = between positions 0 and 1 step 4mm with {
                \node[circle,fill=blue,inner sep=0mm,minimum size=1mm]{};
            },
        },
        postaction={decorate}
    }
}

\begin{subfigure}{0.48\linewidth}
\popleasychair{\pgfplotstableread{kepler_times_20240209.dat}{\table}}{\pgfplotstableread{kepler_times_20240209.dat}{\table}}
\begin{tikzpicture}[scale=0.7]
\begin{axis}[
    xlabel={Time (s)},
    ylabel={Num Solved Instances},
    legend pos=north west,
    grid=both
]
\addplot[cvcdeco] table [x = {cvc5_time}, y = {cvc5_count}] {\table};
\addplot[ostrichdeco] table [x = {ostrich_time}, y = {ostrich_count}] {\table};
\addplot[ceadeco] table [x = {ostrich-cea_time}, y = {ostrich-cea_count}] {\table};
\addplot[symparikhdeco] table [x = {sym_parikh_time}, y = {sym_parikh_count}] {\table};
\addplot[zzzdeco] table [x = {z3_time}, y = {z3_count}] {\table};
\end{axis}
\end{tikzpicture}
\caption{Kepler}
\end{subfigure}
\begin{subfigure}{0.48\linewidth}
\popleasychair{\pgfplotstableread{norn_times_20240209.dat}{\table}}{\pgfplotstableread{norn_times_20240209.dat}{\table}}
\begin{tikzpicture}[scale=0.7]
\begin{axis}[
    xlabel={Time (s)},
    ylabel={Num Solved Instances},
    legend pos=north west,
    grid=both
]
\addplot[cvcdeco] table [x = {cvc5_time}, y = {cvc5_count}] {\table};
\addplot[ostrichdeco] table [x = {ostrich_time}, y = {ostrich_count}] {\table};
\addplot[ceadeco] table [x = {ostrich-cea_time}, y = {ostrich-cea_count}] {\table};
\addplot[symparikhdeco] table [x = {sym_parikh_time}, y = {sym_parikh_count}] {\table};
\addplot[zzzdeco] table [x = {z3_time}, y = {z3_count}] {\table};
\end{axis}
\end{tikzpicture}
\caption{Norn}
\end{subfigure}
\begin{subfigure}{0.48\linewidth}
\popleasychair{\pgfplotstableread{wordeqs_times_20240209.dat}{\table}}{\pgfplotstableread{wordeqs_times_20240209.dat}{\table}}
\begin{tikzpicture}[scale=0.7]
\begin{axis}[
    xlabel={Time (s)},
    ylabel={Num Solved Instances},
    legend pos=north west,
    grid=both
]
\addplot[cvcdeco] table [x = {cvc5_time}, y = {cvc5_count}] {\table};
\addplot[ostrichdeco] table [x = {ostrich_time}, y = {ostrich_count}] {\table};
\addplot[ceadeco] table [x = {ostrich-cea_time}, y = {ostrich-cea_count}] {\table};
\addplot[symparikhdeco] table [x = {sym_parikh_time}, y = {sym_parikh_count}] {\table};
\addplot[zzzdeco] table [x = {z3_time}, y = {z3_count}] {\table};
\end{axis}
\end{tikzpicture}
\caption{WordEQs}
\end{subfigure}
\begin{subfigure}{.48\linewidth}
\begin{center}
\begin{tikzpicture}
\begin{axis}[
   hide axis,
   xmin=50,
   xmax=50,
   ymin=50,
   ymax=50,
   legend style={at={(0.3,0.6)},anchor=west}
]
\addplot[cvcdeco] {x};
\addplot[ostrichdeco] {x};
\addplot[ceadeco] {x};
\addplot[symparikhdeco] {x};
\addplot[zzzdeco] {x};
\legend{cvc5,ostrich,ostrich-cea,sym\_parikh,z3}
\end{axis}
\end{tikzpicture}
\end{center}
\end{subfigure}
\caption{\label{fig:results-updated}
    The number of instances solved in the given time across the benchmark sets.
    The line markings (shapes) are only to distinguish lines without colors, and are not individual data points.
    Our tool is labelled sym\_parikh.
}
\end{figure}

\begin{table}
\caption{\label{tbl:results-updated}
     Summary data for benchmark sets.
     For each benchmark set, the table shows
        the number of instances,
        how many of the instances are ``difficult'',
        the number of difficult benchmarks that are unsatisfiable,
        and the number of those unsatisfiable instances identified as satisfiable by our approximation.}
\begin{tabular}{lccccc}
\toprule
Set & Instances & Difficult & Unsat & Approx Sat \\
\midrule
Kepler & 587 & 310 & 154 & 127 \\
Norn & 1027 & 126 & 126 & 110 \\
WordEQs & 100 & 38 & 33 & 4 \\
\bottomrule
\end{tabular}
\end{table}

\bibliographystyle{ACM-Reference-Format}
\bibliography{refs}

%%% -*-BibTeX-*-
%%% Do NOT edit. File created by BibTeX with style
%%% ACM-Reference-Format-Journals [18-Jan-2012].

\begin{thebibliography}{64}

%%% ====================================================================
%%% NOTE TO THE USER: you can override these defaults by providing
%%% customized versions of any of these macros before the \bibliography
%%% command.  Each of them MUST provide its own final punctuation,
%%% except for \shownote{}, \showDOI{}, and \showURL{}.  The latter two
%%% do not use final punctuation, in order to avoid confusing it with
%%% the Web address.
%%%
%%% To suppress output of a particular field, define its macro to expand
%%% to an empty string, or better, \unskip, like this:
%%%
%%% \newcommand{\showDOI}[1]{\unskip}   % LaTeX syntax
%%%
%%% \def \showDOI #1{\unskip}           % plain TeX syntax
%%%
%%% ====================================================================

\ifx \showCODEN    \undefined \def \showCODEN     #1{\unskip}     \fi
\ifx \showDOI      \undefined \def \showDOI       #1{#1}\fi
\ifx \showISBNx    \undefined \def \showISBNx     #1{\unskip}     \fi
\ifx \showISBNxiii \undefined \def \showISBNxiii  #1{\unskip}     \fi
\ifx \showISSN     \undefined \def \showISSN      #1{\unskip}     \fi
\ifx \showLCCN     \undefined \def \showLCCN      #1{\unskip}     \fi
\ifx \shownote     \undefined \def \shownote      #1{#1}          \fi
\ifx \showarticletitle \undefined \def \showarticletitle #1{#1}   \fi
\ifx \showURL      \undefined \def \showURL       {\relax}        \fi
% The following commands are used for tagged output and should be
% invisible to TeX
\providecommand\bibfield[2]{#2}
\providecommand\bibinfo[2]{#2}
\providecommand\natexlab[1]{#1}
\providecommand\showeprint[2][]{arXiv:#2}

\bibitem[Abdulla et~al\mbox{.}(2017)]%
        {TRAU-pldi}
\bibfield{author}{\bibinfo{person}{Parosh~Aziz Abdulla},
  \bibinfo{person}{Mohamed~Faouzi Atig}, \bibinfo{person}{Yu{-}Fang Chen},
  \bibinfo{person}{Bui~Phi Diep}, \bibinfo{person}{Luk{\'{a}}s Hol{\'{\i}}k},
  \bibinfo{person}{Ahmed Rezine}, {and} \bibinfo{person}{Philipp
  R{\"{u}}mmer}.} \bibinfo{year}{2017}\natexlab{}.
\newblock \showarticletitle{Flatten and conquer: a framework for efficient
  analysis of string constraints}. In \bibinfo{booktitle}{\emph{Proceedings of
  the 38th {ACM} {SIGPLAN} Conference on Programming Language Design and
  Implementation, {PLDI} 2017, Barcelona, Spain, June 18-23, 2017}},
  \bibfield{editor}{\bibinfo{person}{Albert Cohen} {and}
  \bibinfo{person}{Martin~T. Vechev}} (Eds.). \bibinfo{publisher}{{ACM}},
  \bibinfo{pages}{602--617}.
\newblock
\urldef\tempurl%
\url{https://doi.org/10.1145/3062341.3062384}
\showDOI{\tempurl}


\bibitem[Abdulla et~al\mbox{.}(2018)]%
        {TRAU}
\bibfield{author}{\bibinfo{person}{Parosh~Aziz Abdulla},
  \bibinfo{person}{Mohamed~Faouzi Atig}, \bibinfo{person}{Yu{-}Fang Chen},
  \bibinfo{person}{Bui~Phi Diep}, \bibinfo{person}{Luk{\'{a}}s Hol{\'{\i}}k},
  \bibinfo{person}{Ahmed Rezine}, {and} \bibinfo{person}{Philipp
  R{\"{u}}mmer}.} \bibinfo{year}{2018}\natexlab{}.
\newblock \showarticletitle{Trau: {SMT} solver for string constraints}. In
  \bibinfo{booktitle}{\emph{2018 Formal Methods in Computer Aided Design,
  {FMCAD} 2018, Austin, TX, USA, October 30 - November 2, 2018}},
  \bibfield{editor}{\bibinfo{person}{Nikolaj~S. Bj{\o}rner} {and}
  \bibinfo{person}{Arie Gurfinkel}} (Eds.). \bibinfo{publisher}{{IEEE}},
  \bibinfo{pages}{1--5}.
\newblock
\urldef\tempurl%
\url{https://doi.org/10.23919/FMCAD.2018.8602997}
\showDOI{\tempurl}


\bibitem[Abdulla et~al\mbox{.}(2014)]%
        {AACHRRS14}
\bibfield{author}{\bibinfo{person}{Parosh~Aziz Abdulla},
  \bibinfo{person}{Mohamed~Faouzi Atig}, \bibinfo{person}{Yu{-}Fang Chen},
  \bibinfo{person}{Luk{\'{a}}s Hol{\'{\i}}k}, \bibinfo{person}{Ahmed Rezine},
  \bibinfo{person}{Philipp R{\"{u}}mmer}, {and} \bibinfo{person}{Jari
  Stenman}.} \bibinfo{year}{2014}\natexlab{}.
\newblock \showarticletitle{String Constraints for Verification}. In
  \bibinfo{booktitle}{\emph{Computer Aided Verification - 26th International
  Conference, {CAV} 2014, Held as Part of the Vienna Summer of Logic, {VSL}
  2014, Vienna, Austria, July 18-22, 2014. Proceedings}}
  \emph{(\bibinfo{series}{Lecture Notes in Computer Science},
  Vol.~\bibinfo{volume}{8559})}, \bibfield{editor}{\bibinfo{person}{Armin
  Biere} {and} \bibinfo{person}{Roderick Bloem}} (Eds.).
  \bibinfo{publisher}{Springer}, \bibinfo{pages}{150--166}.
\newblock
\urldef\tempurl%
\url{https://doi.org/10.1007/978-3-319-08867-9\_10}
\showDOI{\tempurl}


\bibitem[Abdulla et~al\mbox{.}(2019)]%
        {chain-free}
\bibfield{author}{\bibinfo{person}{Parosh~Aziz Abdulla},
  \bibinfo{person}{Mohamed~Faouzi Atig}, \bibinfo{person}{Bui~Phi Diep},
  \bibinfo{person}{Luk{\'{a}}s Hol{\'{\i}}k}, {and} \bibinfo{person}{Petr
  Janku}.} \bibinfo{year}{2019}\natexlab{}.
\newblock \showarticletitle{Chain-Free String Constraints}. In
  \bibinfo{booktitle}{\emph{Automated Technology for Verification and Analysis
  - 17th International Symposium, {ATVA} 2019, Taipei, Taiwan, October 28-31,
  2019, Proceedings}} \emph{(\bibinfo{series}{Lecture Notes in Computer
  Science}, Vol.~\bibinfo{volume}{11781})},
  \bibfield{editor}{\bibinfo{person}{Yu{-}Fang Chen},
  \bibinfo{person}{Chih{-}Hong Cheng}, {and} \bibinfo{person}{Javier Esparza}}
  (Eds.). \bibinfo{publisher}{Springer}, \bibinfo{pages}{277--293}.
\newblock
\urldef\tempurl%
\url{https://doi.org/10.1007/978-3-030-31784-3\_16}
\showDOI{\tempurl}


\bibitem[Amadini et~al\mbox{.}(2019)]%
        {aratha}
\bibfield{author}{\bibinfo{person}{Roberto Amadini}, \bibinfo{person}{Mak
  Andrlon}, \bibinfo{person}{Graeme Gange}, \bibinfo{person}{Peter Schachte},
  \bibinfo{person}{Harald S{\o}ndergaard}, {and} \bibinfo{person}{Peter~J.
  Stuckey}.} \bibinfo{year}{2019}\natexlab{}.
\newblock \showarticletitle{Constraint Programming for Dynamic Symbolic
  Execution of JavaScript}. In \bibinfo{booktitle}{\emph{Integration of
  Constraint Programming, Artificial Intelligence, and Operations Research -
  16th International Conference, {CPAIOR} 2019, Thessaloniki, Greece, June 4-7,
  2019, Proceedings}} \emph{(\bibinfo{series}{Lecture Notes in Computer
  Science}, Vol.~\bibinfo{volume}{11494})},
  \bibfield{editor}{\bibinfo{person}{Louis{-}Martin Rousseau} {and}
  \bibinfo{person}{Kostas Stergiou}} (Eds.). \bibinfo{publisher}{Springer},
  \bibinfo{pages}{1--19}.
\newblock
\urldef\tempurl%
\url{https://doi.org/10.1007/978-3-030-19212-9\_1}
\showDOI{\tempurl}


\bibitem[Barbosa et~al\mbox{.}(2022)]%
        {CVC5}
\bibfield{author}{\bibinfo{person}{Haniel Barbosa}, \bibinfo{person}{Clark~W.
  Barrett}, \bibinfo{person}{Martin Brain}, \bibinfo{person}{Gereon Kremer},
  \bibinfo{person}{Hanna Lachnitt}, \bibinfo{person}{Makai Mann},
  \bibinfo{person}{Abdalrhman Mohamed}, \bibinfo{person}{Mudathir Mohamed},
  \bibinfo{person}{Aina Niemetz}, \bibinfo{person}{Andres N{\"{o}}tzli},
  \bibinfo{person}{Alex Ozdemir}, \bibinfo{person}{Mathias Preiner},
  \bibinfo{person}{Andrew Reynolds}, \bibinfo{person}{Ying Sheng},
  \bibinfo{person}{Cesare Tinelli}, {and} \bibinfo{person}{Yoni Zohar}.}
  \bibinfo{year}{2022}\natexlab{}.
\newblock \showarticletitle{cvc5: {A} Versatile and Industrial-Strength {SMT}
  Solver}. In \bibinfo{booktitle}{\emph{Tools and Algorithms for the
  Construction and Analysis of Systems - 28th International Conference, {TACAS}
  2022, Held as Part of the European Joint Conferences on Theory and Practice
  of Software, {ETAPS} 2022, Munich, Germany, April 2-7, 2022, Proceedings,
  Part {I}}} \emph{(\bibinfo{series}{Lecture Notes in Computer Science},
  Vol.~\bibinfo{volume}{13243})}, \bibfield{editor}{\bibinfo{person}{Dana
  Fisman} {and} \bibinfo{person}{Grigore Rosu}} (Eds.).
  \bibinfo{publisher}{Springer}, \bibinfo{pages}{415--442}.
\newblock
\urldef\tempurl%
\url{https://doi.org/10.1007/978-3-030-99524-9\_24}
\showDOI{\tempurl}


\bibitem[Barcel{\'{o}} et~al\mbox{.}(2012)]%
        {tods12}
\bibfield{author}{\bibinfo{person}{Pablo Barcel{\'{o}}},
  \bibinfo{person}{Leonid Libkin}, \bibinfo{person}{Anthony~Widjaja Lin}, {and}
  \bibinfo{person}{Peter~T. Wood}.} \bibinfo{year}{2012}\natexlab{}.
\newblock \showarticletitle{Expressive Languages for Path Queries over
  Graph-Structured Data}.
\newblock \bibinfo{journal}{\emph{{ACM} Trans. Database Syst.}}
  \bibinfo{volume}{37}, \bibinfo{number}{4} (\bibinfo{year}{2012}),
  \bibinfo{pages}{31:1--31:46}.
\newblock
\urldef\tempurl%
\url{https://doi.org/10.1145/2389241.2389250}
\showDOI{\tempurl}


\bibitem[Barner(2006)]%
        {B06}
\bibfield{author}{\bibinfo{person}{S. Barner}.}
  \bibinfo{year}{2006}\natexlab{}.
\newblock \emph{\bibinfo{title}{H3 mit Gleichheitstheorien}}.
\newblock Diploma Thesis. \bibinfo{school}{Technische Universitaet Muenchen}.
\newblock


\bibitem[Barrett et~al\mbox{.}(2017)]%
        {SMTLIB2}
\bibfield{author}{\bibinfo{person}{Clark Barrett}, \bibinfo{person}{Pascal
  Fontaine}, {and} \bibinfo{person}{Cesare Tinelli}.}
  \bibinfo{year}{2017}\natexlab{}.
\newblock \bibinfo{booktitle}{\emph{{The SMT-LIB Standard: Version 2.6}}}.
\newblock \bibinfo{type}{{T}echnical {R}eport}.
  \bibinfo{institution}{Department of Computer Science, The University of
  Iowa}.
\newblock
\newblock
\shownote{Available at {\tt www.SMT-LIB.org}}.


\bibitem[Benedikt et~al\mbox{.}(2013)]%
        {BLW13}
\bibfield{author}{\bibinfo{person}{Michael Benedikt},
  \bibinfo{person}{Rastislav Lenhardt}, {and} \bibinfo{person}{James Worrell}.}
  \bibinfo{year}{2013}\natexlab{}.
\newblock \showarticletitle{{LTL} Model Checking of Interval Markov Chains}. In
  \bibinfo{booktitle}{\emph{Tools and Algorithms for the Construction and
  Analysis of Systems - 19th International Conference, {TACAS} 2013, Held as
  Part of the European Joint Conferences on Theory and Practice of Software,
  {ETAPS} 2013, Rome, Italy, March 16-24, 2013. Proceedings}}
  \emph{(\bibinfo{series}{Lecture Notes in Computer Science},
  Vol.~\bibinfo{volume}{7795})}, \bibfield{editor}{\bibinfo{person}{Nir
  Piterman} {and} \bibinfo{person}{Scott~A. Smolka}} (Eds.).
  \bibinfo{publisher}{Springer}, \bibinfo{pages}{32--46}.
\newblock
\urldef\tempurl%
\url{https://doi.org/10.1007/978-3-642-36742-7\_3}
\showDOI{\tempurl}


\bibitem[Blahoudek et~al\mbox{.}(2023)]%
        {noodle}
\bibfield{author}{\bibinfo{person}{Frantisek Blahoudek},
  \bibinfo{person}{Yu{-}Fang Chen}, \bibinfo{person}{David Chocholat{\'{y}}},
  \bibinfo{person}{Vojtech Havlena}, \bibinfo{person}{Luk{\'{a}}s
  Hol{\'{\i}}k}, \bibinfo{person}{Ondrej Leng{\'{a}}l}, {and}
  \bibinfo{person}{Juraj S{\'{\i}}c}.} \bibinfo{year}{2023}\natexlab{}.
\newblock \showarticletitle{Word Equations in Synergy with Regular
  Constraints}. In \bibinfo{booktitle}{\emph{Formal Methods - 25th
  International Symposium, {FM} 2023, L{\"{u}}beck, Germany, March 6-10, 2023,
  Proceedings}} \emph{(\bibinfo{series}{Lecture Notes in Computer Science},
  Vol.~\bibinfo{volume}{14000})}, \bibfield{editor}{\bibinfo{person}{Marsha
  Chechik}, \bibinfo{person}{Joost{-}Pieter Katoen}, {and}
  \bibinfo{person}{Martin Leucker}} (Eds.). \bibinfo{publisher}{Springer},
  \bibinfo{pages}{403--423}.
\newblock
\urldef\tempurl%
\url{https://doi.org/10.1007/978-3-031-27481-7\_23}
\showDOI{\tempurl}


\bibitem[Bouajjani et~al\mbox{.}(2003)]%
        {BET03}
\bibfield{author}{\bibinfo{person}{Ahmed Bouajjani}, \bibinfo{person}{Javier
  Esparza}, {and} \bibinfo{person}{Tayssir Touili}.}
  \bibinfo{year}{2003}\natexlab{}.
\newblock \showarticletitle{A generic approach to the static analysis of
  concurrent programs with procedures}. In \bibinfo{booktitle}{\emph{Conference
  Record of {POPL} 2003: The 30th {SIGPLAN-SIGACT} Symposium on Principles of
  Programming Languages, New Orleans, Louisisana, USA, January 15-17, 2003}},
  \bibfield{editor}{\bibinfo{person}{Alex Aiken} {and} \bibinfo{person}{Greg
  Morrisett}} (Eds.). \bibinfo{publisher}{{ACM}}, \bibinfo{pages}{62--73}.
\newblock
\urldef\tempurl%
\url{https://doi.org/10.1145/604131.604137}
\showDOI{\tempurl}


\bibitem[Brunet and Silva(2019)]%
        {BS19}
\bibfield{author}{\bibinfo{person}{Paul Brunet} {and}
  \bibinfo{person}{Alexandra Silva}.} \bibinfo{year}{2019}\natexlab{}.
\newblock \showarticletitle{A Kleene Theorem for Nominal Automata}. In
  \bibinfo{booktitle}{\emph{46th International Colloquium on Automata,
  Languages, and Programming, {ICALP} 2019, July 9-12, 2019, Patras, Greece}}
  \emph{(\bibinfo{series}{LIPIcs}, Vol.~\bibinfo{volume}{132})},
  \bibfield{editor}{\bibinfo{person}{Christel Baier}, \bibinfo{person}{Ioannis
  Chatzigiannakis}, \bibinfo{person}{Paola Flocchini}, {and}
  \bibinfo{person}{Stefano Leonardi}} (Eds.). \bibinfo{publisher}{Schloss
  Dagstuhl - Leibniz-Zentrum f{\"{u}}r Informatik},
  \bibinfo{pages}{107:1--107:13}.
\newblock
\urldef\tempurl%
\url{https://doi.org/10.4230/LIPIcs.ICALP.2019.107}
\showDOI{\tempurl}


\bibitem[Chen et~al\mbox{.}(2022)]%
        {popl22}
\bibfield{author}{\bibinfo{person}{Taolue Chen}, \bibinfo{person}{Alejandro
  Flores{-}Lamas}, \bibinfo{person}{Matthew Hague}, \bibinfo{person}{Zhilei
  Han}, \bibinfo{person}{Denghang Hu}, \bibinfo{person}{Shuanglong Kan},
  \bibinfo{person}{Anthony~W. Lin}, \bibinfo{person}{Philipp R{\"{u}}mmer},
  {and} \bibinfo{person}{Zhilin Wu}.} \bibinfo{year}{2022}\natexlab{}.
\newblock \showarticletitle{Solving string constraints with Regex-dependent
  functions through transducers with priorities and variables}.
\newblock \bibinfo{journal}{\emph{Proc. {ACM} Program. Lang.}}
  \bibinfo{volume}{6}, \bibinfo{number}{{POPL}} (\bibinfo{year}{2022}),
  \bibinfo{pages}{1--31}.
\newblock
\urldef\tempurl%
\url{https://doi.org/10.1145/3498707}
\showDOI{\tempurl}


\bibitem[Chen et~al\mbox{.}(2020)]%
        {ostrich-int}
\bibfield{author}{\bibinfo{person}{Taolue Chen}, \bibinfo{person}{Matthew
  Hague}, \bibinfo{person}{Jinlong He}, \bibinfo{person}{Denghang Hu},
  \bibinfo{person}{Anthony~Widjaja Lin}, \bibinfo{person}{Philipp
  R{\"{u}}mmer}, {and} \bibinfo{person}{Zhilin Wu}.}
  \bibinfo{year}{2020}\natexlab{}.
\newblock \showarticletitle{A Decision Procedure for Path Feasibility of String
  Manipulating Programs with Integer Data Type}. In
  \bibinfo{booktitle}{\emph{Automated Technology for Verification and Analysis
  - 18th International Symposium, {ATVA} 2020, Hanoi, Vietnam, October 19-23,
  2020, Proceedings}}. \bibinfo{pages}{325--342}.
\newblock
\urldef\tempurl%
\url{https://doi.org/10.1007/978-3-030-59152-6\_18}
\showDOI{\tempurl}


\bibitem[Chen et~al\mbox{.}(2019)]%
        {popl19}
\bibfield{author}{\bibinfo{person}{Taolue Chen}, \bibinfo{person}{Matthew
  Hague}, \bibinfo{person}{Anthony~W. Lin}, \bibinfo{person}{Philipp
  R{\"{u}}mmer}, {and} \bibinfo{person}{Zhilin Wu}.}
  \bibinfo{year}{2019}\natexlab{}.
\newblock \showarticletitle{Decision procedures for path feasibility of
  string-manipulating programs with complex operations}.
\newblock \bibinfo{journal}{\emph{Proc. {ACM} Program. Lang.}}
  \bibinfo{volume}{3}, \bibinfo{number}{{POPL}} (\bibinfo{year}{2019}),
  \bibinfo{pages}{49:1--49:30}.
\newblock
\urldef\tempurl%
\url{https://doi.org/10.1145/3290362}
\showDOI{\tempurl}


\bibitem[Christensen et~al\mbox{.}(2003)]%
        {CMS03}
\bibfield{author}{\bibinfo{person}{Aske~Simon Christensen},
  \bibinfo{person}{Anders M{\o}ller}, {and} \bibinfo{person}{Michael~I.
  Schwartzbach}.} \bibinfo{year}{2003}\natexlab{}.
\newblock \showarticletitle{Precise Analysis of String Expressions}. In
  \bibinfo{booktitle}{\emph{Static Analysis, 10th International Symposium,
  {SAS} 2003, San Diego, CA, USA, June 11-13, 2003, Proceedings}}
  \emph{(\bibinfo{series}{Lecture Notes in Computer Science},
  Vol.~\bibinfo{volume}{2694})}, \bibfield{editor}{\bibinfo{person}{Radhia
  Cousot}} (Ed.). \bibinfo{publisher}{Springer}, \bibinfo{pages}{1--18}.
\newblock
\urldef\tempurl%
\url{https://doi.org/10.1007/3-540-44898-5\_1}
\showDOI{\tempurl}


\bibitem[Daca et~al\mbox{.}(2016)]%
        {DHK16}
\bibfield{author}{\bibinfo{person}{Przemyslaw Daca}, \bibinfo{person}{Thomas~A.
  Henzinger}, {and} \bibinfo{person}{Andrey Kupriyanov}.}
  \bibinfo{year}{2016}\natexlab{}.
\newblock \showarticletitle{Array Folds Logic}. In
  \bibinfo{booktitle}{\emph{Computer Aided Verification - 28th International
  Conference, {CAV} 2016, Toronto, ON, Canada, July 17-23, 2016, Proceedings,
  Part {II}}} \emph{(\bibinfo{series}{Lecture Notes in Computer Science},
  Vol.~\bibinfo{volume}{9780})}, \bibfield{editor}{\bibinfo{person}{Swarat
  Chaudhuri} {and} \bibinfo{person}{Azadeh Farzan}} (Eds.).
  \bibinfo{publisher}{Springer}, \bibinfo{pages}{230--248}.
\newblock
\urldef\tempurl%
\url{https://doi.org/10.1007/978-3-319-41540-6\_13}
\showDOI{\tempurl}


\bibitem[D'Antoni and Alur(2014)]%
        {DA14}
\bibfield{author}{\bibinfo{person}{Loris D'Antoni} {and}
  \bibinfo{person}{Rajeev Alur}.} \bibinfo{year}{2014}\natexlab{}.
\newblock \showarticletitle{Symbolic Visibly Pushdown Automata}. In
  \bibinfo{booktitle}{\emph{Computer Aided Verification - 26th International
  Conference, {CAV} 2014, Held as Part of the Vienna Summer of Logic, {VSL}
  2014, Vienna, Austria, July 18-22, 2014. Proceedings}}
  \emph{(\bibinfo{series}{Lecture Notes in Computer Science},
  Vol.~\bibinfo{volume}{8559})}, \bibfield{editor}{\bibinfo{person}{Armin
  Biere} {and} \bibinfo{person}{Roderick Bloem}} (Eds.).
  \bibinfo{publisher}{Springer}, \bibinfo{pages}{209--225}.
\newblock
\urldef\tempurl%
\url{https://doi.org/10.1007/978-3-319-08867-9\_14}
\showDOI{\tempurl}


\bibitem[D'Antoni et~al\mbox{.}(2019)]%
        {SRA}
\bibfield{author}{\bibinfo{person}{Loris D'Antoni}, \bibinfo{person}{Tiago
  Ferreira}, \bibinfo{person}{Matteo Sammartino}, {and}
  \bibinfo{person}{Alexandra Silva}.} \bibinfo{year}{2019}\natexlab{}.
\newblock \showarticletitle{Symbolic Register Automata}. In
  \bibinfo{booktitle}{\emph{CAV}}, \bibfield{editor}{\bibinfo{person}{Isil
  Dillig} {and} \bibinfo{person}{Serdar Tasiran}} (Eds.),
  Vol.~\bibinfo{volume}{11561}. \bibinfo{publisher}{Springer},
  \bibinfo{pages}{3--21}.
\newblock
\urldef\tempurl%
\url{https://doi.org/10.1007/978-3-030-25540-4\_1}
\showDOI{\tempurl}


\bibitem[D'Antoni and Veanes(2017)]%
        {symbolic-power}
\bibfield{author}{\bibinfo{person}{Loris D'Antoni} {and}
  \bibinfo{person}{Margus Veanes}.} \bibinfo{year}{2017}\natexlab{}.
\newblock \showarticletitle{The Power of Symbolic Automata and Transducers}. In
  \bibinfo{booktitle}{\emph{Computer Aided Verification - 29th International
  Conference, {CAV} 2017, Heidelberg, Germany, July 24-28, 2017, Proceedings,
  Part {I}}} \emph{(\bibinfo{series}{Lecture Notes in Computer Science},
  Vol.~\bibinfo{volume}{10426})}, \bibfield{editor}{\bibinfo{person}{Rupak
  Majumdar} {and} \bibinfo{person}{Viktor Kuncak}} (Eds.).
  \bibinfo{publisher}{Springer}, \bibinfo{pages}{47--67}.
\newblock
\urldef\tempurl%
\url{https://doi.org/10.1007/978-3-319-63387-9\_3}
\showDOI{\tempurl}


\bibitem[D'Antoni and Veanes(2021)]%
        {DV21}
\bibfield{author}{\bibinfo{person}{Loris D'Antoni} {and}
  \bibinfo{person}{Margus Veanes}.} \bibinfo{year}{2021}\natexlab{}.
\newblock \showarticletitle{Automata modulo theories}.
\newblock \bibinfo{journal}{\emph{Commun. {ACM}}} \bibinfo{volume}{64},
  \bibinfo{number}{5} (\bibinfo{year}{2021}), \bibinfo{pages}{86--95}.
\newblock
\urldef\tempurl%
\url{https://doi.org/10.1145/3419404}
\showDOI{\tempurl}


\bibitem[David et~al\mbox{.}(2012)]%
        {DLT12}
\bibfield{author}{\bibinfo{person}{Claire David}, \bibinfo{person}{Leonid
  Libkin}, {and} \bibinfo{person}{Tony Tan}.} \bibinfo{year}{2012}\natexlab{}.
\newblock \showarticletitle{Efficient reasoning about data trees via integer
  linear programming}.
\newblock \bibinfo{journal}{\emph{{ACM} Trans. Database Syst.}}
  \bibinfo{volume}{37}, \bibinfo{number}{3} (\bibinfo{year}{2012}),
  \bibinfo{pages}{19:1--19:28}.
\newblock
\urldef\tempurl%
\url{https://doi.org/10.1145/2338626.2338632}
\showDOI{\tempurl}


\bibitem[de~Moura and Bj{\o}rner(2008)]%
        {Z3}
\bibfield{author}{\bibinfo{person}{L.~Mendon\c{c}a de Moura} {and}
  \bibinfo{person}{N. Bj{\o}rner}.} \bibinfo{year}{2008}\natexlab{}.
\newblock \showarticletitle{{Z3}: An Efficient {SMT} Solver}. In
  \bibinfo{booktitle}{\emph{TACAS}} \emph{(\bibinfo{series}{Lecture Notes in
  Computer Science}, Vol.~\bibinfo{volume}{4963})}.
  \bibinfo{publisher}{Springer}, \bibinfo{pages}{337--340}.
\newblock
\urldef\tempurl%
\url{https://doi.org/10.1007/978-3-540-78800-3\_24}
\showDOI{\tempurl}


\bibitem[Eisenbrand and Shmonin(2006)]%
        {ES06}
\bibfield{author}{\bibinfo{person}{Friedrich Eisenbrand} {and}
  \bibinfo{person}{Gennady Shmonin}.} \bibinfo{year}{2006}\natexlab{}.
\newblock \showarticletitle{Carathéodory bounds for integer cones}.
\newblock \bibinfo{journal}{\emph{Operations Research Letters}}
  \bibinfo{volume}{34}, \bibinfo{number}{5} (\bibinfo{year}{2006}),
  \bibinfo{pages}{564--568}.
\newblock
\showISSN{0167-6377}
\urldef\tempurl%
\url{https://doi.org/10.1016/j.orl.2005.09.008}
\showDOI{\tempurl}


\bibitem[Esparza(1997)]%
        {Esparza97}
\bibfield{author}{\bibinfo{person}{Javier Esparza}.}
  \bibinfo{year}{1997}\natexlab{}.
\newblock \showarticletitle{Petri Nets, Commutative Context-Free Grammars, and
  Basic Parallel Processes}.
\newblock \bibinfo{journal}{\emph{Fundam. Informaticae}} \bibinfo{volume}{31},
  \bibinfo{number}{1} (\bibinfo{year}{1997}), \bibinfo{pages}{13--25}.
\newblock
\urldef\tempurl%
\url{https://doi.org/10.3233/FI-1997-3112}
\showDOI{\tempurl}


\bibitem[Esparza and Ganty(2011)]%
        {EG11}
\bibfield{author}{\bibinfo{person}{Javier Esparza} {and}
  \bibinfo{person}{Pierre Ganty}.} \bibinfo{year}{2011}\natexlab{}.
\newblock \showarticletitle{Complexity of pattern-based verification for
  multithreaded programs}. In \bibinfo{booktitle}{\emph{Proceedings of the 38th
  {ACM} {SIGPLAN-SIGACT} Symposium on Principles of Programming Languages,
  {POPL} 2011, Austin, TX, USA, January 26-28, 2011}},
  \bibfield{editor}{\bibinfo{person}{Thomas Ball} {and} \bibinfo{person}{Mooly
  Sagiv}} (Eds.). \bibinfo{publisher}{{ACM}}, \bibinfo{pages}{499--510}.
\newblock
\urldef\tempurl%
\url{https://doi.org/10.1145/1926385.1926443}
\showDOI{\tempurl}


\bibitem[Faran and Kupferman(2020)]%
        {FK20}
\bibfield{author}{\bibinfo{person}{Rachel Faran} {and} \bibinfo{person}{Orna
  Kupferman}.} \bibinfo{year}{2020}\natexlab{}.
\newblock \showarticletitle{On Synthesis of Specifications with Arithmetic}. In
  \bibinfo{booktitle}{\emph{{SOFSEM} 2020: Theory and Practice of Computer
  Science - 46th International Conference on Current Trends in Theory and
  Practice of Informatics, {SOFSEM} 2020, Limassol, Cyprus, January 20-24,
  2020, Proceedings}} \emph{(\bibinfo{series}{Lecture Notes in Computer
  Science}, Vol.~\bibinfo{volume}{12011})},
  \bibfield{editor}{\bibinfo{person}{Alexander Chatzigeorgiou},
  \bibinfo{person}{Riccardo Dondi}, \bibinfo{person}{Herodotos Herodotou},
  \bibinfo{person}{Christos~A. Kapoutsis}, \bibinfo{person}{Yannis
  Manolopoulos}, \bibinfo{person}{George~A. Papadopoulos}, {and}
  \bibinfo{person}{Florian Sikora}} (Eds.). \bibinfo{publisher}{Springer},
  \bibinfo{pages}{161--173}.
\newblock
\urldef\tempurl%
\url{https://doi.org/10.1007/978-3-030-38919-2\_14}
\showDOI{\tempurl}


\bibitem[Figueira et~al\mbox{.}(2022)]%
        {FJL22}
\bibfield{author}{\bibinfo{person}{Diego Figueira}, \bibinfo{person}{Artur
  Je\.z}, {and} \bibinfo{person}{Anthony~W. Lin}.}
  \bibinfo{year}{2022}\natexlab{}.
\newblock \showarticletitle{Data Path Queries over Embedded Graph Databases}.
  In \bibinfo{booktitle}{\emph{{PODS} '22: International Conference on
  Management of Data, Philadelphia, PA, USA, June 12 - 17, 2022}}.
  \bibinfo{pages}{189--201}.
\newblock
\urldef\tempurl%
\url{https://doi.org/10.1145/3517804.3524159}
\showDOI{\tempurl}


\bibitem[Figueira and Lin(2022)]%
        {FL22}
\bibfield{author}{\bibinfo{person}{Diego Figueira} {and}
  \bibinfo{person}{Anthony~Widjaja Lin}.} \bibinfo{year}{2022}\natexlab{}.
\newblock \showarticletitle{Reasoning on Data Words over Numeric Domains}. In
  \bibinfo{booktitle}{\emph{{LICS} '22: 37th Annual {ACM/IEEE} Symposium on
  Logic in Computer Science, Haifa, Israel, August 2 - 5, 2022}}.
  \bibinfo{pages}{37:1--37:13}.
\newblock
\urldef\tempurl%
\url{https://doi.org/10.1145/3531130.3533354}
\showDOI{\tempurl}


\bibitem[Gange et~al\mbox{.}(2015)]%
        {covenant}
\bibfield{author}{\bibinfo{person}{Graeme Gange}, \bibinfo{person}{Jorge~A.
  Navas}, \bibinfo{person}{Peter Schachte}, \bibinfo{person}{Harald
  S{\o}ndergaard}, {and} \bibinfo{person}{Peter~J. Stuckey}.}
  \bibinfo{year}{2015}\natexlab{}.
\newblock \showarticletitle{A Tool for Intersecting Context-Free Grammars and
  Its Applications}. In \bibinfo{booktitle}{\emph{{NASA} Formal Methods - 7th
  International Symposium, {NFM} 2015, Pasadena, CA, USA, April 27-29, 2015,
  Proceedings}} \emph{(\bibinfo{series}{Lecture Notes in Computer Science},
  Vol.~\bibinfo{volume}{9058})}, \bibfield{editor}{\bibinfo{person}{Klaus
  Havelund}, \bibinfo{person}{Gerard~J. Holzmann}, {and}
  \bibinfo{person}{Rajeev Joshi}} (Eds.). \bibinfo{publisher}{Springer},
  \bibinfo{pages}{422--428}.
\newblock
\urldef\tempurl%
\url{https://doi.org/10.1007/978-3-319-17524-9\_31}
\showDOI{\tempurl}


\bibitem[Gange et~al\mbox{.}(2016)]%
        {covenant-journal}
\bibfield{author}{\bibinfo{person}{Graeme Gange}, \bibinfo{person}{Jorge~A.
  Navas}, \bibinfo{person}{Peter Schachte}, \bibinfo{person}{Harald
  S{\o}ndergaard}, {and} \bibinfo{person}{Peter~J. Stuckey}.}
  \bibinfo{year}{2016}\natexlab{}.
\newblock \showarticletitle{A complete refinement procedure for regular
  separability of context-free languages}.
\newblock \bibinfo{journal}{\emph{Theor. Comput. Sci.}}  \bibinfo{volume}{625}
  (\bibinfo{year}{2016}), \bibinfo{pages}{1--24}.
\newblock
\urldef\tempurl%
\url{https://doi.org/10.1016/j.tcs.2016.01.026}
\showDOI{\tempurl}


\bibitem[Grumberg et~al\mbox{.}(2010)]%
        {GKS10}
\bibfield{author}{\bibinfo{person}{Orna Grumberg}, \bibinfo{person}{Orna
  Kupferman}, {and} \bibinfo{person}{Sarai Sheinvald}.}
  \bibinfo{year}{2010}\natexlab{}.
\newblock \showarticletitle{Variable Automata over Infinite Alphabets}. In
  \bibinfo{booktitle}{\emph{Language and Automata Theory and Applications, 4th
  International Conference, {LATA} 2010, Trier, Germany, May 24-28, 2010.
  Proceedings}} \emph{(\bibinfo{series}{Lecture Notes in Computer Science},
  Vol.~\bibinfo{volume}{6031})},
  \bibfield{editor}{\bibinfo{person}{Adrian{-}Horia Dediu},
  \bibinfo{person}{Henning Fernau}, {and} \bibinfo{person}{Carlos
  Mart{\'{\i}}n{-}Vide}} (Eds.). \bibinfo{publisher}{Springer},
  \bibinfo{pages}{561--572}.
\newblock
\urldef\tempurl%
\url{https://doi.org/10.1007/978-3-642-13089-2\_47}
\showDOI{\tempurl}


\bibitem[Hague and Lin(2011)]%
        {HL11}
\bibfield{author}{\bibinfo{person}{Matthew Hague} {and}
  \bibinfo{person}{Anthony~Widjaja Lin}.} \bibinfo{year}{2011}\natexlab{}.
\newblock \showarticletitle{Model Checking Recursive Programs with Numeric Data
  Types}. In \bibinfo{booktitle}{\emph{Computer Aided Verification - 23rd
  International Conference, {CAV} 2011, Snowbird, UT, USA, July 14-20, 2011.
  Proceedings}}. \bibinfo{pages}{743--759}.
\newblock
\urldef\tempurl%
\url{https://doi.org/10.1007/978-3-642-22110-1\_60}
\showDOI{\tempurl}


\bibitem[Hague and Lin(2012)]%
        {HL12}
\bibfield{author}{\bibinfo{person}{Matthew Hague} {and}
  \bibinfo{person}{Anthony~Widjaja Lin}.} \bibinfo{year}{2012}\natexlab{}.
\newblock \showarticletitle{Synchronisation- and Reversal-Bounded Analysis of
  Multithreaded Programs with Counters}. In \bibinfo{booktitle}{\emph{Computer
  Aided Verification - 24th International Conference, {CAV} 2012, Berkeley, CA,
  USA, July 7-13, 2012 Proceedings}}. \bibinfo{pages}{260--276}.
\newblock
\urldef\tempurl%
\url{https://doi.org/10.1007/978-3-642-31424-7\_22}
\showDOI{\tempurl}


\bibitem[Hofman et~al\mbox{.}(2021)]%
        {parikh-infinite}
\bibfield{author}{\bibinfo{person}{Piotr Hofman}, \bibinfo{person}{Marta
  Juzepczuk}, \bibinfo{person}{Slawomir Lasota}, {and} \bibinfo{person}{Mohnish
  Pattathurajan}.} \bibinfo{year}{2021}\natexlab{}.
\newblock \showarticletitle{Parikh's theorem for infinite alphabets}. In
  \bibinfo{booktitle}{\emph{36th Annual {ACM/IEEE} Symposium on Logic in
  Computer Science, {LICS} 2021, Rome, Italy, June 29 - July 2, 2021}}.
  \bibinfo{publisher}{{IEEE}}, \bibinfo{pages}{1--13}.
\newblock
\urldef\tempurl%
\url{https://doi.org/10.1109/LICS52264.2021.9470626}
\showDOI{\tempurl}


\bibitem[Janku and Turonov{\'{a}}(2019)]%
        {JT19}
\bibfield{author}{\bibinfo{person}{Petr Janku} {and} \bibinfo{person}{Lenka
  Turonov{\'{a}}}.} \bibinfo{year}{2019}\natexlab{}.
\newblock \showarticletitle{Solving String Constraints with Approximate Parikh
  Image}. In \bibinfo{booktitle}{\emph{Computer Aided Systems Theory -
  {EUROCAST} 2019 - 17th International Conference, Las Palmas de Gran Canaria,
  Spain, February 17-22, 2019, Revised Selected Papers, Part {I}}}
  \emph{(\bibinfo{series}{Lecture Notes in Computer Science},
  Vol.~\bibinfo{volume}{12013})}, \bibfield{editor}{\bibinfo{person}{Roberto
  Moreno{-}D{\'{\i}}az}, \bibinfo{person}{Franz Pichler}, {and}
  \bibinfo{person}{Alexis Quesada{-}Arencibia}} (Eds.).
  \bibinfo{publisher}{Springer}, \bibinfo{pages}{491--498}.
\newblock
\urldef\tempurl%
\url{https://doi.org/10.1007/978-3-030-45093-9\_59}
\showDOI{\tempurl}


\bibitem[Je\.{z} et~al\mbox{.}(2023)]%
        {JLMR23}
\bibfield{author}{\bibinfo{person}{Artur Je\.{z}}, \bibinfo{person}{Anthony~W.
  Lin}, \bibinfo{person}{Oliver Markgraf}, {and} \bibinfo{person}{Philipp
  R{\"{u}}mmer}.} \bibinfo{year}{2023}\natexlab{}.
\newblock \showarticletitle{Decision Procedures for Sequence Theories}. In
  \bibinfo{booktitle}{\emph{Computer Aided Verification - 35th International
  Conference, {CAV} 2023, Paris, France, July 17-22, 2023, Proceedings, Part
  {II}}} \emph{(\bibinfo{series}{Lecture Notes in Computer Science},
  Vol.~\bibinfo{volume}{13965})}, \bibfield{editor}{\bibinfo{person}{Constantin
  Enea} {and} \bibinfo{person}{Akash Lal}} (Eds.).
  \bibinfo{publisher}{Springer}, \bibinfo{pages}{18--40}.
\newblock
\urldef\tempurl%
\url{https://doi.org/10.1007/978-3-031-37703-7\_2}
\showDOI{\tempurl}


\bibitem[Kopczynski and To(2010)]%
        {KT10}
\bibfield{author}{\bibinfo{person}{Eryk Kopczynski} {and}
  \bibinfo{person}{Anthony~Widjaja To}.} \bibinfo{year}{2010}\natexlab{}.
\newblock \showarticletitle{Parikh Images of Grammars: Complexity and
  Applications}. In \bibinfo{booktitle}{\emph{Proceedings of the 25th Annual
  {IEEE} Symposium on Logic in Computer Science, {LICS} 2010, 11-14 July 2010,
  Edinburgh, United Kingdom}}. \bibinfo{pages}{80--89}.
\newblock
\urldef\tempurl%
\url{https://doi.org/10.1109/LICS.2010.21}
\showDOI{\tempurl}


\bibitem[Kozen(1997)]%
        {kozen-book}
\bibfield{author}{\bibinfo{person}{Dexter~C. Kozen}.}
  \bibinfo{year}{1997}\natexlab{}.
\newblock \bibinfo{booktitle}{\emph{Automata and Computability}}.
\newblock \bibinfo{publisher}{Springer}.
\newblock


\bibitem[Kroening and Strichman(2008)]%
        {KS08}
\bibfield{author}{\bibinfo{person}{Daniel Kroening} {and} \bibinfo{person}{Ofer
  Strichman}.} \bibinfo{year}{2008}\natexlab{}.
\newblock \bibinfo{booktitle}{\emph{Decision Procedures}}.
\newblock \bibinfo{publisher}{Springer}.
\newblock
\showISBNx{978-3-540-74104-6}
\urldef\tempurl%
\url{https://doi.org/10.1007/978-3-540-74105-3}
\showDOI{\tempurl}


\bibitem[Laroussinie et~al\mbox{.}(2010)]%
        {counting-LTL}
\bibfield{author}{\bibinfo{person}{Fran{\c{c}}ois Laroussinie},
  \bibinfo{person}{Antoine Meyer}, {and} \bibinfo{person}{Eudes Petonnet}.}
  \bibinfo{year}{2010}\natexlab{}.
\newblock \showarticletitle{Counting {LTL}}. In
  \bibinfo{booktitle}{\emph{{TIME} 2010 - 17th International Symposium on
  Temporal Representation and Reasoning, Paris, France, 6-8 September 2010}},
  \bibfield{editor}{\bibinfo{person}{Nicolas Markey} {and} \bibinfo{person}{Jef
  Wijsen}} (Eds.). \bibinfo{publisher}{{IEEE} Computer Society},
  \bibinfo{pages}{51--58}.
\newblock
\urldef\tempurl%
\url{https://doi.org/10.1109/TIME.2010.20}
\showDOI{\tempurl}


\bibitem[Laroussinie et~al\mbox{.}(2012)]%
        {counting-CTL}
\bibfield{author}{\bibinfo{person}{Fran{\c{c}}ois Laroussinie},
  \bibinfo{person}{Antoine Meyer}, {and} \bibinfo{person}{Eudes Petonnet}.}
  \bibinfo{year}{2012}\natexlab{}.
\newblock \showarticletitle{Counting {CTL}}.
\newblock \bibinfo{journal}{\emph{Log. Methods Comput. Sci.}}
  \bibinfo{volume}{9}, \bibinfo{number}{1} (\bibinfo{year}{2012}),
  \bibinfo{pages}{1--34}.
\newblock
\urldef\tempurl%
\url{https://doi.org/10.2168/LMCS-9(1:3)2013}
\showDOI{\tempurl}


\bibitem[Lin and Barcel{\'{o}}(2016)]%
        {LB16}
\bibfield{author}{\bibinfo{person}{Anthony~Widjaja Lin} {and}
  \bibinfo{person}{Pablo Barcel{\'{o}}}.} \bibinfo{year}{2016}\natexlab{}.
\newblock \showarticletitle{String solving with word equations and transducers:
  towards a logic for analysing mutation {XSS}}. In
  \bibinfo{booktitle}{\emph{Proceedings of the 43rd Annual {ACM}
  {SIGPLAN-SIGACT} Symposium on Principles of Programming Languages, {POPL}
  2016, St. Petersburg, FL, USA, January 20 - 22, 2016}}.
  \bibinfo{pages}{123--136}.
\newblock
\urldef\tempurl%
\url{https://doi.org/10.1145/2837614.2837641}
\showDOI{\tempurl}


\bibitem[Long et~al\mbox{.}(2012)]%
        {lcegar}
\bibfield{author}{\bibinfo{person}{Zhenyue Long}, \bibinfo{person}{Georgel
  Calin}, \bibinfo{person}{Rupak Majumdar}, {and} \bibinfo{person}{Roland
  Meyer}.} \bibinfo{year}{2012}\natexlab{}.
\newblock \showarticletitle{Language-Theoretic Abstraction Refinement}. In
  \bibinfo{booktitle}{\emph{Fundamental Approaches to Software Engineering -
  15th International Conference, {FASE} 2012, Held as Part of the European
  Joint Conferences on Theory and Practice of Software, {ETAPS} 2012, Tallinn,
  Estonia, March 24 - April 1, 2012. Proceedings}}
  \emph{(\bibinfo{series}{Lecture Notes in Computer Science},
  Vol.~\bibinfo{volume}{7212})}, \bibfield{editor}{\bibinfo{person}{Juan
  de~Lara} {and} \bibinfo{person}{Andrea Zisman}} (Eds.).
  \bibinfo{publisher}{Springer}, \bibinfo{pages}{362--376}.
\newblock
\urldef\tempurl%
\url{https://doi.org/10.1007/978-3-642-28872-2\_25}
\showDOI{\tempurl}


\bibitem[Loring et~al\mbox{.}(2019)]%
        {expose}
\bibfield{author}{\bibinfo{person}{Blake Loring}, \bibinfo{person}{Duncan
  Mitchell}, {and} \bibinfo{person}{Johannes Kinder}.}
  \bibinfo{year}{2019}\natexlab{}.
\newblock \showarticletitle{Sound regular expression semantics for dynamic
  symbolic execution of JavaScript}. In \bibinfo{booktitle}{\emph{Proceedings
  of the 40th {ACM} {SIGPLAN} Conference on Programming Language Design and
  Implementation, {PLDI} 2019, Phoenix, AZ, USA, June 22-26, 2019}},
  \bibfield{editor}{\bibinfo{person}{Kathryn~S. McKinley} {and}
  \bibinfo{person}{Kathleen Fisher}} (Eds.). \bibinfo{publisher}{{ACM}},
  \bibinfo{pages}{425--438}.
\newblock
\urldef\tempurl%
\url{https://doi.org/10.1145/3314221.3314645}
\showDOI{\tempurl}


\bibitem[Minamide(2005)]%
        {Min05}
\bibfield{author}{\bibinfo{person}{Yasuhiko Minamide}.}
  \bibinfo{year}{2005}\natexlab{}.
\newblock \showarticletitle{Static approximation of dynamically generated Web
  pages}. In \bibinfo{booktitle}{\emph{Proceedings of the 14th international
  conference on World Wide Web, {WWW} 2005, Chiba, Japan, May 10-14, 2005}},
  \bibfield{editor}{\bibinfo{person}{Allan Ellis} {and}
  \bibinfo{person}{Tatsuya Hagino}} (Eds.). \bibinfo{publisher}{{ACM}},
  \bibinfo{pages}{432--441}.
\newblock
\urldef\tempurl%
\url{https://doi.org/10.1145/1060745.1060809}
\showDOI{\tempurl}


\bibitem[Moerman et~al\mbox{.}(2017)]%
        {nominal-automata}
\bibfield{author}{\bibinfo{person}{Joshua Moerman}, \bibinfo{person}{Matteo
  Sammartino}, \bibinfo{person}{Alexandra Silva}, \bibinfo{person}{Bartek
  Klin}, {and} \bibinfo{person}{Michal Szynwelski}.}
  \bibinfo{year}{2017}\natexlab{}.
\newblock \showarticletitle{Learning nominal automata}. In
  \bibinfo{booktitle}{\emph{Proceedings of the 44th {ACM} {SIGPLAN} Symposium
  on Principles of Programming Languages, {POPL} 2017, Paris, France, January
  18-20, 2017}}, \bibfield{editor}{\bibinfo{person}{Giuseppe Castagna} {and}
  \bibinfo{person}{Andrew~D. Gordon}} (Eds.). \bibinfo{publisher}{{ACM}},
  \bibinfo{pages}{613--625}.
\newblock
\urldef\tempurl%
\url{https://doi.org/10.1145/3009837.3009879}
\showDOI{\tempurl}


\bibitem[Moseley et~al\mbox{.}(2023)]%
        {pldi23}
\bibfield{author}{\bibinfo{person}{Dan Moseley}, \bibinfo{person}{Mario
  Nishio}, \bibinfo{person}{Jose~Perez Rodriguez}, \bibinfo{person}{Olli
  Saarikivi}, \bibinfo{person}{Stephen Toub}, \bibinfo{person}{Margus Veanes},
  \bibinfo{person}{Tiki Wan}, {and} \bibinfo{person}{Eric Xu}.}
  \bibinfo{year}{2023}\natexlab{}.
\newblock \showarticletitle{Derivative Based Nonbacktracking Real-World Regex
  Matching with Backtracking Semantics}.
\newblock \bibinfo{journal}{\emph{Proc. {ACM} Program. Lang.}}
  \bibinfo{volume}{7}, \bibinfo{number}{{PLDI}} (\bibinfo{year}{2023}),
  \bibinfo{pages}{1026--1049}.
\newblock
\urldef\tempurl%
\url{https://doi.org/10.1145/3591262}
\showDOI{\tempurl}


\bibitem[Nelson and Oppen(1979)]%
        {NO79}
\bibfield{author}{\bibinfo{person}{Greg Nelson} {and} \bibinfo{person}{Derek~C.
  Oppen}.} \bibinfo{year}{1979}\natexlab{}.
\newblock \showarticletitle{Simplification by Cooperating Decision Procedures}.
\newblock \bibinfo{journal}{\emph{ACM Trans. Program. Lang. Syst.}}
  \bibinfo{volume}{1}, \bibinfo{number}{2} (\bibinfo{date}{oct}
  \bibinfo{year}{1979}), \bibinfo{pages}{245–257}.
\newblock
\showISSN{0164-0925}
\urldef\tempurl%
\url{https://doi.org/10.1145/357073.357079}
\showDOI{\tempurl}


\bibitem[Parikh(1966)]%
        {parikh}
\bibfield{author}{\bibinfo{person}{Rohit Parikh}.}
  \bibinfo{year}{1966}\natexlab{}.
\newblock \showarticletitle{On Context-Free Languages}.
\newblock \bibinfo{journal}{\emph{J. ACM}} \bibinfo{volume}{13},
  \bibinfo{number}{4} (\bibinfo{year}{1966}), \bibinfo{pages}{570--581}.
\newblock
\urldef\tempurl%
\url{https://doi.org/10.1145/321356.321364}
\showDOI{\tempurl}


\bibitem[Pham et~al\mbox{.}(2018)]%
        {PLPSQ18}
\bibfield{author}{\bibinfo{person}{Long~H. Pham}, \bibinfo{person}{Quang~Loc
  Le}, \bibinfo{person}{Quoc{-}Sang Phan}, \bibinfo{person}{Jun Sun}, {and}
  \bibinfo{person}{Shengchao Qin}.} \bibinfo{year}{2018}\natexlab{}.
\newblock \showarticletitle{Testing heap-based programs with Java StarFinder}.
  In \bibinfo{booktitle}{\emph{Proceedings of the 40th International Conference
  on Software Engineering: Companion Proceeedings, {ICSE} 2018, Gothenburg,
  Sweden, May 27 - June 03, 2018}}, \bibfield{editor}{\bibinfo{person}{Michel
  Chaudron}, \bibinfo{person}{Ivica Crnkovic}, \bibinfo{person}{Marsha
  Chechik}, {and} \bibinfo{person}{Mark Harman}} (Eds.).
  \bibinfo{publisher}{{ACM}}, \bibinfo{pages}{268--269}.
\newblock
\urldef\tempurl%
\url{https://doi.org/10.1145/3183440.3194964}
\showDOI{\tempurl}


\bibitem[Raya(2023)]%
        {R23}
\bibfield{author}{\bibinfo{person}{Rodrigo Raya}.}
  \bibinfo{year}{2023}\natexlab{}.
\newblock \bibinfo{booktitle}{\emph{The Complexity of Checking Non-Emptiness in
  Symbolic Tree Automata}}.
\newblock
\urldef\tempurl%
\url{https://infoscience.epfl.ch/record/304426}
\showURL{%
Retrieved October 25, 2023 from \tempurl}


\bibitem[Saxena et~al\mbox{.}(2010)]%
        {Saxena-JS}
\bibfield{author}{\bibinfo{person}{Prateek Saxena}, \bibinfo{person}{Devdatta
  Akhawe}, \bibinfo{person}{Steve Hanna}, \bibinfo{person}{Feng Mao},
  \bibinfo{person}{Stephen McCamant}, {and} \bibinfo{person}{Dawn Song}.}
  \bibinfo{year}{2010}\natexlab{}.
\newblock \showarticletitle{A Symbolic Execution Framework for JavaScript}. In
  \bibinfo{booktitle}{\emph{31st {IEEE} Symposium on Security and Privacy,
  S{\&}P 2010, 16-19 May 2010, Berleley/Oakland, California, {USA}}}.
  \bibinfo{publisher}{{IEEE} Computer Society}, \bibinfo{pages}{513--528}.
\newblock
\urldef\tempurl%
\url{https://doi.org/10.1109/SP.2010.38}
\showDOI{\tempurl}


\bibitem[Sipser(2013)]%
        {sipser13}
\bibfield{author}{\bibinfo{person}{Michael Sipser}.}
  \bibinfo{year}{2013}\natexlab{}.
\newblock \bibinfo{booktitle}{\emph{Introduction to the Theory of Computation}
  (\bibinfo{edition}{third} ed.)}.
\newblock \bibinfo{publisher}{Course Technology}, \bibinfo{address}{Boston,
  MA}.
\newblock
\showISBNx{113318779X}


\bibitem[SMTCOMP2022(2022)]%
        {SMTCOMP2022}
SMTCOMP2022 \bibinfo{year}{2022}\natexlab{}.
\newblock \bibinfo{booktitle}{\emph{The International Satisfiability Modulo
  Theories (SMT) Competition 2022}}.
\newblock
\urldef\tempurl%
\url{https://smt-comp.github.io/2022/}
\showURL{%
\tempurl}
\newblock
\shownote{Accessed: 7 July 2023}.


\bibitem[Stanford et~al\mbox{.}(2021)]%
        {SVB21}
\bibfield{author}{\bibinfo{person}{Caleb Stanford}, \bibinfo{person}{Margus
  Veanes}, {and} \bibinfo{person}{Nikolaj~S. Bj{\o}rner}.}
  \bibinfo{year}{2021}\natexlab{}.
\newblock \showarticletitle{Symbolic Boolean derivatives for efficiently
  solving extended regular expression constraints}. In
  \bibinfo{booktitle}{\emph{{PLDI} '21: 42nd {ACM} {SIGPLAN} International
  Conference on Programming Language Design and Implementation, Virtual Event,
  Canada, June 20-25, 2021}}, \bibfield{editor}{\bibinfo{person}{Stephen~N.
  Freund} {and} \bibinfo{person}{Eran Yahav}} (Eds.).
  \bibinfo{publisher}{{ACM}}, \bibinfo{pages}{620--635}.
\newblock
\urldef\tempurl%
\url{https://doi.org/10.1145/3453483.3454066}
\showDOI{\tempurl}


\bibitem[SymParikh Artifact(2023)]%
        {ZenodoArtifact}
SymParikh Artifact \bibinfo{year}{2023}\natexlab{}.
\newblock \bibinfo{booktitle}{\emph{Parikh's Theorem Made Symbolic: Artifact}}.
\newblock
\urldef\tempurl%
\url{https://doi.org/10.5281/zenodo.10125861}
\showDOI{\tempurl}
\newblock
\shownote{Accessed: 23 October 2023}.


\bibitem[SymParikh Artifact(2024)]%
        {ZenodoArtifactv3}
SymParikh Artifact \bibinfo{year}{2024}\natexlab{}.
\newblock \bibinfo{booktitle}{\emph{Parikh's Theorem Made Symbolic: Artifact}}.
\newblock
\urldef\tempurl%
\url{https://doi.org/10.5281/zenodo.11355743}
\showDOI{\tempurl}
\newblock
\shownote{Accessed: 31 July 2024}.


\bibitem[SymParikh Repository(2023)]%
        {SymParikhGit}
SymParikh Repository \bibinfo{year}{2023}\natexlab{}.
\newblock \bibinfo{booktitle}{\emph{Parikh's Theorem Made Symbolic}}.
\newblock
\urldef\tempurl%
\url{https://gitlab.cim.rhul.ac.uk/uxac009/symparikh}
\showURL{%
\tempurl}
\newblock
\shownote{Accessed: 23 October 2023}.


\bibitem[To(2009)]%
        {To09}
\bibfield{author}{\bibinfo{person}{Anthony~Widjaja To}.}
  \bibinfo{year}{2009}\natexlab{}.
\newblock \showarticletitle{Model Checking {FO(R)} over One-Counter Processes
  and beyond}. In \bibinfo{booktitle}{\emph{Computer Science Logic, 23rd
  international Workshop, {CSL} 2009, 18th Annual Conference of the EACSL,
  Coimbra, Portugal, September 7-11, 2009. Proceedings}}
  \emph{(\bibinfo{series}{Lecture Notes in Computer Science},
  Vol.~\bibinfo{volume}{5771})}, \bibfield{editor}{\bibinfo{person}{Erich
  Gr{\"{a}}del} {and} \bibinfo{person}{Reinhard Kahle}} (Eds.).
  \bibinfo{publisher}{Springer}, \bibinfo{pages}{485--499}.
\newblock
\urldef\tempurl%
\url{https://doi.org/10.1007/978-3-642-04027-6\_35}
\showDOI{\tempurl}


\bibitem[UUVerifiers(2023)]%
        {OSTRICH}
UUVerifiers \bibinfo{year}{2023}\natexlab{}.
\newblock \bibinfo{booktitle}{\emph{OSTRICH}}.
\newblock
\urldef\tempurl%
\url{https://github.com/uuverifiers/ostrich}
\showURL{%
\tempurl}
\newblock
\shownote{Accessed: 12 June 2023}.


\bibitem[Veanes et~al\mbox{.}(2012)]%
        {veanes12}
\bibfield{author}{\bibinfo{person}{Margus Veanes}, \bibinfo{person}{Pieter
  Hooimeijer}, \bibinfo{person}{Benjamin Livshits}, \bibinfo{person}{David
  Molnar}, {and} \bibinfo{person}{Nikolaj Bjorner}.}
  \bibinfo{year}{2012}\natexlab{}.
\newblock \showarticletitle{Symbolic Finite State Transducers: Algorithms and
  Applications}.
\newblock \bibinfo{journal}{\emph{SIGPLAN Not.}} \bibinfo{volume}{47},
  \bibinfo{number}{1} (\bibinfo{date}{jan} \bibinfo{year}{2012}),
  \bibinfo{pages}{137–150}.
\newblock
\showISSN{0362-1340}
\urldef\tempurl%
\url{https://doi.org/10.1145/2103621.2103674}
\showDOI{\tempurl}


\bibitem[Verma et~al\mbox{.}(2005)]%
        {VermaSS05}
\bibfield{author}{\bibinfo{person}{Kumar~Neeraj Verma}, \bibinfo{person}{Helmut
  Seidl}, {and} \bibinfo{person}{Thomas Schwentick}.}
  \bibinfo{year}{2005}\natexlab{}.
\newblock \showarticletitle{On the Complexity of Equational Horn Clauses}. In
  \bibinfo{booktitle}{\emph{Automated Deduction - CADE-20, 20th International
  Conference on Automated Deduction, Tallinn, Estonia, July 22-27, 2005,
  Proceedings}} \emph{(\bibinfo{series}{Lecture Notes in Computer Science},
  Vol.~\bibinfo{volume}{3632})}, \bibfield{editor}{\bibinfo{person}{Robert
  Nieuwenhuis}} (Ed.). \bibinfo{publisher}{Springer},
  \bibinfo{pages}{337--352}.
\newblock
\urldef\tempurl%
\url{https://doi.org/10.1007/11532231\_25}
\showDOI{\tempurl}


\end{thebibliography}

\end{document}